\documentclass[reprint,longbibliography,a4paper,twocolumn]{revtex4-2}
\pdfoutput=1

\usepackage{amsmath,amssymb,amsthm}
\usepackage{graphics}
\usepackage{enumitem}
\usepackage{mathtools}
\usepackage{array}
\usepackage[hidelinks]{hyperref}

\setlist[description]{leftmargin=0em,listparindent=\parindent,parsep=0pt,itemsep=8pt plus 4pt minus 2pt}

\theoremstyle{plain}
\newtheorem*{theorem}{Theorem}

\DeclareMathOperator{\tr}{tr}
\DeclareMathOperator{\rank}{rank}
\DeclareMathOperator{\dimension}{dim}
\DeclareMathOperator*{\esssup}{ess\,sup}
\DeclareMathOperator*{\essinf}{ess\,inf}

\newcommand{\adja}[1]{{#1}}
\newcommand{\adjb}[1]{\breve{#1}}

\newcolumntype{C}[1]{>{\centering\arraybackslash}m{#1}}
\newcolumntype{L}[1]{>{\raggedright\arraybackslash}m{#1}}
\newcolumntype{R}[1]{>{\raggedleft\arraybackslash}m{#1}}
\newcolumntype{J}[1]{>{\arraybackslash}m{#1}}

\begin{document}

\title{Model Risk Analysis for Entropic Hedging Strategies}
\author{Paul McCloud}
\affiliation{McCloud Research}
\date{2 November 2026}

\begin{abstract}
Market incompleteness causes residual profit and loss after hedging that is accounted for with valuation adjustments and bid-offer spreads acting as model risk buffers for the derivative price. Entropic risk optimisation is a general framework for modelling these adjustments. When the prices are quadratic Gaussian, the coupled equations for price and hedge ratios are solved exactly, with a decomposition of the entropic risk into convexity, dimensionality, funding and strategy contributions that are attributed in the explanation of realised P\&L.

The results are applicable when the underlying prices follow Gaussian processes, such as fractional Ornstein-Uhlenbeck processes. The article concludes with analysis of the continuous hedging limit for these processes, and a discussion of their application in data-driven regressions.
\end{abstract}

\maketitle

When the underlying market is incomplete for the derivative risks, residual profit and loss generated by the unhedged factors must be mitigated with a bid-offer spread around the mid valuation acting as a model risk buffer. The risk decomposes into four contributions.
\begin{description}
\item[Convexity]The convexity between the derivative and underlying prices cannot be cancelled by linear replication. This contribution reduces as the hedge frequency is increased \cite{Boyle1980,Bertsimas2000}, at the expense of rebalancing costs \cite{Leland1985,Hodges1989}, but a baseline risk persists due to the possibility of market dislocations.
\item[Dimensionality]Risk factors that are material for the derivative but inconsequential for the underlying market cannot be hedged. These additional dimensions persist as residual risks for the derivative.
\item[Funding]Even if it is possible to offset derivative risks in the underlying market, funding the hedged position reintroduces risks when the derivative premium does not exactly match its hedge cost. Making the unrealistic assumption of risk-free funding masks these additional risks and the valuation adjustments they incur \cite{Piterbarg2010,Burgard2011,McCloud2013a,McCloud2013b}.
\item[Strategy]If the investor intentionally funds the derivative suboptimally, the unhedged risk factors contribute to the P\&L. This is the only contribution that the investor exercises direct control over, and is eliminated when the hedging strategy is optimised.
\end{description}
This article contributes the exact price and hedge from entropic risk optimisation \cite{McCloud2025a,McCloud2025b} for quadratic Gaussian prices \cite{Ahn2002,Leippold2002,McCloud2010,McCloud2012}, a decomposition of the entropic risk into four penalties that are matched in each case by an attribution in realised P\&L, and analysis of their scale behaviour in the continuous hedging limit for Gaussian processes. Exact results are derived for discrete fractional Ornstein-Uhlenbeck processes of the kind considered in \cite{Cheridito2003,Gatheral2018}, and the expressions provide second-order approximations for more general models in a form that could be used in data-driven regressions \cite{Buehler2019,Buehler2021,Murray2022}.

Entropic risk optimisation fits within a long tradition of entropic methods in mathematical finance applied to incomplete markets. The ambiguity in the choice of equivalent risk-neutral measure \cite{Harrison1981} can be resolved by minimising entropic distance from the economic measure \cite{Stutzer1996,Frittelli2000}, a solution that also emerges from the optimisation of exponential utility \cite{Rouge2000,Delbaen2002}. In the context of convex risk metrics \cite{Follmer2002,Frittelli2002}, entropic risk becomes a target for hedge optimisation \cite{Barrieu2005}. Combined with indifference pricing \cite{Musiela2004}, the theory generates price and hedge ratios with bid-offer corrections that mitigate the model risk from residual unhedged P\&L, with penalties that reduce to variances in the small notional limit, similar to the case of quadratic hedging \cite{Follmer1991,Duffie1991,Schweizer1995}.

A key advantage of entropic risk is its simplicity and tractability when applied to normal distributions. With quadratic Gaussian prices, the case explored in depth in this article, entropic risk is evaluated on quadratic forms in normal variables \cite{Imhof1961,Das2025} with results that require only elementary matrix algebra. All the contributions to risk that impact pricing and hedging are resolved with a clean decomposition that is directly attributable in realised P\&L. The decomposition aids trading intuition and the allocation of P\&L between originating, hedging and funding desks.

\section*{Quadratic Gaussian theorem}

\begin{figure*}[!t]
\centering
\includegraphics[width=0.4\textwidth]{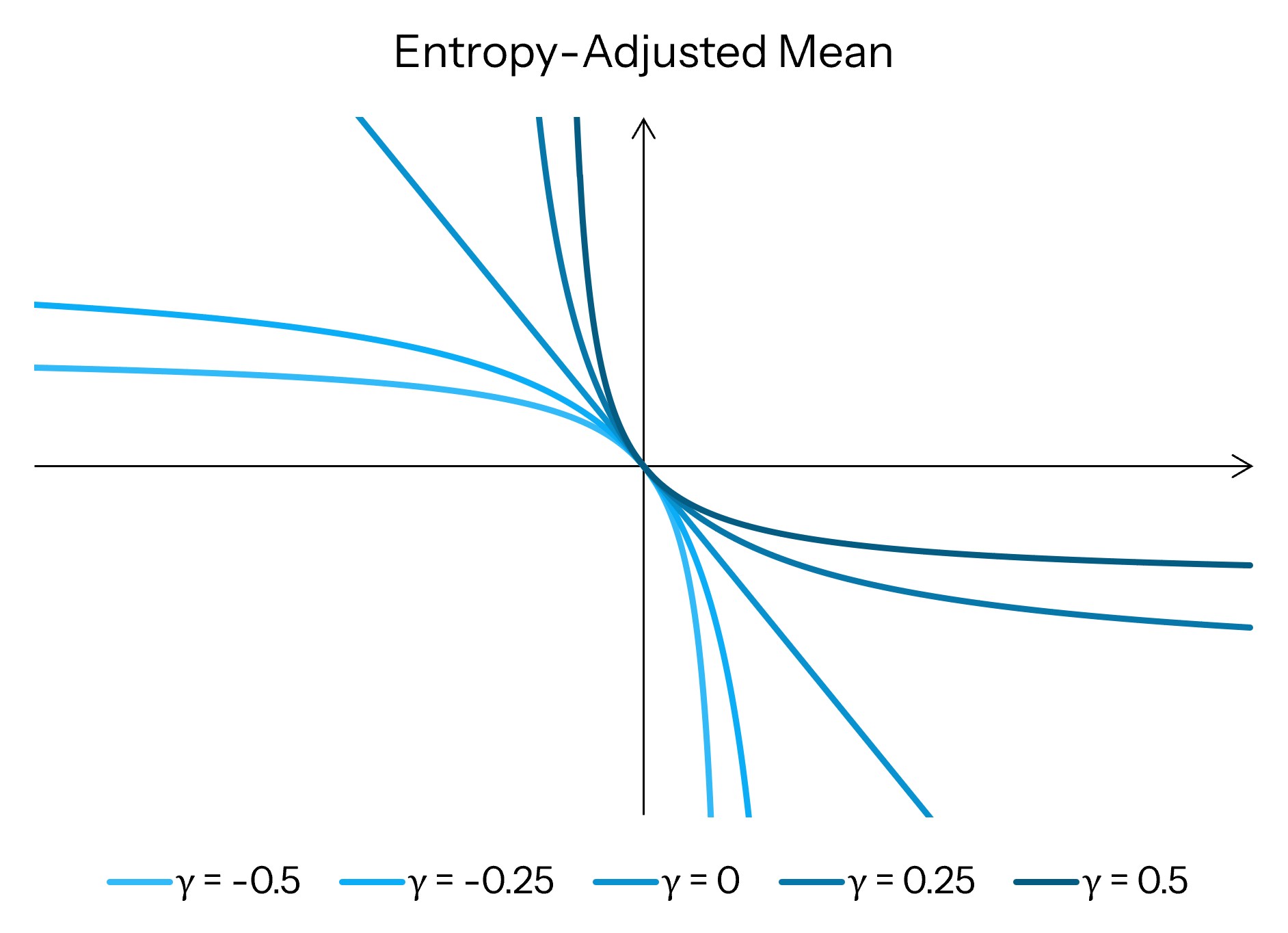}
\caption{Entropy-adjusted mean as a function of risk aversion $\alpha$ for one-factor quadratic Gaussian variables with a range of quadratic coefficients $\gamma$. The variables are standardised by linear transformation to have zero mean and unit standard deviation, and the central line is for the standard normal variable with $\gamma=0$. For $\gamma\ne0$, the curve is defined on a half line containing $\alpha=0$ which is bounded below when $\gamma>0$ and bounded above when $\gamma<0$.}
\end{figure*}

Entropic risk optimisation uses the entropy-adjusted mean defined by:
\begin{equation}
\mathbb{E}[\alpha][X]:=-\frac{1}{\alpha}\log\mathbb{E}\exp[-\alpha X]
\end{equation}
for a random variable $X$, where $\mathbb{E}$ is the expectation operator and $\alpha\ne0$ is the risk aversion. The definition is extended continuously to the case $\alpha=0$ with the definition $\mathbb{E}[0][X]:=\mathbb{E}[X]$. It is decreasing as a function of $\alpha$ and spans the spectrum of $X$:
\begin{align}
\esssup{X}=\qquad\qquad\qquad& \\
\lim_{\alpha\to-\infty}\mathbb{E}[\alpha][X]>\mathbb{E}[\alpha]&[X]>\lim_{\alpha\to\infty}\mathbb{E}[\alpha][X] \notag \\
&\qquad\qquad\qquad=\essinf{X} \notag
\end{align}
With this property, the entropy-adjusted mean is an effective measure of risk-adjusted expectation that can be tuned to a target risk aversion. Entropic risk, defined as the negative of the entropy-adjusted mean, measures the downside ($\alpha>0$) or upside ($\alpha<0$) risk of the variable.

Applied to quadratic Gaussian variables, the entropy-adjusted mean can be evaluated in closed form, which reduces later development to matrix algebra. The essential result is the following.
\begin{theorem}[Quadratic Gaussian]
Let $Z$ be a vector of independent standard normal variables in the measure $\mathbb{E}$, and define the quadratic Gaussian variable:
\begin{equation}
X:=\eta+\epsilon\cdot Z+\frac{1}{2}Z\cdot\gamma Z
\end{equation}
with constant coefficient scalar $\eta$, linear coefficient vector $\epsilon$ and quadratic coefficient matrix $\gamma$ satisfying the symmetry property $\gamma^\dagger=\gamma$.

Using the spectral decomposition of the quadratic coefficient, the variable $X$ is identified as the weighted sum of independent standard normal and noncentral chi-squared variables. The standard normal contribution originates from the zero eigenvalue, and the noncentral chi-squared contributions originate from the non-zero eigenvalues. Since the zero eigenvalue needs to be handled separately, this is made explicit in the spectral decomposition:
\begin{align}
1&=1_0+\sum\nolimits_{v\ne0}1_v \\
\gamma&=\sum\nolimits_{v\ne0}v1_v \notag
\end{align}
in terms of the orthogonal projections $1_v$ onto the ei\-gen\-spaces for the distinct eigenvalues $v$. The variable $X$ is written as:
\begin{equation}
X=\eta-\sum\nolimits_{v\ne0}\frac{\|1_v\epsilon\|^2}{2v}+\|1_0\epsilon\|W+\frac{1}{2}\sum\nolimits_{v\ne0}v C_v
\end{equation}
This expression decomposes the variable into a linear combination of independent variables, including the standard normal variable:
\begin{equation}
W:=\frac{1_0\epsilon\cdot1_0Z}{\|1_0\epsilon\|}
\end{equation}
when $\|1_0\epsilon\|\ne0$, and the noncentral chi-squared variables:
\begin{equation}
C_v:=\|1_v(Z+\epsilon/v)\|^2
\end{equation}
with noncentrality parameter $\|1_v\epsilon/v\|^2$ and degrees of freedom $\rank[1_v]$ for each non-zero eigenvalue $v$.

When the eigenvalues of the quadratic coefficient are greater than minus one, the variable $X$ satisfies the Girsanov relationship:
\begin{align}
\mathbb{E}[\exp[-X]f[Z]]={}&\frac{\exp[-\eta+\frac{1}{2}\epsilon\cdot(1+\gamma)^{-1}\epsilon]}{\sqrt{\det[1+\gamma]}}\times \\
&\mathbb{E}[f[(1+\gamma)^{-1/2}Z-(1+\gamma)^{-1}\epsilon]] \notag
\end{align}
obtained by completing the square in the exponent of the Gaussian integrand. Introduce the risk aversion $\alpha$ and apply the Girsanov relationship for the variable $\alpha X$ to obtain the entropy-adjusted mean:
\begin{align}
\mathbb{E}[\alpha]&[X]=\eta+\frac{1}{2}\tr[\gamma] \\
&-\frac{1}{2}\alpha\left(\epsilon\cdot(1+\alpha\gamma)^{-1}\epsilon+\tr\!\left[\frac{\alpha\gamma-\log[1+\alpha\gamma]}{\alpha^2}\right]\right) \notag
\end{align}
making use of the spectral decomposition to define the function of $\gamma$. The first two terms sum to the mean of $X$, and the two terms in the brackets are both non-negative and sum to the variance of $X$ in the limit $\alpha\to0$. The Girsanov relationship also generates the chain rule for the entropy-adjusted mean:
\begin{align}
\mathbb{E}&[\alpha][X+f[Z]]= \\
&\mathbb{E}[\alpha][X]+\mathbb{E}[\alpha][f[(1+\alpha\gamma)^{-1/2}Z-\alpha(1+\alpha\gamma)^{-1}\epsilon]] \notag
\end{align}
These expressions are valid for $\alpha$ within an open interval containing zero. Let $\gamma_-$ and $\gamma_+$ be the lowest and highest eigenvalues of $\gamma$. If $\gamma_-<0$, the interval has upper bound $\alpha=-1/\gamma_->0$, otherwise it is unbounded above. If $\gamma_+>0$, the interval has lower bound $\alpha=-1/\gamma_+<0$, otherwise it is unbounded below.
\end{theorem}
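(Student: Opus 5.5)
The proof is direct Gaussian computation, in four steps: the spectral decomposition, a completed-square integral, specialisation to $\alpha X$, and the chain rule.

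\textbf{Spectral decomposition.} On each eigenspace $1_v$ with $v\ne0$, complete the square:
\[
1_v\epsilon\cdot Z+\tfrac12 v\|1_vZ\|^2=\tfrac12 v\|1_v(Z+\epsilon/v)\|^2-\|1_v\epsilon\|^2/(2v).
\]
On the null space, $1_0\epsilon\cdot Z=1_0\epsilon\cdot1_0Z$ is normal with standard deviation $\|1_0\epsilon\|$. The projections are orthogonal, so the components $1_vZ$ for distinct $v$ (including $v=0$) are independent standard Gaussian vectors on their subspaces. Hence $W$ and the $C_v$ are independent. Each $C_v$ is noncentral chi-squared with $\rank[1_v]$ degrees of freedom and noncentrality $\|1_v\epsilon/v\|^2$.

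\textbf{Girsanov relationship.} Write $\mathbb{E}[\exp[-X]f[Z]]$ as an integral against $(2\pi)^{-n/2}\exp[-\frac12 z\cdot z]$. The exponent is
\[
-\eta-\epsilon\cdot z-\tfrac12 z\cdot(1+\gamma)z.
\]
When $1+\gamma>0$, which is exactly the condition that all eigenvalues exceed $-1$, complete the square around $z_0=-(1+\gamma)^{-1}\epsilon$. This gives the constant $-\eta+\frac12\epsilon\cdot(1+\gamma)^{-1}\epsilon$ and the Gaussian $\exp[-\frac12(z-z_0)\cdot(1+\gamma)(z-z_0)]$. Substituting $z=(1+\gamma)^{-1/2}z'+z_0$ contributes the Jacobian $\det[1+\gamma]^{-1/2}$ and returns a standard Gaussian expectation of $f$ at the shifted and rescaled argument.

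\textbf{Entropy-adjusted mean.} Apply the Girsanov relationship to $\alpha X$, whose coefficients are $(\alpha\eta,\alpha\epsilon,\alpha\gamma)$, with $f=1$. This gives
\[
\mathbb{E}[\alpha][X]=\eta-\tfrac{\alpha}{2}\epsilon\cdot(1+\alpha\gamma)^{-1}\epsilon+\tfrac{1}{2\alpha}\log\det[1+\alpha\gamma].
\]
Using $\log\det=\tr\log$, rewrite the last term as
\[
\tfrac12\tr[\gamma]-\tfrac{\alpha}{2}\tr\big[(\alpha\gamma-\log[1+\alpha\gamma])/\alpha^2\big],
\]
which is the stated form. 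Both bracketed terms are non-negative:
\begin{itemize}
\item $\epsilon\cdot(1+\alpha\gamma)^{-1}\epsilon\ge0$ because $1+\alpha\gamma>0$;
\item $x-\log(1+x)\ge0$ for $x>-1$, applied eigenvalue-wise.
\end{itemize}
As $\alpha\to0$ the brackets tend to $\|\epsilon\|^2+\frac12\tr[\gamma^2]$. Independently, $\mathrm{Var}[X]=\|\epsilon\|^2+\frac12\tr[\gamma^2]$, since the Gaussian third moments vanish and $\mathrm{Var}[\frac12 Z\cdot\gamma Z]=\frac12\tr\gamma^2$, so the limit is the variance. The mean is $\eta+\frac12\tr\gamma$.

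\textbf{Chain rule.} Apply the Girsanov relationship to $\alpha X$ with $\exp[-\alpha f[Z]]$ as the test function. Then take $-\frac1\alpha\log$, which splits the product into $\mathbb{E}[\alpha][X]$ plus $\mathbb{E}[\alpha]$ of $f$ at the shifted argument $(1+\alpha\gamma)^{-1/2}Z-\alpha(1+\alpha\gamma)^{-1}\epsilon$.

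\textbf{Domain of validity.} The condition is $1+\alpha v>0$ for every eigenvalue $v$; the zero eigenvalue imposes no constraint. For $\alpha>0$ this is binding only when $\gamma_-<0$, giving $\alpha<-1/\gamma_-$. For $\alpha<0$ it is binding only when $\gamma_+>0$, giving $\alpha>-1/\gamma_+$. Outside this interval the integral diverges along the offending eigendirection, because the Gaussian tail cannot beat $\exp[\frac12|{\alpha v}|\,z^2]$ with $|\alpha v|\ge1$.

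\textbf{Main obstacle.} The step needing most care is the completing-the-square/Jacobian bookkeeping. Everything else is routine once $1+\alpha\gamma$ is positive definite. Two points need attention:
\begin{itemize}
\item confirming the sign and scaling of the shift $-\alpha(1+\alpha\gamma)^{-1}\epsilon$, which under $\gamma\to\alpha\gamma$ and $\epsilon\to\alpha\epsilon$ is $-(1+\alpha\gamma)^{-1}\alpha\epsilon$;
\item justifying the application to $f$ of unrestricted growth, for example by requiring integrability of $\exp[-\alpha f]$ under the shifted Gaussian.
\end{itemize}
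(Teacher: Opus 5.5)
Your proposal is correct and follows the paper's proof. You complete the square on each eigenspace for the decomposition, complete the square in the Gaussian integrand via the same substitution $z=(1+\gamma)^{-1/2}y-(1+\gamma)^{-1}\epsilon$ with Jacobian $\det[1+\gamma]^{-1/2}$, and then specialise to $\alpha X$ with test function $\exp[-\alpha f[Z]]$. You also correctly supply details the paper leaves implicit: independence of the components, non-negativity of the bracketed terms, the variance limit, and divergence outside the stated interval.
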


\begin{proof}
The decomposition of $X$ into standard normal and noncentral chi-squared variables follows from the spectral decomposition of $\gamma$ as provided. Use this decomposition together with the orthogonality $1_u1_v=\delta_{uv}1_v$ and symmetry $1_v^\dagger=1_v$ of the projections to generate the decomposition of $X$:
\begin{align}
X={}&\eta+1_0\epsilon\cdot 1_0Z \\
&+\sum\nolimits_{v\ne0}(1_v\epsilon\cdot 1_vZ+\frac{1}{2}v1_vZ\cdot 1_vZ) \notag \\
={}&\eta-\sum\nolimits_{v\ne0}\frac{\|1_v\epsilon\|^2}{2v}+\|1_0\epsilon\|\frac{1_0\epsilon\cdot 1_0Z}{\|1_0\epsilon\|} \notag \\
&+\frac{1}{2}\sum\nolimits_{v\ne0}v\|1_v(Z+\epsilon/v)\|^2 \notag
\end{align}

The expectation is given by the ratio of Gaussian integrals:
\begin{align}
\mathbb{E}[\exp[-X]&f[Z]]= \\
&\frac{\int_z\exp[-\eta-\epsilon\cdot z-\frac{1}{2}z\cdot(1+\gamma)z]f[z]\,dz}{\int_z\exp[-\frac{1}{2}z\cdot z]\,dz} \notag
\end{align}
Complete the square in the exponent via the change of variables:
\begin{align}
y&=(1+\gamma)^{1/2}z+(1+\gamma)^{-1/2}\epsilon \\
z&=(1+\gamma)^{-1/2}y-(1+\gamma)^{-1}\epsilon \notag
\end{align}
Use this transformation to generate the expressions:
\begin{align}
dz&=\frac{1}{\sqrt{\det[1+\gamma]}}\,dy \\
-\epsilon\cdot z-\frac{1}{2}z\cdot(1+\gamma)z&=\frac{1}{2}\epsilon\cdot(1+\gamma)^{-1}\epsilon-\frac{1}{2}y\cdot y \notag
\end{align}
These expressions are inserted into the integrals to complete the proof. The results for the entropy-adjusted mean follow directly from the Girsanov relationship by replacing $X$ with $\alpha X$ and $f[Z]$ with $\exp[-\alpha f[Z]]$.
\end{proof}
The Girsanov relationship is the integrated form of the familiar drift adjustment result for equivalent measures generated by Brownian motion. When $X$ is quadratic in $Z$, the measure change can be implemented as a linear transformation to the distribution of the factors. Crucially, this transformation converts quadratic Gaussian variables to quadratic Gaussian variables. In the following, the decomposition of risk and P\&L is generated from a chain of four consecutive measure changes, each of which preserves the quadratic Gaussian nature of the variables, with the linear transformation of the Girsanov relationship applied at each step.

\section*{Quadratic Gaussian pricing}

Let $q$ and $Q$ be the initial and final price vectors for the underlying securities, and let $p$ and $P$ be the initial and final price scalars for the derivative security. The final prices are modelled as quadratic Gaussian variables:
\begin{align}
Q&:=\adja{\mu}+\adja{\sigma}Z \\
P&:=\adja{\eta}+\adja{\epsilon}\cdot Z+\frac{1}{2}Z\cdot\adja{\gamma}Z \notag
\end{align}
where $Z$ is a vector of independent standard normal variables, and the number $N_f:=\dimension[Z]$ of factors is not assumed to be equal to the number $N_m:=\dimension[Q]$ of underlyings. For the underlyings, the constant coefficient $\adja{\mu}$ has dimensions $(N_m,1)$ and the linear coefficient $\adja{\sigma}$ has dimensions $(N_m,N_f)$. For the derivative, the constant coefficient $\adja{\eta}$ has dimensions $(1,1)$, the linear coefficient $\adja{\epsilon}$ has dimensions $(N_f,1)$ and the quadratic coefficient $\adja{\gamma}$ has dimensions $(N_f,N_f)$ and is symmetric, $\adja{\gamma}^\dagger=\adja{\gamma}$. No additional funding securities are assumed, and a portfolio with notionals $\omega$ of the underlyings and notional $\lambda$ of the derivative is self-funded when it satisfies $\omega\cdot q+\lambda p=0$.

Three additional assumptions are made on the underlying market.
\begin{description}
\item[Fundability]The initial underlying prices satisfy the condition $q\ne0$. This ensures that at least one underlying is available to fund the derivative.
\item[Independence]The covariance $\adja{\sigma}\adja{\sigma}^\dagger$ of the final underlying prices is positive definite. This ensures that the underlyings hedge distinct risk factors, and eliminates the possibility of arbitrage in the underlying market.
\item[Discounting]The underlying prices satisfy the condition $q\cdot(\adja{\sigma}\adja{\sigma}^\dagger)^{-1}\adja{\mu}>0$. This means that the initial prices and the mean of the final prices do not point away from each other in the inner product generated by $(\adja{\sigma}\adja{\sigma}^\dagger)^{-1}$, a condition that ensures discount factors are positive.
\end{description}

The assumed forms for the underlying and derivative prices are limiting in scope, and the approach adopted here is to derive exact solutions for an approximate model, rather than derive approximate solutions for an exact model. They are, nonetheless, sufficient to express the key features of market incompleteness, and extend beyond their immediate domain of validity as regression models. The underlying prices are in the form that a Gaussian process generates over the hedging interval, and the derivative price adds quadratic terms that arise in the second-order approximation of a more general expression. The model exhibits convexity between underlying and derivative prices, the factor dimensions that are not spanned by the underlyings constitute unhedgeable risks for the derivative, and the absence of a risk-free underlying creates funding risks.

For a portfolio comprising notionals $\omega$ of the underlyings and notional $\lambda$ of the derivative, the final price of the portfolio is the quadratic Gaussian variable:
\begin{align}
\omega\cdot Q&+\lambda P= \\
&(\lambda\adja{\eta}+\omega\cdot\adja{\mu})+(\lambda\adja{\epsilon}+\adja{\sigma}^\dagger\omega)\cdot Z+\frac{1}{2}Z\cdot\lambda\adja{\gamma}Z \notag
\end{align}
Using the quadratic Gaussian theorem, the entropy-adjusted mean of the final price is:
\begin{align}
\mathbb{E}[\alpha]&[\omega\cdot Q+\lambda P]=(\lambda\adja{\eta}+\omega\cdot\adja{\mu})+\frac{1}{2}\tr[\lambda\adja{\gamma}] \\
&-\frac{1}{2}\alpha\left((\lambda\adja{\epsilon}+\adja{\sigma}^\dagger\omega)\cdot(1+\alpha\lambda\adja{\gamma})^{-1}(\lambda\adja{\epsilon}+\adja{\sigma}^\dagger\omega)\vphantom{\frac{1}{1}}\right. \notag \\
&\left.\qquad\qquad+\tr\!\left[\frac{\alpha\lambda\adja{\gamma}-\log[1+\alpha\lambda\adja{\gamma}]}{\alpha^2}\right]\right) \notag
\end{align}
which is valid for $\alpha\lambda$ within an open interval containing zero.

Entropic risk optimisation maximises the entropy-adjusted mean, at the risk aversion $\alpha>0$ of the investor, for the final price of a self-funded portfolio. The program of entropic pricing \cite{McCloud2025a} follows three steps.
\begin{description}
\item[Calibration]Portfolio optimisation in the underlying market finds the performance target for the strategy.
\item[Hedging]Adding the derivative, the hedge is the incremental underlying portfolio that restores optimality.
\item[Pricing]Indifference dictates that the performance of the derivative portfolio matches the target.
\end{description}
The quadratic Gaussian theorem yields exact expressions for the optimal and hedge portfolios and indifference price in the quadratic Gaussian model.

\subsection*{Calibration}

Define the funding rate $r$ and optimal portfolio $\phi$ by:
\begin{align}
r&:=\frac{q\cdot(\adja{\sigma}\adja{\sigma}^\dagger)^{-1}(\adja{\mu}-q)}{q\cdot(\adja{\sigma}\adja{\sigma}^\dagger)^{-1}q} \\
\phi&:=(\adja{\sigma}\adja{\sigma}^\dagger)^{-1}(\adja{\mu}-(1+r)q) \notag
\end{align}
These expressions are well defined thanks to the fundability and independence assumptions, and the funding rate satisfies $1+r>0$ thanks to the discounting assumption. The optimal portfolio is a sum of the legs $(\adja{\sigma}\adja{\sigma}^\dagger)^{-1}\adja{\mu}$ and $(\adja{\sigma}\adja{\sigma}^\dagger)^{-1}q$ weighted so that it is self-funded, $\phi\cdot q=0$.

For an underlying portfolio with notionals $\omega$, the entropy-adjusted mean at risk aversion $\alpha>0$ of the final price is expressed in terms of the funding rate and optimal portfolio:
\begin{align}
\mathbb{E}[\alpha][\omega\cdot Q]={}&(1+r)\omega\cdot q \\
&+\frac{1}{2\alpha}(\|\adja{\sigma}^\dagger\phi\|^2-\|\adja{\sigma}^\dagger(\phi-\alpha\omega)\|^2) \notag
\end{align}
Optimising across all portfolios that satisfy the self-funding condition $\omega\cdot q=0$, the entropy-adjusted mean is maximised at the portfolio $\omega^\ast:=\phi/\alpha$ proportional to $\phi$, with maximum entropy-adjusted mean:
\begin{equation}
\mathbb{E}[\alpha][\omega^\ast\cdot Q]=\frac{1}{2\alpha}\|\adja{\sigma}^\dagger\phi\|^2
\end{equation}

\subsection*{Hedging}

In the next step of the program, consider the derivative funded by the underlying portfolio with notionals $\delta$ satisfying the self-funding condition $\delta\cdot q=p$. Add notional $\lambda$ of the funded derivative to the optimal underlying portfolio, so that the combined portfolio has initial price $0$ and final price $\omega^\ast\cdot Q+\lambda(P-\delta\cdot Q)$. The quadratic Gaussian theorem generates the increment in the entropy-adjusted mean resulting from the introduction of the derivative. Define the coefficients adjusted for measure and convexity:
\begin{align}
\adjb{\sigma}&:=\adja{\sigma}(1+\alpha\lambda\adja{\gamma})^{-1/2} \\
\adjb{\eta}&:=\adja{\eta}-\adja{\epsilon}\cdot\adja{\sigma}^\dagger\phi+\frac{1}{2}\adja{\sigma}^\dagger\phi\cdot\adja{\gamma}\adja{\sigma}^\dagger\phi+\frac{1}{2}\tr[\adja{\gamma}] \notag \\
\adjb{\epsilon}&:=(1+\alpha\lambda\adja{\gamma})^{-1/2}(\adja{\epsilon}-\adja{\gamma}\adja{\sigma}^\dagger\phi) \notag
\end{align}
and the projection onto the subspace spanned by $\adjb{\sigma}$:
\begin{equation}
1_{\adjb{\sigma}}:=\adjb{\sigma}^\dagger(\adjb{\sigma}\adjb{\sigma}^\dagger)^{-1}\adjb{\sigma}
\end{equation}
noting that the invertibility of $\adjb{\sigma}\adjb{\sigma}^\dagger$ follows from the invertibility of $\adja{\sigma}\adja{\sigma}^\dagger$. The hedging leg $\delta^\sigma$, funding leg $\delta^s$ and funding spread $s$ are then defined by:
\begin{align}
\delta^\sigma&:=(\adjb{\sigma}\adjb{\sigma}^\dagger)^{-1}\adjb{\sigma}\adjb{\epsilon} \\
\delta^s&:=(\adjb{\sigma}\adjb{\sigma}^\dagger)^{-1}q \notag \\
s&:=\frac{p-\delta^\sigma\cdot q}{\delta^s\cdot q} \notag
\end{align}
and the hedge portfolio $\delta^\ast:=\delta^\sigma+s\delta^s$ is expressed as a sum of the hedging and funding legs weighted so that it satisfies the self-funding condition $\delta^\ast\cdot q=p$.

\begin{figure*}[!t]
\begin{alignat*}{5}
& &\qquad\qquad &\makebox[0pt][l]{Penalty} && &\qquad\qquad &\makebox[0pt][l]{Attribution} && \\
&\text{Convexity:} & &\chi^\gamma & &=\tr\!\left[\frac{\alpha\lambda\adja{\gamma}-\log[1+\alpha\lambda\adja{\gamma}]}{(\alpha\lambda)^2}\right] & &X^\gamma & &=\frac{1}{2}((Z+\adja{\sigma}^\dagger\phi)\cdot\adja{\gamma}(Z+\adja{\sigma}^\dagger\phi)-\tr[\adja{\gamma}]) \notag \\
&\text{Dimensionality:} & &\chi^\sigma & &=\|(1-1_{\adjb{\sigma}})\adjb{\epsilon}\|^2 & &X^\sigma & &=(\adja{\epsilon}-\adja{\gamma}\adja{\sigma}^\dagger\phi)\cdot(Z+\adja{\sigma}^\dagger\phi)-\delta^\sigma\cdot(Q-(1+r)q) \notag \\
&\text{Funding:} & &\chi^s & &=s^2\delta^s\cdot q & &X^s & &=-s\delta^s\cdot(Q-(1+r)q) \notag \\
&\text{Strategy:} & &\chi^\delta & &=\|\adjb{\sigma}^\dagger(\delta-\delta^\ast)\|^2 & &X^\delta & &=-(\delta-\delta^\ast)\cdot Q \notag
\end{alignat*}
\caption{The P\&L of the funded derivative is initially assessed by its risk penalties and finally explained by its attribution. These decompose into convexity, dimensionality, funding and strategy contributions, where the penalty is generated by the incremental entropy-adjusted mean from the corresponding attribution. The investor eliminates the strategy penalty via optimal hedging. The remaining penalties, which are all non-negative, are the result of market incompleteness and cannot be eliminated.}
\end{figure*}

The central result of this article is that the P\&L and entropic risk of the funded derivative can be decomposed into four contributions from convexity, dimensionality, funding and strategy:
\begin{align}
&P-\delta\cdot Q= \\
&\qquad(\adjb{\eta}-(1+r)p)+(X^\gamma+X^\sigma+X^s+X^\delta) \notag \\
&\frac{1}{\lambda}(\mathbb{E}[\alpha][\omega^\ast\cdot Q+\lambda(P-\delta\cdot Q)]-\mathbb{E}[\alpha][\omega^\ast\cdot Q])= \notag \\
&\qquad(\adjb{\eta}-(1+r)p)-\frac{1}{2}\alpha\lambda(\chi^\gamma+\chi^\sigma+\chi^s+\chi^\delta) \notag
\end{align}
The contributions to the entropic risk penalise the performance of the portfolio, as the penalties $(\chi^\gamma,\chi^\sigma,\chi^s,\chi^\delta)$ are all non-negative, and each represents an unhedged risk assessed at the start of the hedging interval. The realised contributions of these risks to the P\&L are then attributed at the end of the hedging interval with the attributions $(X^\gamma,X^\sigma,X^s,X^\delta)$.
\begin{description}
\item[Convexity]Define the penalty $\chi^\gamma$ and attribution $X^\gamma$ for convexity:
\begin{align}
\chi^\gamma&:=\tr\!\left[\frac{\alpha\lambda\adja{\gamma}-\log[1+\alpha\lambda\adja{\gamma}]}{(\alpha\lambda)^2}\right] \\
X^\gamma&:=\frac{1}{2}((Z+\adja{\sigma}^\dagger\phi)\cdot\adja{\gamma}(Z+\adja{\sigma}^\dagger\phi)-\tr[\adja{\gamma}]) \notag
\end{align}
The chain rule in the quadratic Gaussian theorem generates the convexity contribution to the entropy-adjusted mean:
\begin{align}
\frac{1}{\lambda}&(\mathbb{E}[\alpha][\omega^\ast\cdot Q+\lambda X^\gamma] \\
&-\mathbb{E}[\alpha][\omega^\ast\cdot Q])=-\frac{1}{2}\alpha\lambda\chi^\gamma \notag
\end{align}
using the Girsanov transformation:
\begin{equation}
Z\to Z-\adja{\sigma}^\dagger\phi
\end{equation}
for the normal factors. Convexity between the final derivative and underlying prices cannot be eliminated by linear replication, which leaves residual risks in the portfolio when $\adja{\gamma}\ne0$. This penalty is not impacted by the hedge portfolio, as no hedge from the underlying market can replicate the quadratic factor dependence of the derivative.
\item[Dimensionality]Define the penalty $\chi^\sigma$ and attribution $X^\sigma$ for dimensionality:
\begin{align}
\chi^\sigma&:=\|(1-1_{\adjb{\sigma}})\adjb{\epsilon}\|^2 \\
X^\sigma&:=(\adja{\epsilon}-\adja{\gamma}\adja{\sigma}^\dagger\phi)\cdot(Z+\adja{\sigma}^\dagger\phi)-\delta^\sigma\cdot(Q-(1+r)q) \notag
\end{align}
The chain rule in the quadratic Gaussian theorem generates the dimensionality contribution to the entropy-adjusted mean:
\begin{align}
\frac{1}{\lambda}(&\mathbb{E}[\alpha][\omega^\ast\cdot Q+\lambda(X^\gamma+X^\sigma)] \\
&\qquad-\mathbb{E}[\alpha][\omega^\ast\cdot Q+\lambda X^\gamma]) \notag \\
&=\frac{1}{\lambda}\mathbb{E}[\alpha][\lambda(\adjb{\epsilon}-\adjb{\sigma}^\dagger\delta^\sigma)\cdot Z] \notag \\
&=-\frac{1}{2}\alpha\lambda\chi^\sigma \notag
\end{align}
using the Girsanov transformation:
\begin{equation}
Z\to(1+\alpha\lambda\adja{\gamma})^{-1/2}Z-\adja{\sigma}^\dagger\phi
\end{equation}
for the normal factors. The projection $1_{\adjb{\sigma}}$ projects the factors onto the hedgeable subspace. When $\adjb{\epsilon}$ is not in this subspace, so that it satisfies $(1-1_{\adjb{\sigma}})\adjb{\epsilon}\ne0$, there are residual unhedged factors that cannot be replicated.
\item[Funding]Define the penalty $\chi^s$ and attribution $X^s$ for funding:
\begin{align}
\chi^s&:=s^2\delta^s\cdot q \\
X^s&:=-s\delta^s\cdot(Q-(1+r)q) \notag
\end{align}
The chain rule in the quadratic Gaussian theorem generates the funding contribution to the entropy-adjusted mean:
\begin{align}
\frac{1}{\lambda}(&\mathbb{E}[\alpha][\omega^\ast\cdot Q+\lambda(X^\gamma+X^\sigma+X^s)] \\
&\qquad-\mathbb{E}[\alpha][\omega^\ast\cdot Q+\lambda(X^\gamma+X^\sigma)]) \notag \\
&=\frac{1}{\lambda}\mathbb{E}[\alpha][-\lambda s\adjb{\sigma}^\dagger\delta^s\cdot Z] \notag \\
&=-\frac{1}{2}\alpha\lambda\chi^s \notag
\end{align}
using the Girsanov transformation:
\begin{equation}
Z\to(1+\alpha\lambda\adja{\gamma})^{-1/2}(Z-\alpha\lambda(\adjb{\epsilon}-\adjb{\sigma}^\dagger\delta^\sigma))-\adja{\sigma}^\dagger\phi
\end{equation}
for the normal factors. The hedging leg offsets the derivative risks that are correlated with underlying risks. If the initial price of the hedging leg does not match the initial derivative price, so that $s\ne0$, the funding leg creates unhedged risks. This is the only penalty that depends on the initial derivative price.
\item[Strategy]Define the penalty $\chi^\delta$ and attribution $X^\delta$ for strategy:
\begin{align}
\chi^\delta&:=\|\adjb{\sigma}^\dagger(\delta-\delta^\ast)\|^2 \\
X^\delta&:=-(\delta-\delta^\ast)\cdot Q \notag
\end{align}
The chain rule in the quadratic Gaussian theorem generates the strategy contribution to the entropy-adjusted mean:
\begin{align}
\frac{1}{\lambda}(&\mathbb{E}[\alpha][\omega^\ast\cdot Q+\lambda(X^\gamma+X^\sigma+X^s+X^\delta)] \\
&\qquad-\mathbb{E}[\alpha][\omega^\ast\cdot Q+\lambda(X^\gamma+X^\sigma+X^s)]) \notag \\
&=\frac{1}{\lambda}\mathbb{E}[\alpha][-\lambda\adjb{\sigma}^\dagger(\delta-\delta^\ast)\cdot Z] \notag \\
&=-\frac{1}{2}\alpha\lambda\chi^\delta \notag
\end{align}
using the Girsanov transformation:
\begin{equation}
Z\to(1+\alpha\lambda\adja{\gamma})^{-1/2}(Z-\alpha\lambda(\adjb{\epsilon}-\adjb{\sigma}^\dagger\delta^\ast))-\adja{\sigma}^\dagger\phi
\end{equation}
for the normal factors. If the investor funds the derivative suboptimally, so that $\delta\ne\delta^\ast$, the entropy-adjusted mean incurs an additional penalty. This is the only penalty that can be controlled by the investor, and is eliminated when the funding strategy matches the optimal hedge portfolio.
\end{description}

Entropic risk in the quadratic Gaussian model is thus decomposed via a chain of four consecutive measure changes, each of which introduces a source of unhedged risk that is attributed in the realised P\&L. The expressions in this subsection are valid for the convergence range of the quadratic Gaussian theorem, with $\alpha\lambda$ in an open interval containing zero. The mid solution and bid-offer corrections for small notional are obtained from the first-order expansion around $\alpha\lambda=0$.

\subsection*{Pricing}

Using the expressions above, the maximum entropy-adjusted mean for the funded derivative is obtained by eliminating the strategy penalty with the choice $\delta^\ast$ for the hedge portfolio. The incremental entropy-adjusted mean generated from this choice is:
\begin{align}
&\frac{1}{\lambda}(\mathbb{E}[\alpha][\omega^\ast\cdot Q+\lambda(P-\delta^\ast\cdot Q)]-\mathbb{E}[\alpha][\omega^\ast\cdot Q])= \\
&\qquad\zeta-\frac{(1+r+\alpha\lambda s)^2-(1+r)^2}{2\alpha\lambda}\delta^s\cdot q \notag
\end{align}
where:
\begin{equation}
\zeta:=(\adjb{\eta}-(1+r)\delta^\sigma\cdot q)-\frac{1}{2}\alpha\lambda(\chi^\gamma+\chi^\sigma)
\end{equation}
is the contribution that does not depend on $s$. The indifference price is derived by setting the increment to zero, meaning that the derivative neither increases nor decreases the optimal performance that can be achieved from the strategy. This generates the solution:
\begin{align}
s^\ast&:=\frac{1+r}{\alpha\lambda}\left(\sqrt{1+\frac{2\alpha\lambda\zeta}{(1+r)^2\delta^s\cdot q}}-1\right) \\
p^\ast&:=\frac{1}{1+r}\left(\adjb{\eta}-\frac{1}{2}\alpha\lambda(\chi^\gamma+\chi^\sigma+{s^\ast}^2\delta^s\cdot q)\right) \notag
\end{align}
for the indifference funding spread $s^\ast$ and initial price $p^\ast$ of the derivative, choosing the quadratic branch that converges as $\alpha\lambda\to0$. The mid indifference price is the mean of the final price adjusted for measure and convexity, and the penalties generate the bid-offer range for long and short positions, all discounted at the funding rate. It is valid for $\alpha\lambda$ satisfying:
\begin{equation}
\alpha\lambda\zeta\ge-\frac{1}{2}(1+r)^2\delta^s\cdot q
\end{equation}
intersected with the domain of validity of the quadratic Gaussian theorem, which includes $\alpha\lambda=0$. Outside this range, there is no indifference price for which the addition of the hedged derivative matches the optimal performance in the underlying market.

\begin{figure*}[!tp]
\centering\normalsize\setlength{\tabcolsep}{0.5\columnsep}
\begin{tabular}{@{}C{0.39\textwidth-\tabcolsep}C{0.39\textwidth-\tabcolsep}@{}}
\includegraphics[width=0.39\textwidth-\tabcolsep]{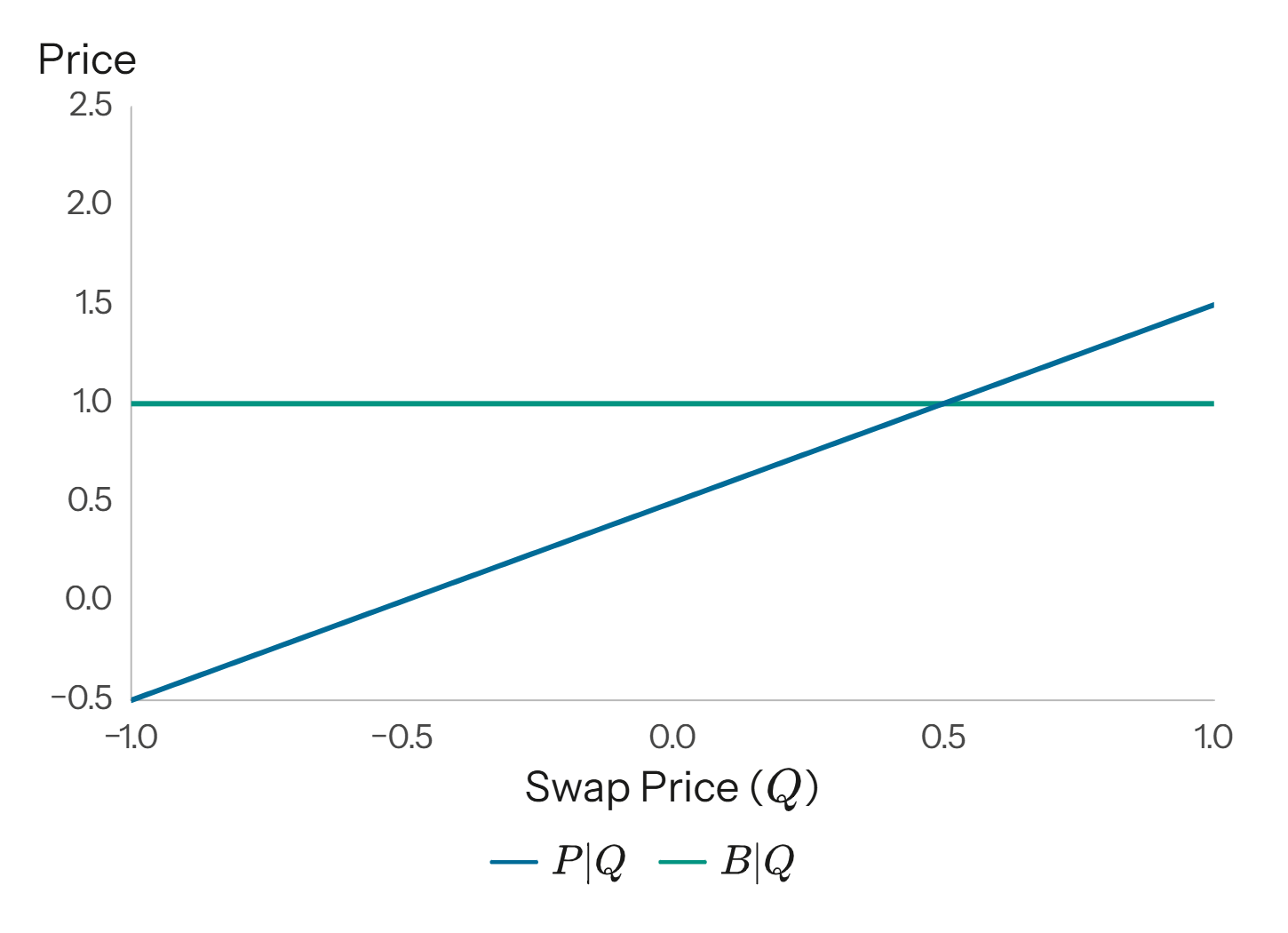}&
\includegraphics[width=0.39\textwidth-\tabcolsep]{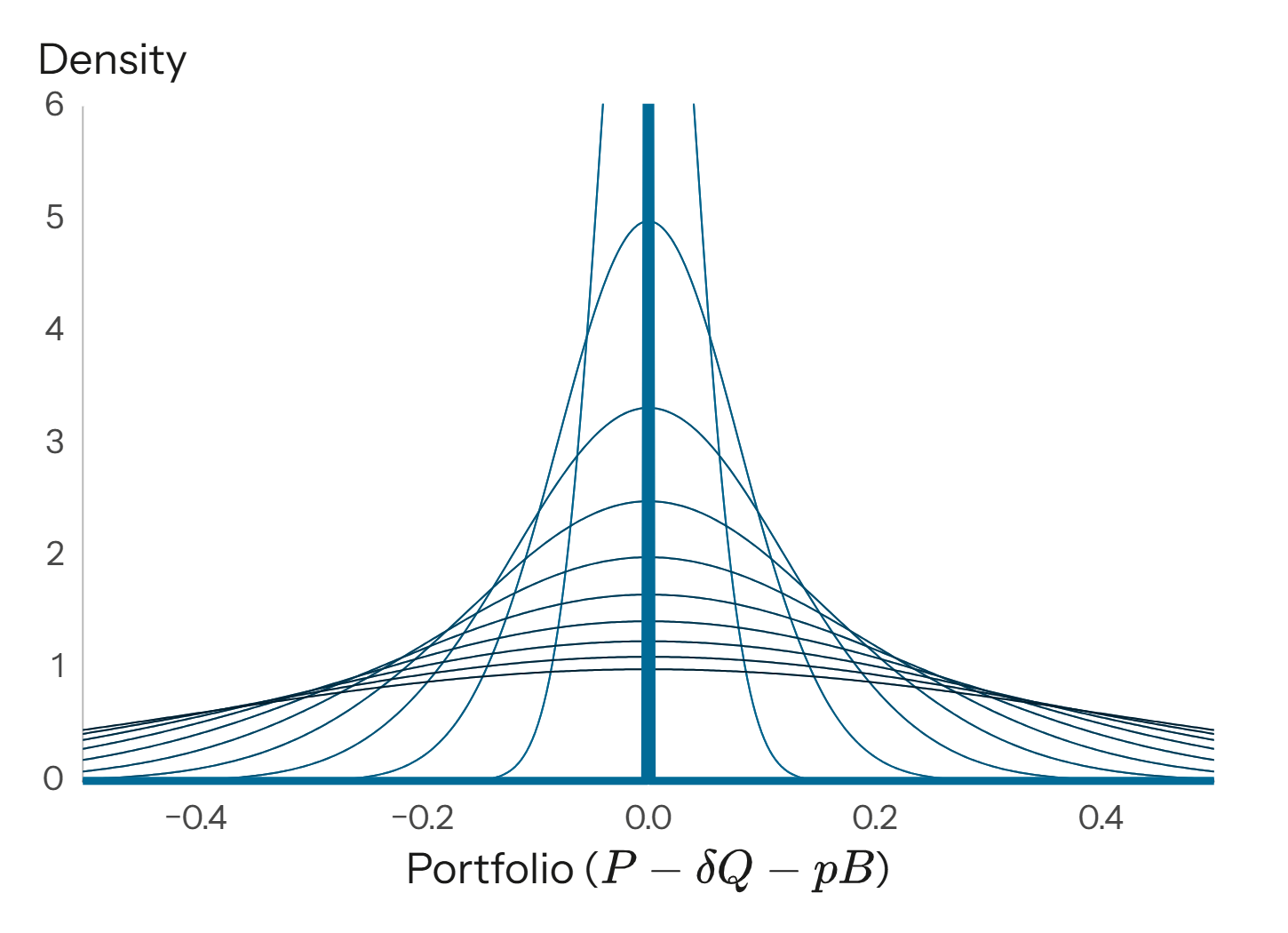}\\
\multicolumn{2}{c}{Case $\gamma=0$, $\rho=0$, $\sigma_b=0$}\\
\includegraphics[width=0.39\textwidth-\tabcolsep]{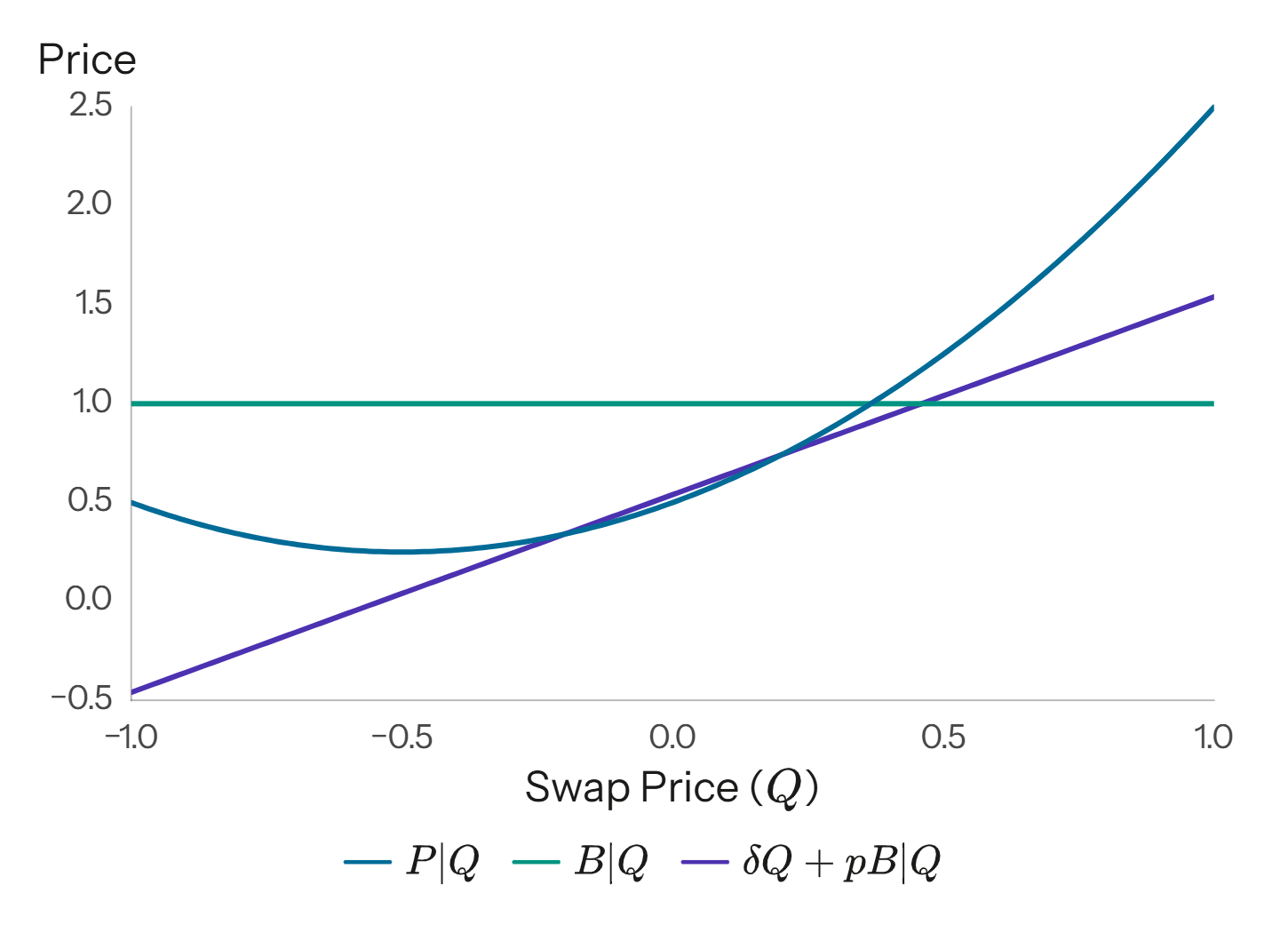}&
\includegraphics[width=0.39\textwidth-\tabcolsep]{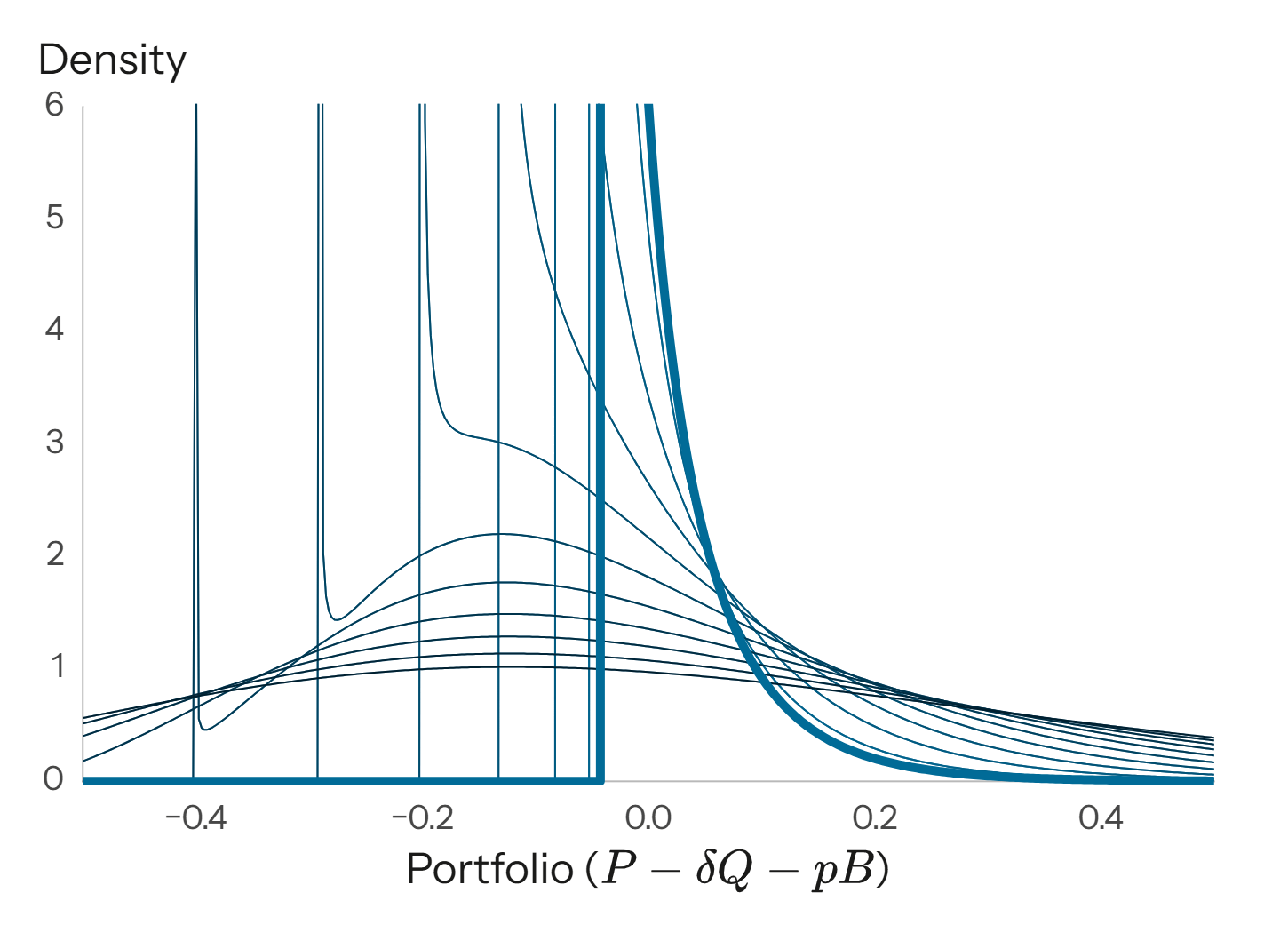}\\
\multicolumn{2}{c}{Case $\gamma=2$, $\rho=0$, $\sigma_b=0$}\\
\includegraphics[width=0.39\textwidth-\tabcolsep]{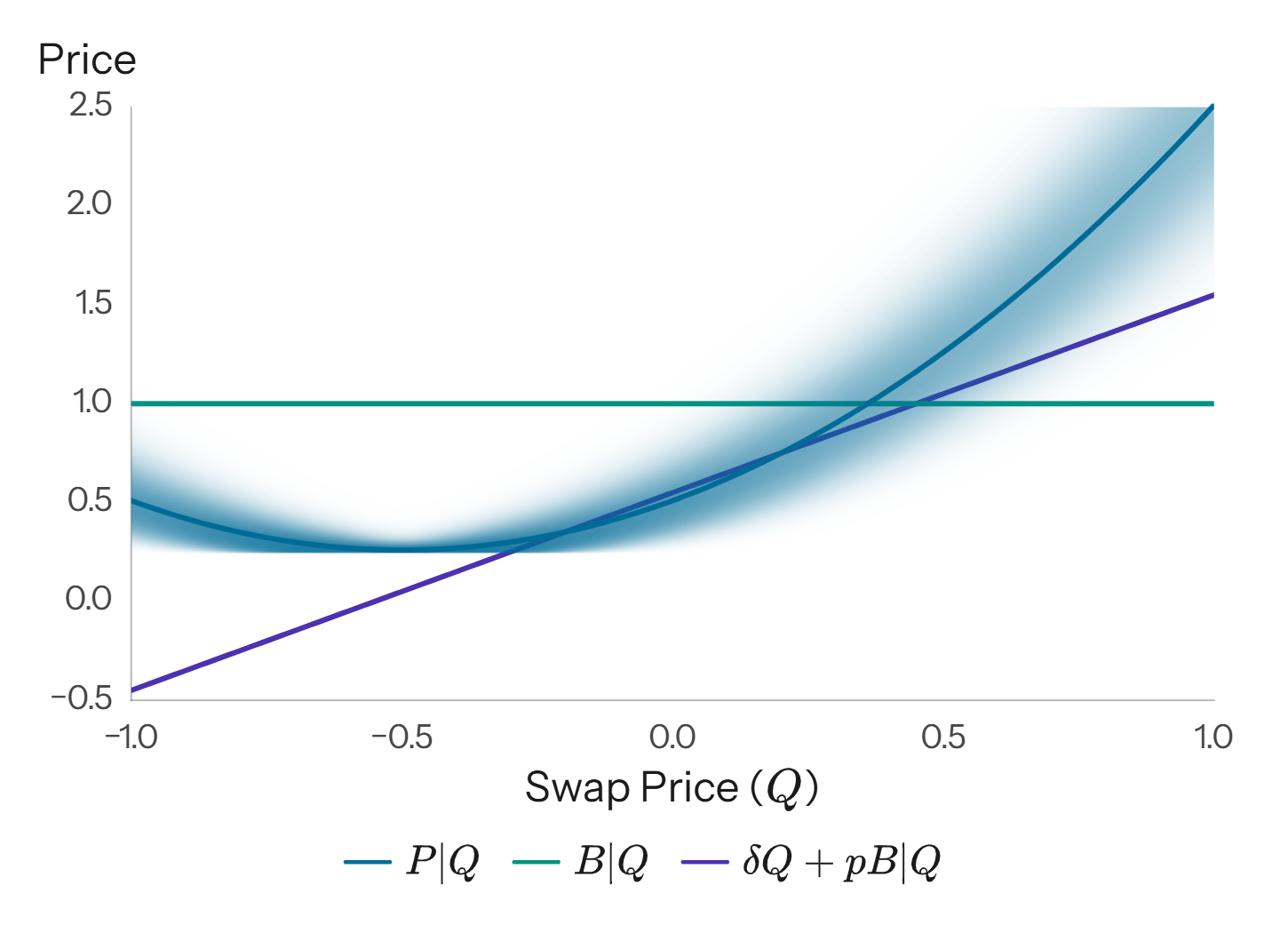}&
\includegraphics[width=0.39\textwidth-\tabcolsep]{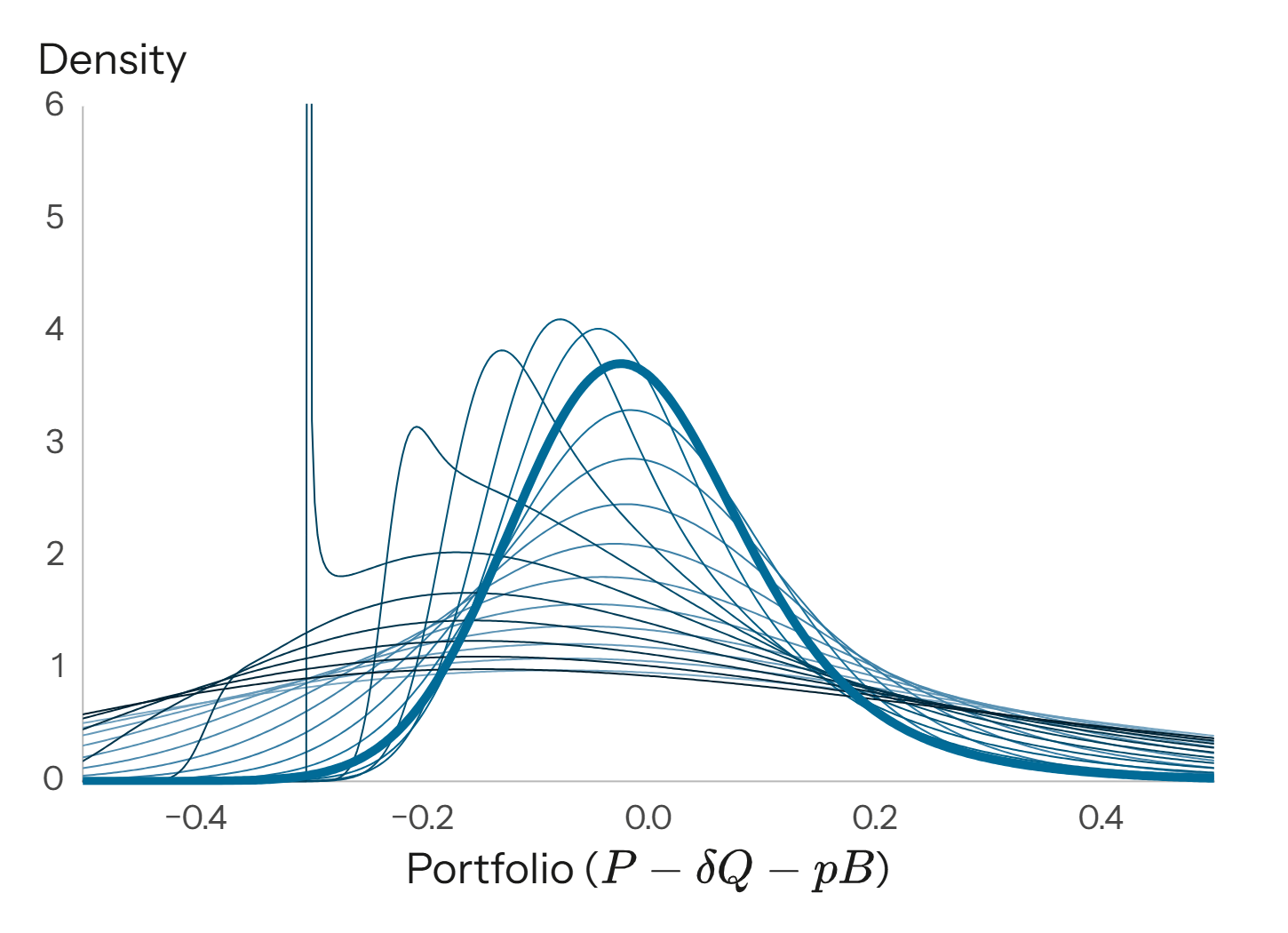}\\
\multicolumn{2}{c}{Case $\gamma=2$, $\rho=0.5$, $\sigma_b=0$}\\
\includegraphics[width=0.39\textwidth-\tabcolsep]{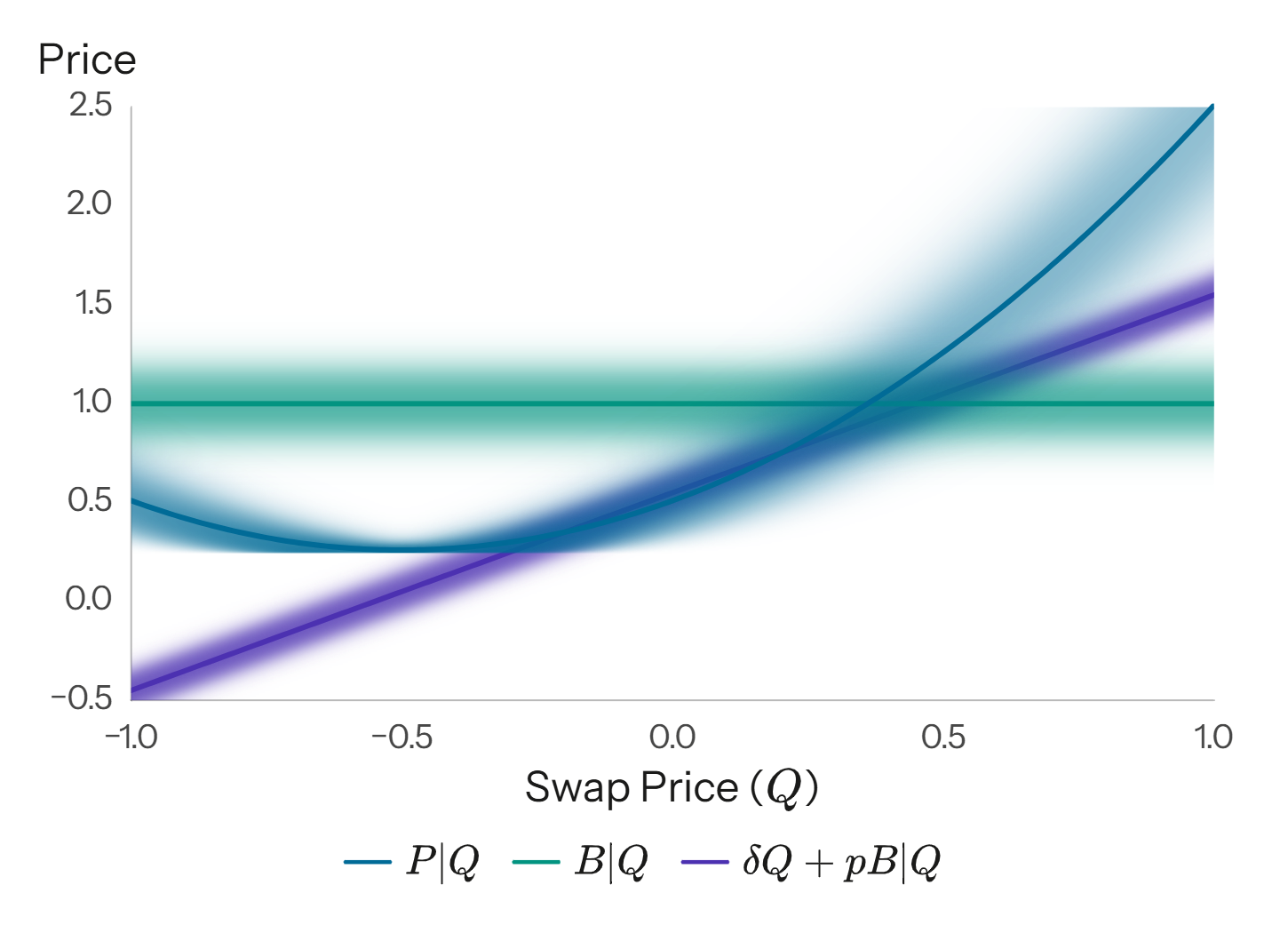}&
\includegraphics[width=0.39\textwidth-\tabcolsep]{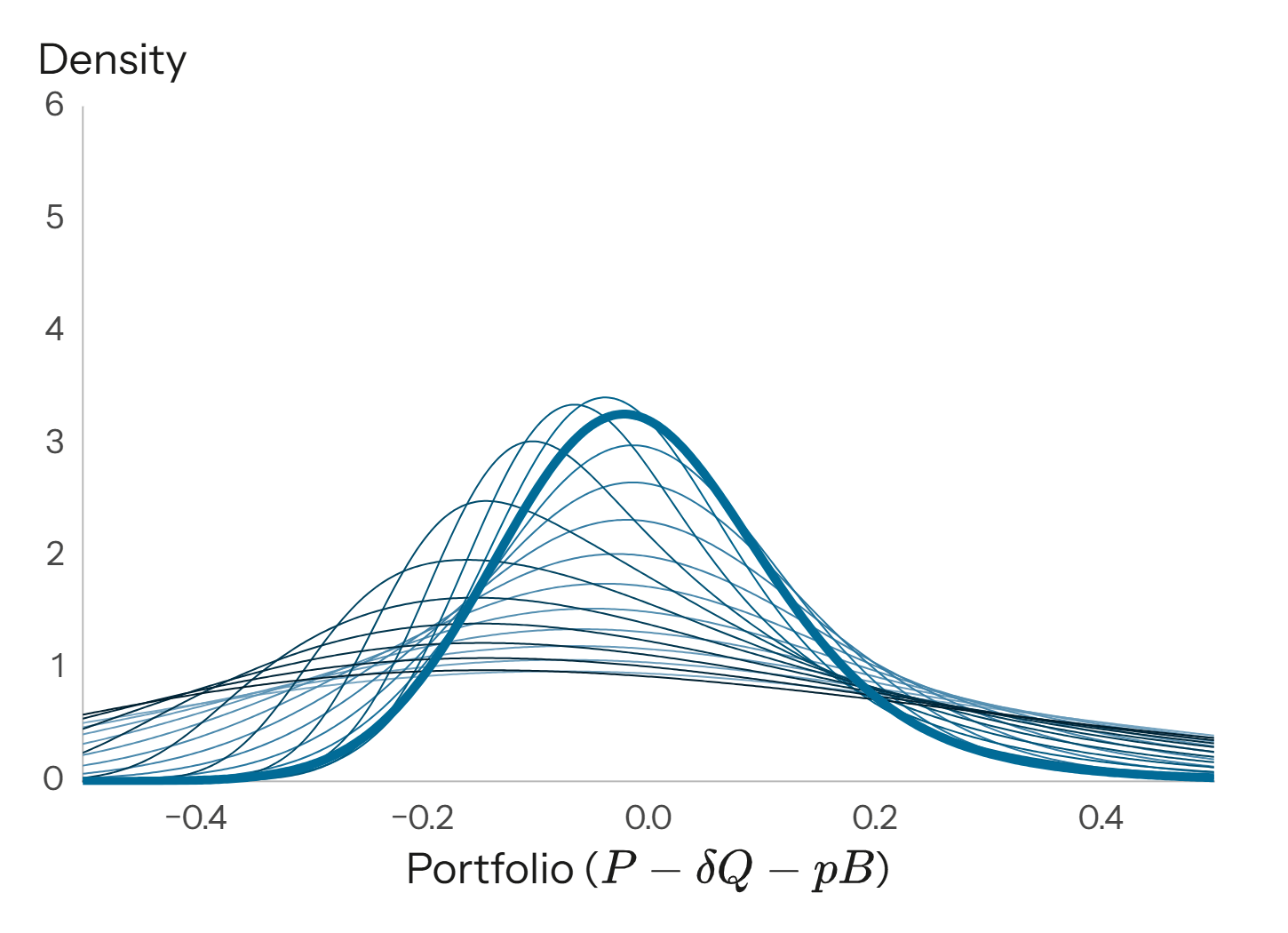}\\
\multicolumn{2}{c}{Case $\gamma=2$, $\rho=0.5$, $\sigma_b=0.1$}
\end{tabular}
\caption{Quadratic Gaussian P\&L is a combination of normal and noncentral chi-squared variables. In this case study, optimal hedging in the mid model ($\alpha\lambda=0$) is identified by varying the hedge ratio across the range $-1\le\delta\le3$, with the optimal choice at $\delta^\ast=1$ highlighted. As represented in the graphs of the left column for the optimal hedge $\delta=\delta^\ast$, introducing market incompleteness via convexity, dimensionality and funding contributions reduces the effectiveness of replication.}
\end{figure*}

\section*{Case study}

As a case study, consider an underlying market comprising a bond and a swap. The bond is used for funding, and has initial price $b=1$ and lognormal final price approximated as $B=1+\sigma_bZ_b$ where $\sigma_b$ is the lognormal volatility. The swap is used for hedging, and has initial price $q=0$ and normal final price $Q=\sigma_qZ_q$ where $\sigma_q$ is the normal volatility. The final price of the derivative is assumed to be quadratic in the final price of the swap with additional noise:
\begin{equation}
P=\eta+\epsilon(Q+\rho\sigma_qZ)+\frac{1}{2}\gamma(Q+\rho\sigma_qZ)^2
\end{equation}
parametrised by the scalars $\eta$, $\epsilon$, $\gamma$ and $\rho$. The variables $Z_b$, $Z_q$ and $Z$ are independent standard normal variables. This simple setup demonstrates the key features of entropic pricing in the quadratic Gaussian model. The scenario where the derivative can be exactly replicated corresponds to $\gamma=0$ (no convexity penalty), $\rho=0$ (no dimensionality penalty) and $\sigma_b=0$ (no funding penalty), and dialling parameters away from this scenario generates the three incompleteness penalties and their contributions to pricing and hedging.

\begin{figure*}[!tp]
\centering\normalsize\setlength{\tabcolsep}{0.5\columnsep}
\begin{tabular}{@{}C{0.32\textwidth-\tabcolsep}C{0.32\textwidth-\tabcolsep}C{0.32\textwidth-\tabcolsep}@{}}
\includegraphics[width=0.32\textwidth-\tabcolsep]{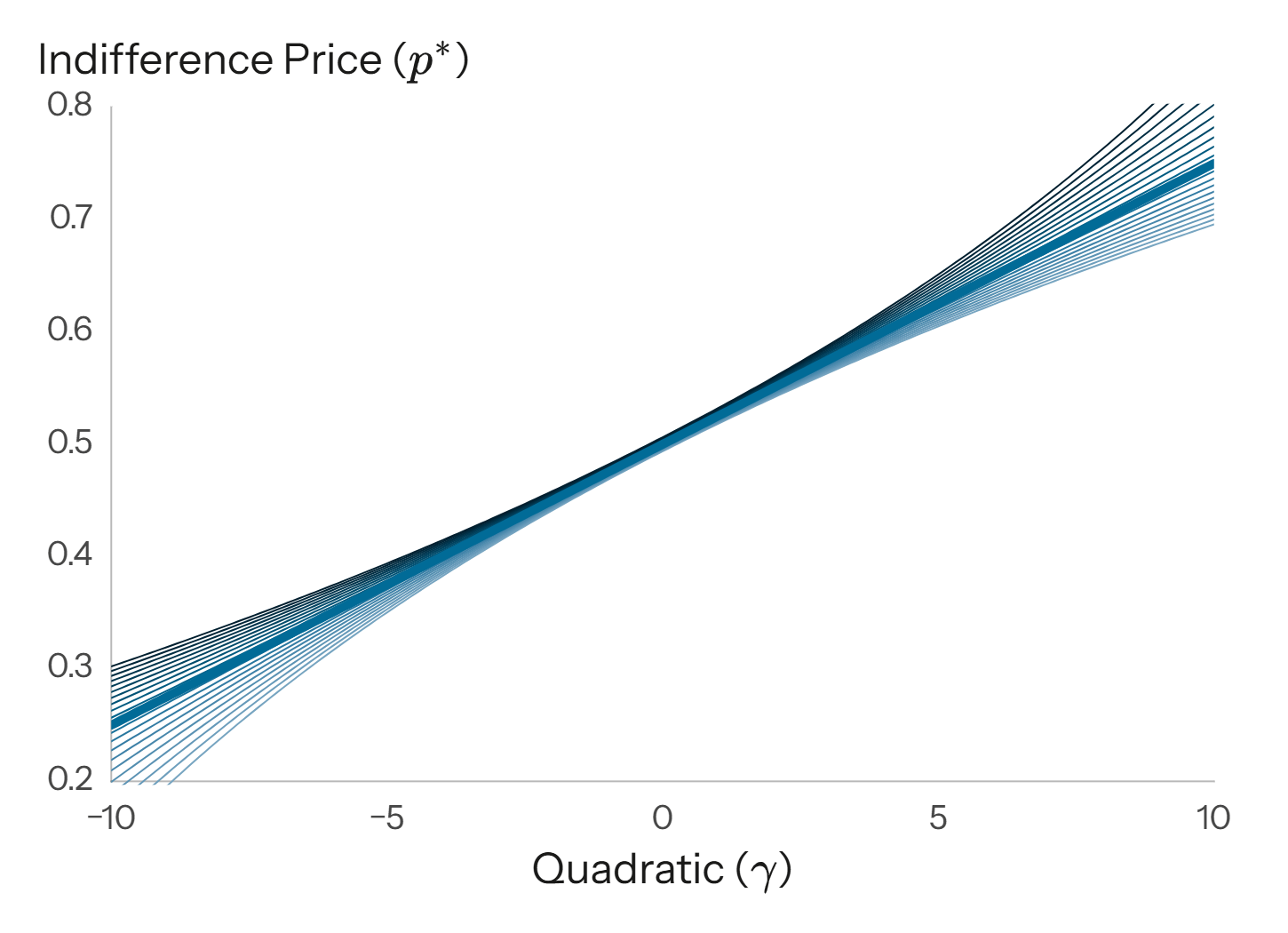}&
\includegraphics[width=0.32\textwidth-\tabcolsep]{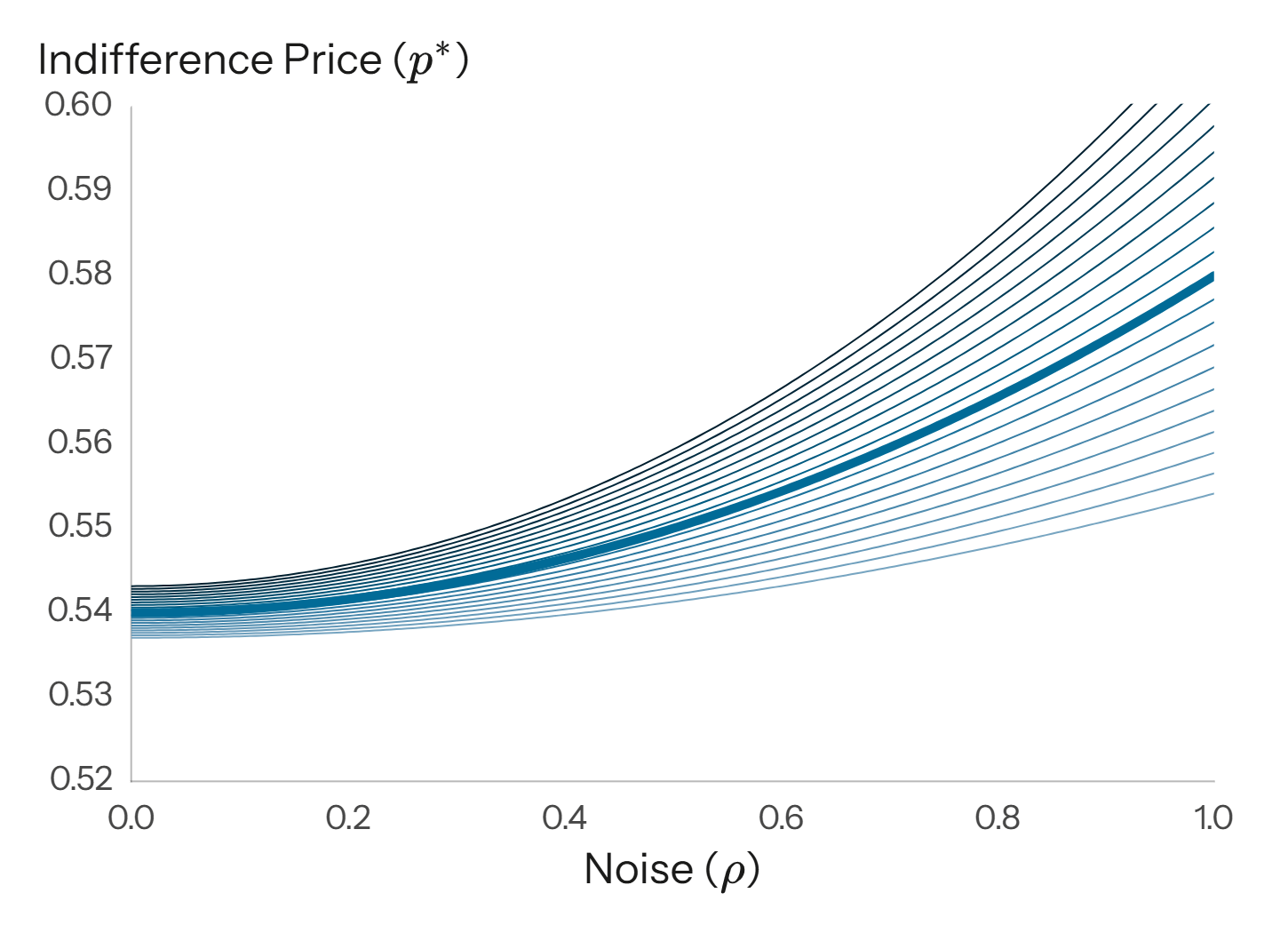}&
\includegraphics[width=0.32\textwidth-\tabcolsep]{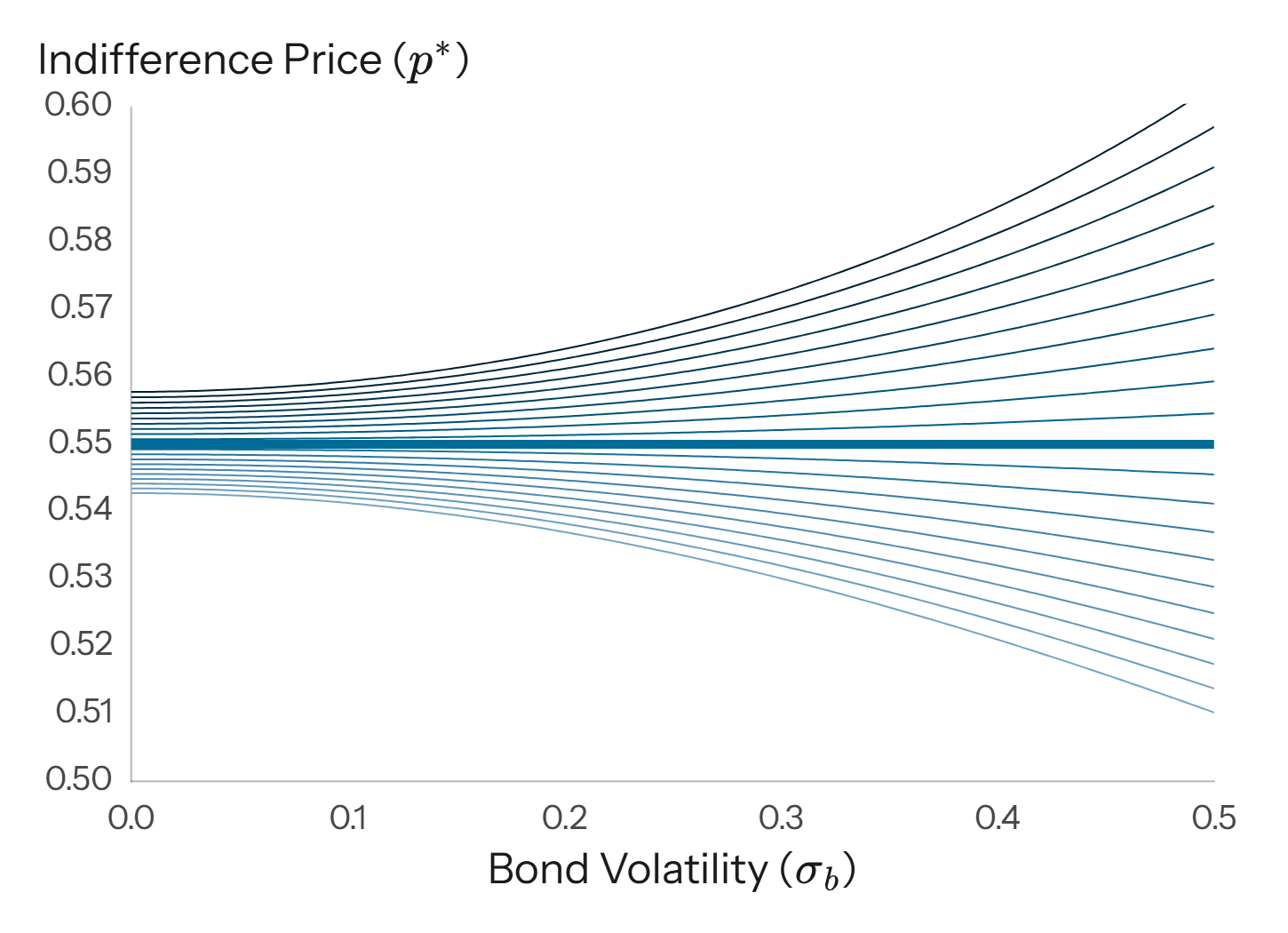}\\
\includegraphics[width=0.32\textwidth-\tabcolsep]{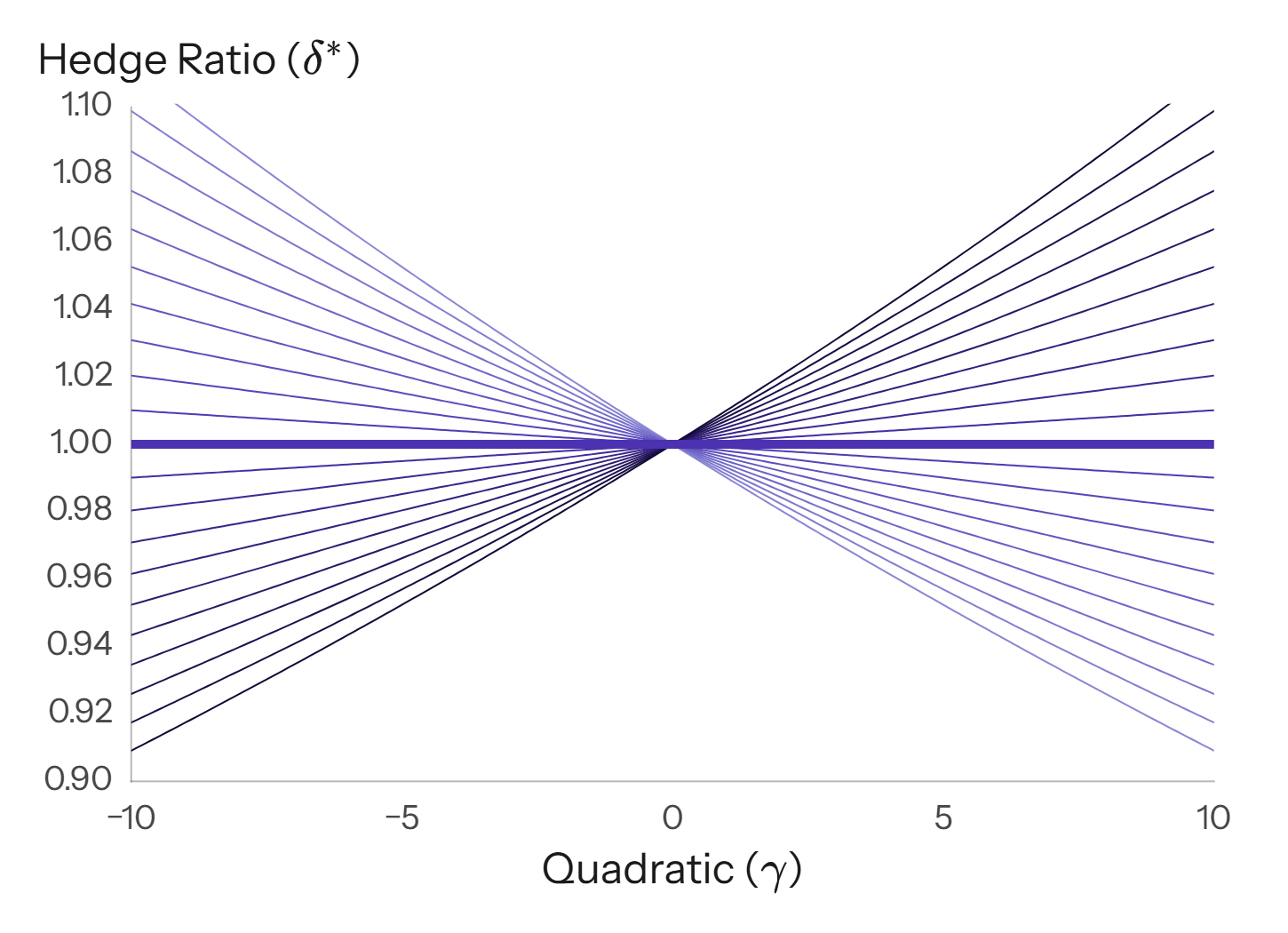}&
\includegraphics[width=0.32\textwidth-\tabcolsep]{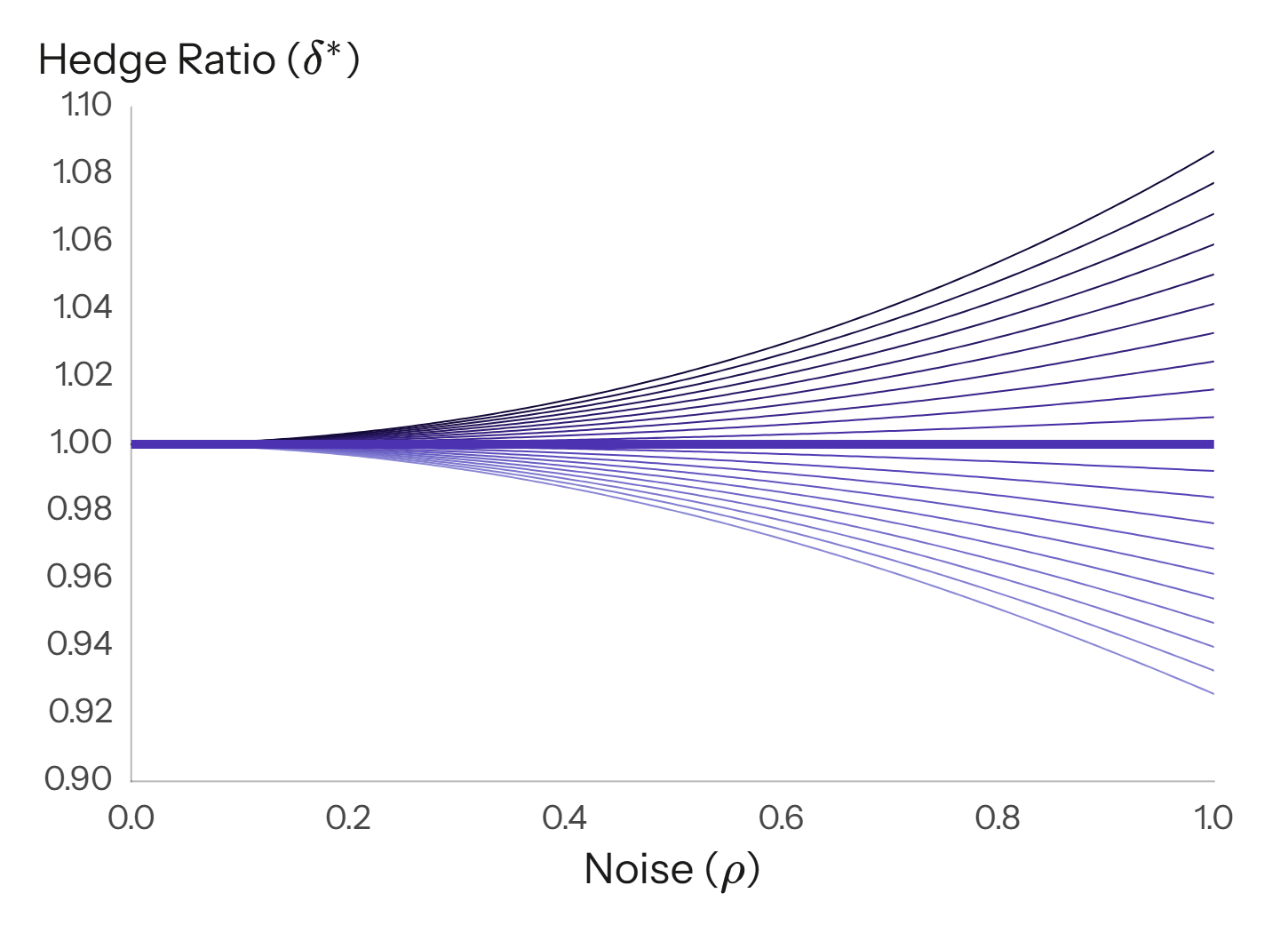}&
\includegraphics[width=0.32\textwidth-\tabcolsep]{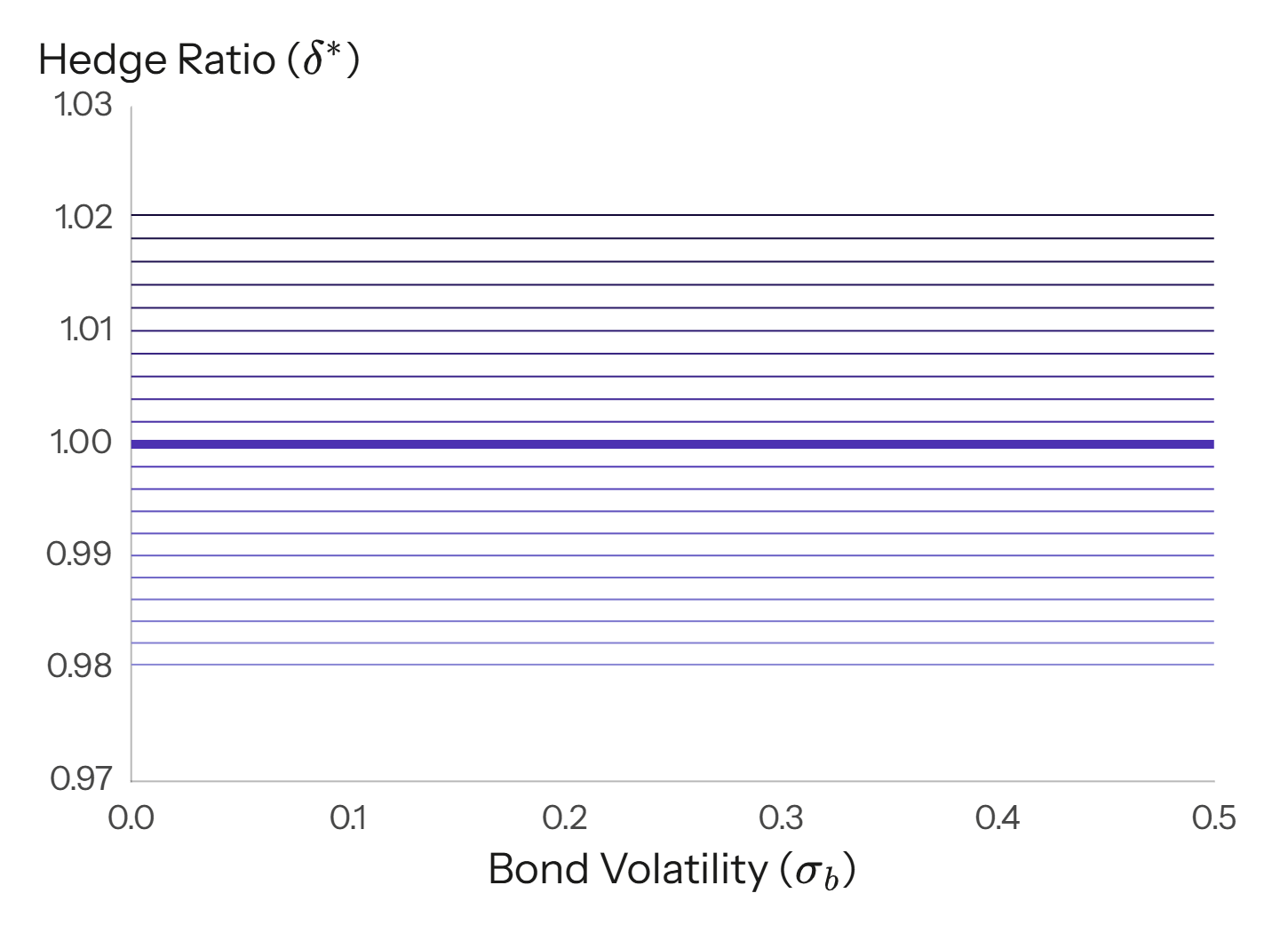}\\
Convexity&Dimensionality&Funding
\end{tabular}
\caption{Impact of the quadratic parameter $\gamma$, noise parameter $\rho$, and bond volatility $\sigma_b$ on the indifference price and hedge ratio for notionals in the range $-1\le\alpha\lambda\le1$, highlighting the mid case $\alpha\lambda=0$. For the convexity graphs, long/short positions generate the lower/upper curves in the indifference price plot and the decreasing/increasing curves in the hedge ratio plot. For the dimensionality and funding graphs, long/short positions generate the lower/upper curves in both plots. Residual risk from market incompleteness is mitigated with adjustments to the price as a model risk buffer and to the hedge ratio as protection against downside risk.}
\end{figure*}

The self-funding condition simplifies as only the bond is available for funding. The self-funded underlying-only portfolio comprises any notional $\omega$ of the swap and none of the bond, with entropy-adjusted mean:
\begin{equation}
\mathbb{E}[\alpha][\omega Q]=-\frac{1}{2}\alpha\sigma_q^2\omega^2
\end{equation}
maximised at the optimal portfolio $\omega^\ast=0$. The self-funded derivative and underlying portfolio comprises one of the derivative, any notional $-\delta$ of the swap, and notional $-p$ of the bond, per unit of derivative notional. When $\gamma\ne0$ the final price is:
\begin{align}
P-\delta Q&-p B=c-\sqrt{\frac{\rho^2\sigma_q^2\delta^2}{1+\rho^2}+\sigma_b^2p^2}\,U \\
&+\frac{1}{2}\gamma(1+\rho^2)\sigma_q^2\left(V+\frac{\epsilon(1+\rho^2)-\delta}{\gamma(1+\rho^2)^{3/2}\sigma_q}\right)^{\!2} \notag
\end{align}
where:
\begin{align}
U&:=\frac{\dfrac{\rho\sigma_q\delta}{1+\rho^2}(\rho Z_q-Z)+\sigma_bpZ_b}{\sqrt{\dfrac{\rho^2\sigma_q^2\delta^2}{1+\rho^2}+\sigma_b^2p^2}} \\
V&:=\frac{Z_q+\rho Z}{\sqrt{1+\rho^2}} \notag
\end{align}
are independent standard normal variables, and the constant is:
\begin{equation}
c:=\eta-p-\frac{1}{2}\frac{(\epsilon(1+\rho^2)-\delta)^2}{\gamma(1+\rho^2)^2}
\end{equation}
The final price is thus the linear combination of two independent variables, one standard normal and one noncentral chi-squared with one degree of freedom, reducing to a normal variable when $\gamma=0$.

\begin{figure*}[!tp]
\centering\normalsize\setlength{\tabcolsep}{0.5\columnsep}
\begin{tabular}{@{}C{0.32\textwidth-\tabcolsep}C{0.32\textwidth-\tabcolsep}C{0.32\textwidth-\tabcolsep}@{}}
\includegraphics[width=0.32\textwidth-\tabcolsep]{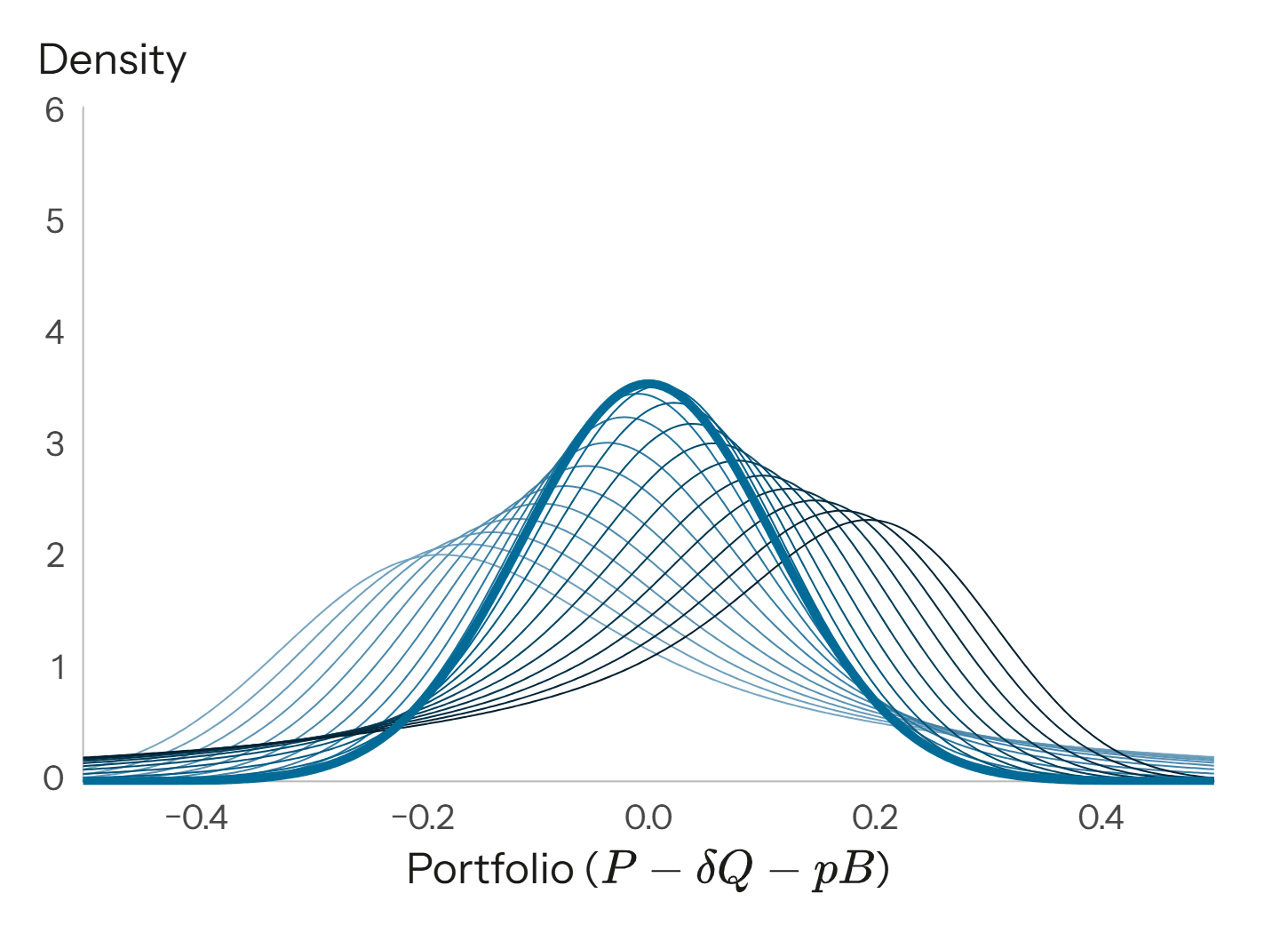}&
\includegraphics[width=0.32\textwidth-\tabcolsep]{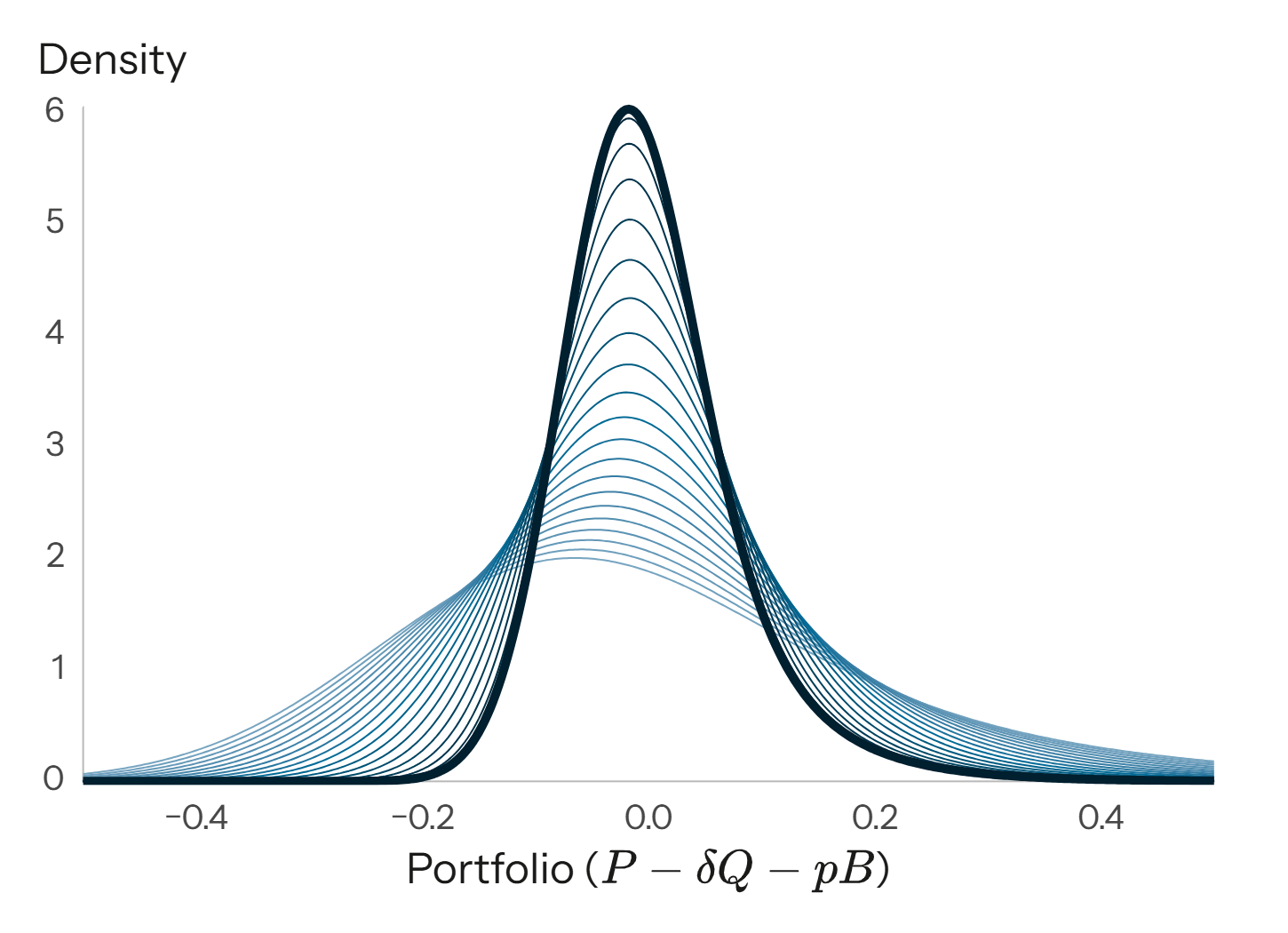}&
\includegraphics[width=0.32\textwidth-\tabcolsep]{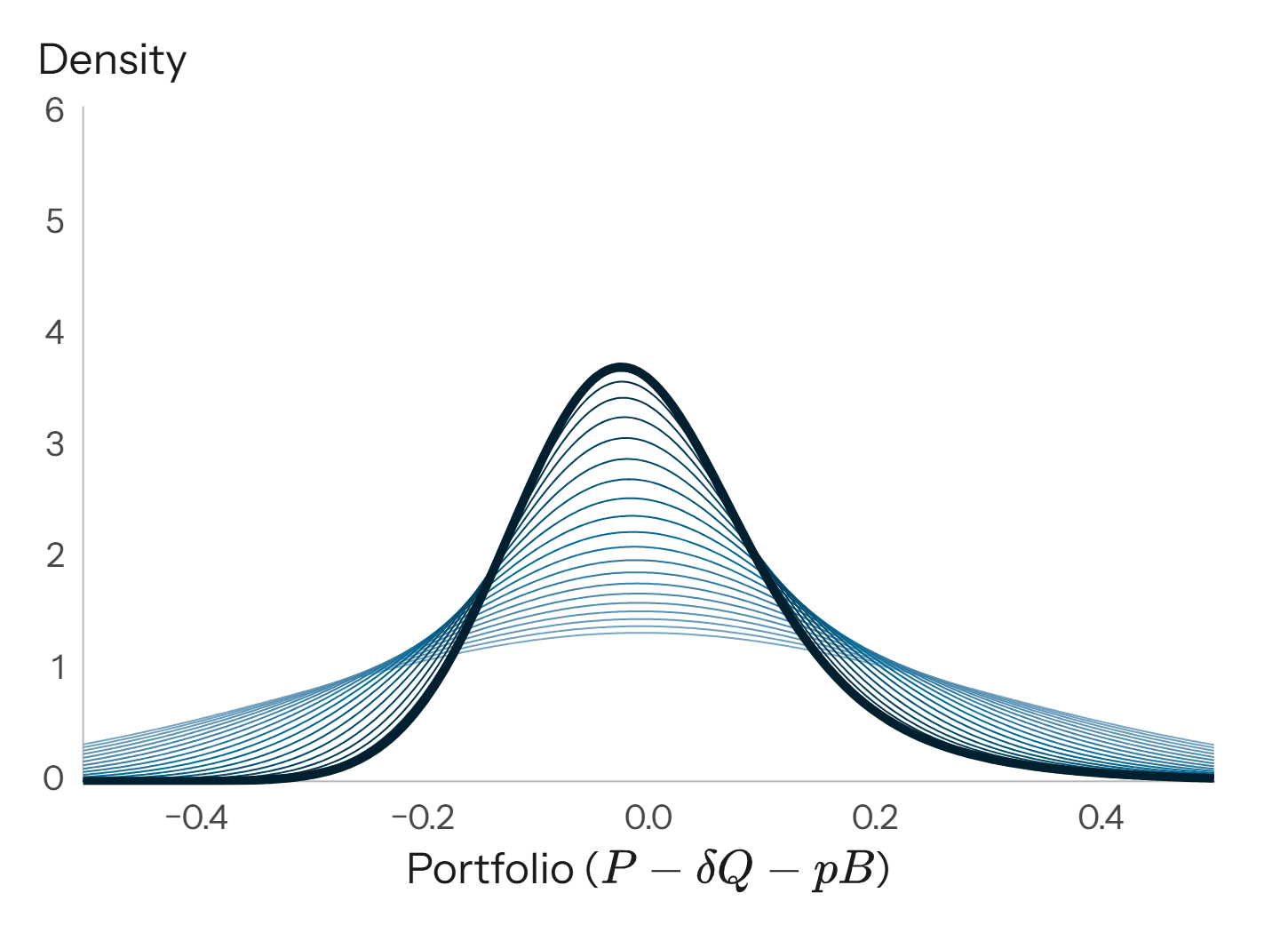}\\
\includegraphics[width=0.32\textwidth-\tabcolsep]{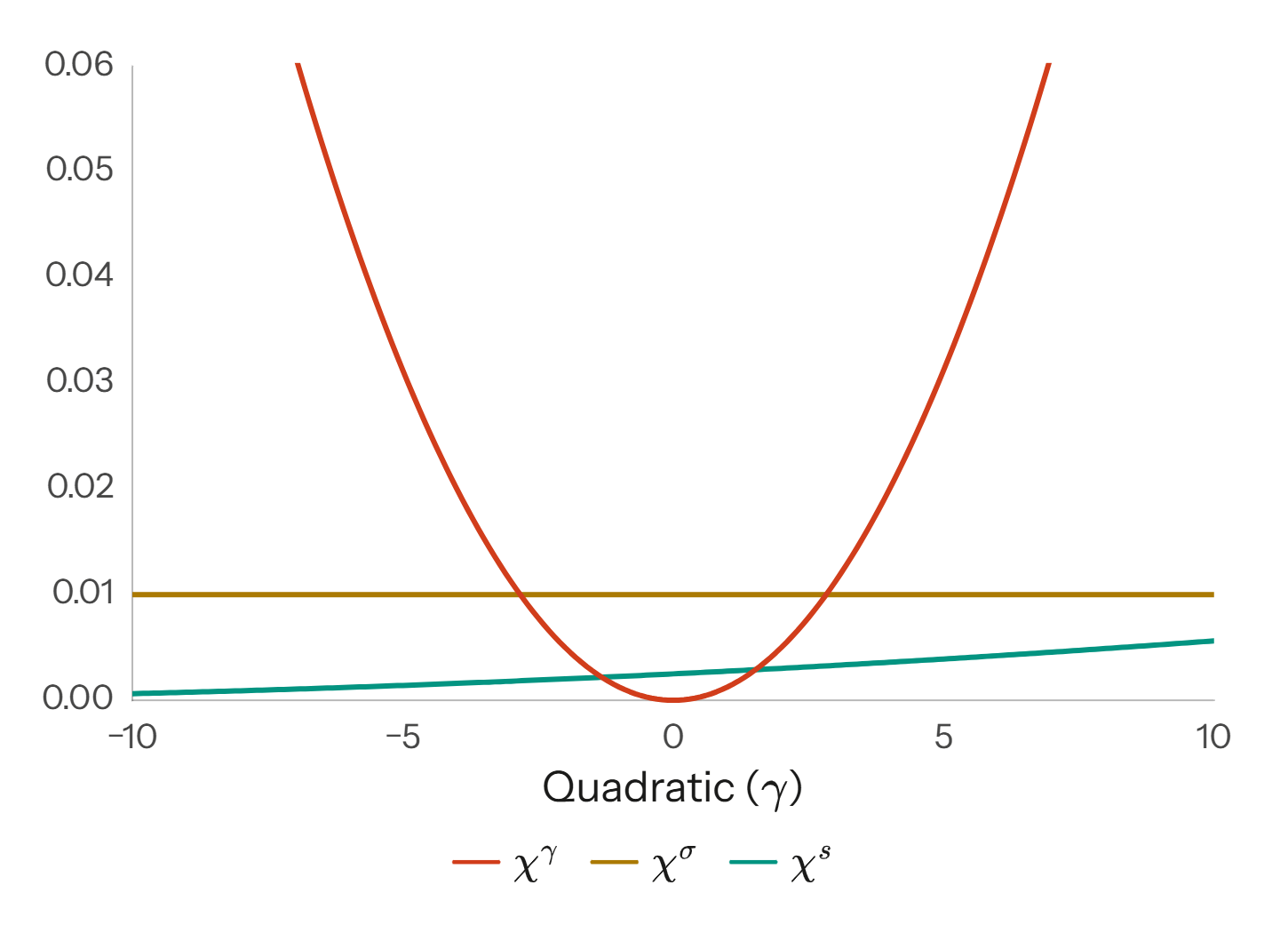}&
\includegraphics[width=0.32\textwidth-\tabcolsep]{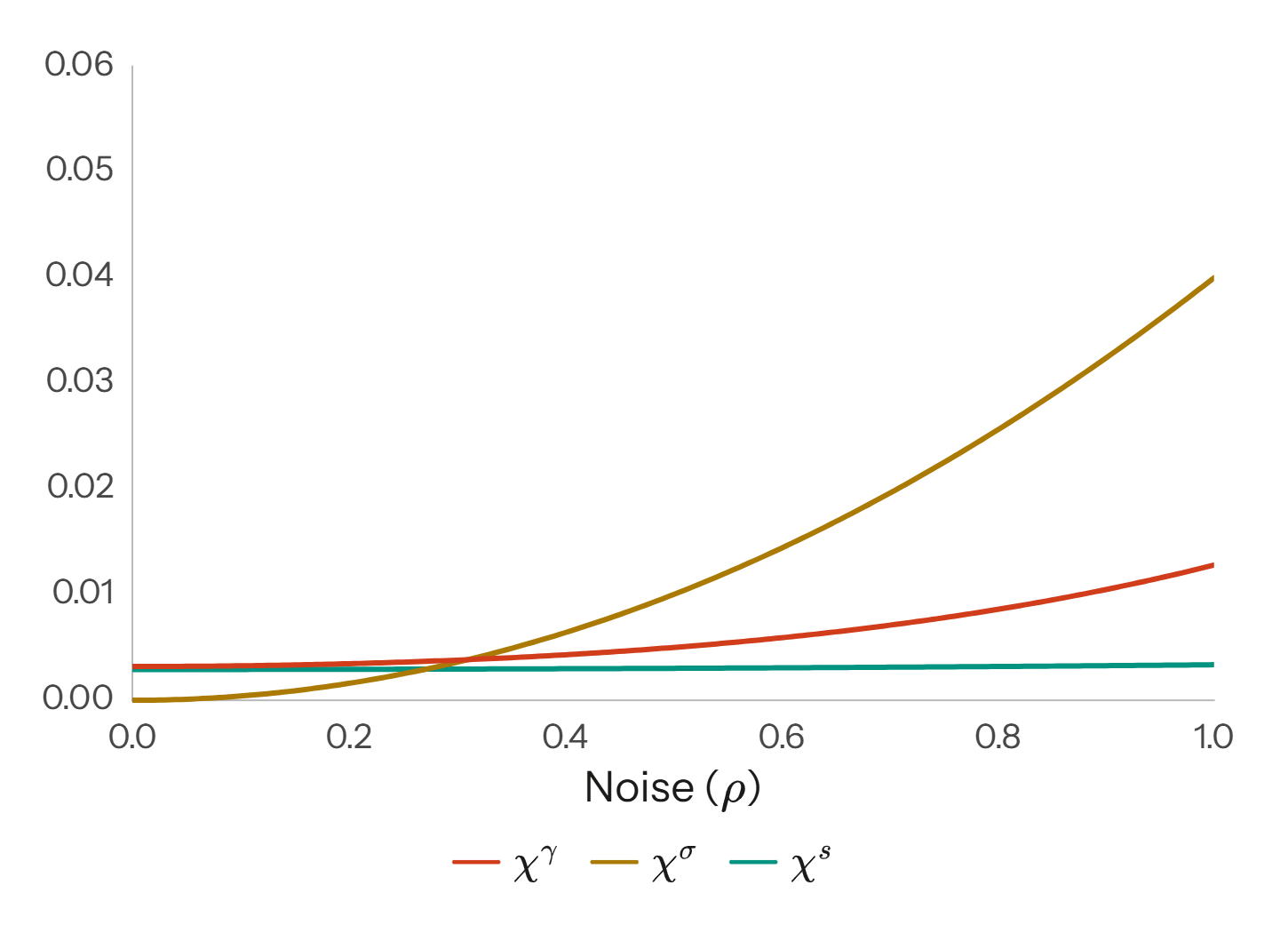}&
\includegraphics[width=0.32\textwidth-\tabcolsep]{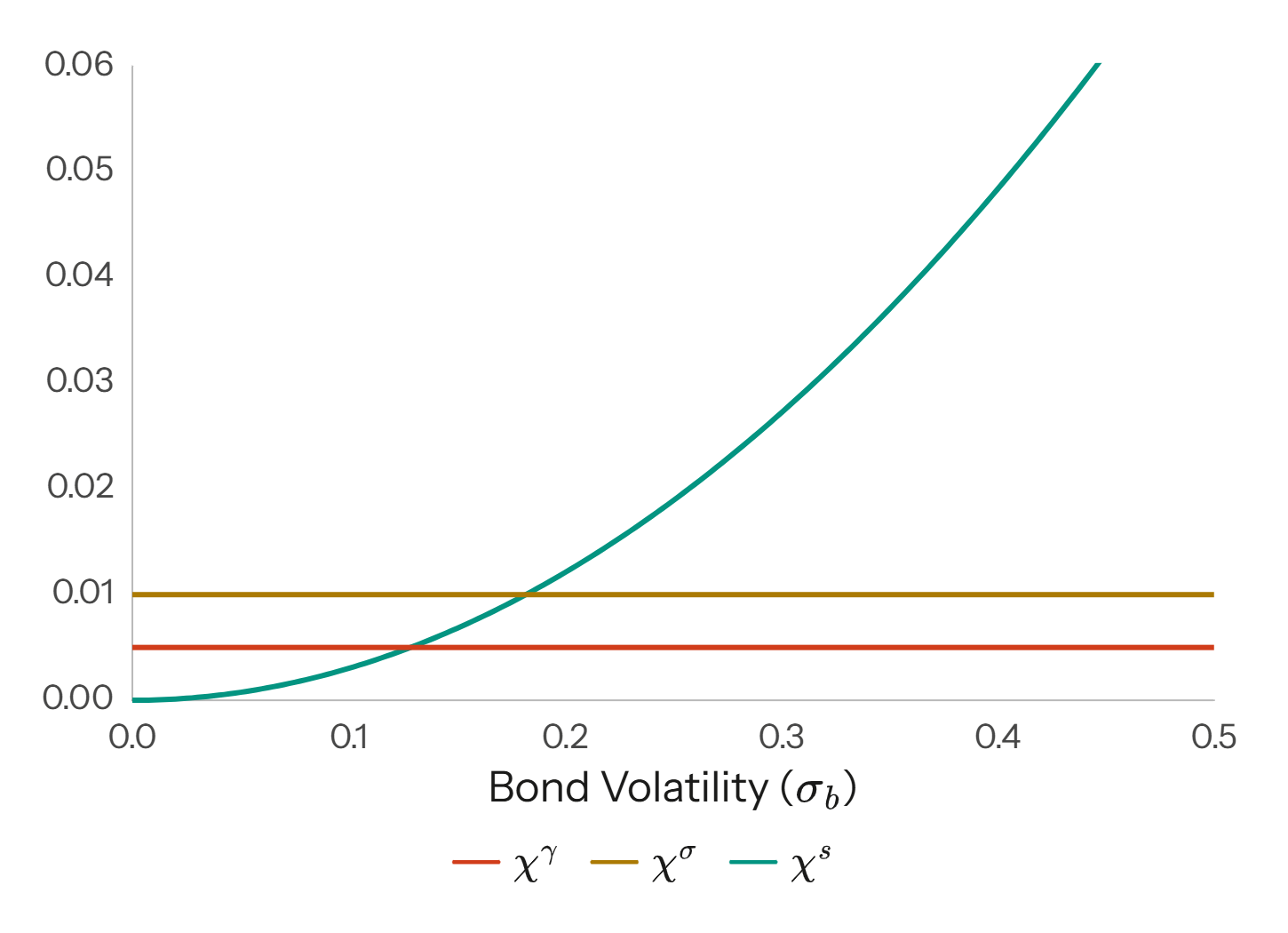}\\
Convexity&Dimensionality&Funding
\end{tabular}
\caption{Impact of the quadratic parameter $\gamma$, noise parameter $\rho$, and bond volatility $\sigma_b$ on the optimal P\&L distribution and incompleteness penalties in the mid case $\alpha\lambda=0$. The top row shows the P\&L distributions for $\gamma$ in the range $-10\le\gamma\le10$ with the $\gamma=0$ case highlighted, for $\rho$ in the range $0\le\rho\le1$ with the $\rho=0$ case highlighted, and for $\sigma_b$ in the range $0\le\sigma_b\le0.5$ with the $\sigma_b=0$ case highlighted. The bottom row demonstrates that the quadratic parameter drives the convexity penalty, the noise parameter drives the dimensionality penalty, and the bond volatility drives the funding penalty.}
\end{figure*}

The strategy for the funded derivative is assessed at the start of the interval by its entropic risk:
\begin{align}
&\frac{1}{\lambda}(\mathbb{E}[\alpha][\omega^\ast Q+\lambda(P-\delta Q-pB)]-\mathbb{E}[\alpha][\omega^\ast Q])= \\
&\qquad\eta+\frac{1}{2}\gamma(1+\rho^2)\sigma_q^2-p-\frac{1}{2}\alpha\lambda(\chi^\gamma+\chi^\sigma+\chi^s+\chi^\delta) \notag
\end{align}
where the penalties for convexity, dimensionality, funding and strategy are:
\begin{align}
\chi^\gamma&=\frac{\alpha\lambda\gamma(1+\rho^2)\sigma_q^2-\log[1+\alpha\lambda\gamma(1+\rho^2)\sigma_q^2]}{(\alpha\lambda)^2} \\
\chi^\sigma&=\frac{\epsilon^2\rho^2\sigma_q^2}{1+\alpha\lambda\gamma\rho^2\sigma_q^2} \notag \\
\chi^s&=\sigma_b^2p^2 \notag \\
\chi^\delta&=\frac{1+\alpha\lambda\gamma\rho^2\sigma_q^2}{1+\alpha\lambda\gamma(1+\rho^2)\sigma_q^2}\sigma_q^2(\delta-\delta^\ast)^2 \notag
\end{align}
and the hedge ratio that cancels the strategy penalty is:
\begin{equation}
\delta^\ast=\frac{\epsilon}{1+\alpha\lambda\gamma\rho^2\sigma_q^2}
\end{equation}
These expressions are valid for $\alpha\lambda\gamma(1+\rho^2)\sigma_q^2>-1$, which includes an open interval containing $\lambda=0$. The strategy for the funded derivative is assessed at the end of the interval by its P\&L:
\begin{align}
&P-\delta Q-pB= \\
&\qquad\eta+\frac{1}{2}\gamma(1+\rho^2)\sigma_q^2-p+(X^\gamma+X^\sigma+X^s+X^\delta) \notag
\end{align}
where the attributions for convexity, dimensionality, funding and strategy are:
\begin{align}
X^\gamma&=P-\eta-\frac{1}{2}\gamma(1+\rho^2)\sigma_q^2-\epsilon(Q+\rho\sigma_qZ) \\
X^\sigma&=\epsilon(Q+\rho\sigma_qZ)-\delta^\ast Q \notag \\
X^s&=-p(B-1) \notag \\
X^\delta&=-(\delta-\delta^\ast)Q \notag
\end{align}

Indifference dictates that the maximum entropy-ad\-just\-ed mean, attained when the strategy penalty is eliminated, matches the maximum that can be attained in the underlying market. This derives the initial derivative price:
\begin{equation}
p^\ast=\frac{2\zeta}{1+\sqrt{1+2\alpha\lambda\zeta\sigma_b^2}}
\end{equation}
where:
\begin{equation}
\zeta=\eta+\frac{1}{2}\gamma(1+\rho^2)\sigma_q^2-\frac{1}{2}\alpha\lambda(\chi^\gamma+\chi^\sigma)
\end{equation}
is the contribution to the maximum entropy-adjusted mean that does not depend on the initial price. This expression is valid for $\alpha\lambda\zeta\sigma_b^2\ge-1/2$, which includes an open interval containing $\lambda=0$.

The examples in the figures use the base parameters $\sigma_q=0.2$, $\sigma_b=0.1$, $\eta=0.5$, $\epsilon=1$, $\gamma=2$, $\rho=0.5$, $\alpha=1$, $\lambda=0$, $p=p^\ast$ and $\delta=\delta^\ast$ unless otherwise stated. At boundary parameters where the expressions are undefined, they are substituted with their finite limits. The graphs were generated using the {\em Quadratic Gaussian Entropy} application available at \href{https://mccloudresearch.com}{McCloudResearch.com}.

\section*{Gaussian processes}

The previous sections tackled the challenge of pricing and hedging over a single hedging interval when the final prices are quadratic Gaussian variables. In this section, the case of multiple hedging intervals is considered, using a Gaussian process to generate the normal factors on each interval. This framework includes many popular models, such as multifactor fractional Ornstein-Uhlenbeck processes \cite{Cheridito2003,Gatheral2018}.

The dynamic hedging schedule is specified by an increasing sequence $t_0<t_1<\cdots<t_n$ of times, where $n$ is the number of hedging intervals. On the schedule, the Gaussian process comprises the known initial vector $x_0$ and the vectors $x_1,\ldots,x_n$ that have joint normal distribution with mean and positive definite covariance:
\begin{align}
\beta_i&:=\mathbb{E}[x_i] \\
\nu_{ij}&:=\mathbb{V}[x_i,x_j] \notag
\end{align}
The process is represented in terms of independent standard normal vectors $z_1,\ldots,z_n$ using a Cholesky decomposition that respects the filtration:
\begin{equation}
x_i=\beta_i+\sum\nolimits_{k=1}^i\varsigma_{ik}z_k
\end{equation}
The coefficient matrices are generated by the scheme:
\begin{align}
\varsigma_{ij}&=(\nu_{ij}-\sum\nolimits_{k=1}^{j-1}\varsigma_{ik}\varsigma_{jk}^\dagger)\varsigma_{jj}^{-\dagger} \\
\varsigma_{ii}\varsigma_{ii}^\dagger&=\nu_{ii}-\sum\nolimits_{k=1}^{i-1}\varsigma_{ik}\varsigma_{ik}^\dagger \notag
\end{align}
iterating for $i=1,\ldots,n$ and $j=1,\ldots,i-1$, with the second line resolved by Cholesky decomposition at the $i$th interval.

In general, the Gaussian process is non-Markovian. At the $i$th interval, the factor increment $dx_i:=x_i-x_{i-1}$ conditional on the earlier factor path is normal. Furthermore, the path could provide information that partially determines the increment, and the conditional covariance can only increase as path information is removed. The increment decomposes as a sum of independent contributions \cite{Anderson2003}:
\begin{align}
&dx_i=(dx_i-\mathbb{E}[dx_i|x_1,\ldots,x_{i-1}]) \\
&\quad+\sum\nolimits_{j=2}^{i-1}(\mathbb{E}[dx_i|x_{j-1},\ldots,x_{i-1}]-\mathbb{E}[dx_i|x_j,\ldots,x_{i-1}]) \notag \\
&\quad+\mathbb{E}[dx_i|x_{i-1}] \notag
\end{align}
The first term is the error in the best prediction of the increment obtained from the entire path. The subsequent terms add to this error as factors are removed from the start of the path, and can increase the uncertainty.

The model is now framed in a way that accommodates the quadratic Gaussian analysis. This applies when the underlying prices are linear and the derivative price is quadratic in the factor increment, so make this assumption at the $i$th interval:
\begin{align}
q_i={}&\adja{\mu}_i+\adja{\sigma}_i\,dx_i \\
p_i={}&\adja{\eta}_i+\adja{\epsilon}_i\cdot dx_i+\frac{1}{2}dx_i\cdot\adja{\gamma}_i\,dx_i \notag
\end{align}
where the coefficients are known at the start of the interval. In these partial expansions, the underlying prices retain only the first order terms and the derivative price retains only the first and second order terms in the factor increment, which has mean and covariance conditional on the earlier factors:
\begin{align}
\bar{x}_i:={}&\mathbb{E}[dx_i|x_1,\ldots,x_{i-1}] \\
={}&(\beta_i-\beta_{i-1})+\sum\nolimits_{k=1}^{i-1}(\varsigma_{ik}-\varsigma_{i-1\,k})z_k \notag \\
\bar{\nu}_i:={}&\mathbb{V}[dx_i|x_1,\ldots,x_{i-1}] \notag \\
={}&\varsigma_{ii}\varsigma_{ii}^\dagger \notag
\end{align}
Provided that the fundability, independence and discounting conditions hold at the start of the interval, the previous expressions for the quadratic Gaussian model translate to this case with modifications to account for the conditional mean and covariance of the factor increment. The optimal portfolio is derived from the linear Gaussian expression for the underlying price:
\begin{equation}
\phi_i=(\adja{\sigma}_i\bar{\nu}_i\adja{\sigma}_i^\dagger)^{-1}(\adja{\mu}_i+\adja{\sigma}_i\bar{x}_i-(1+r_i)q_{i-1})
\end{equation}
where the funding rate $r_i$ enforces the self-funding condition $\phi_i\cdot q_{i-1}=0$. Define the coefficients adjusted for measure and convexity:
\begin{align}
\adjb{\sigma}_i={}&\adja{\sigma}_i(\bar{\nu}_i^{-1}+\alpha\lambda\adja{\gamma}_i)^{-1/2}\bar{\nu}_i^{-1/2} \\
\adjb{\eta}_i={}&\adja{\eta}_i-\adja{\epsilon}_i\cdot(\bar{\nu}_i\adja{\sigma}_i^\dagger\phi_i-\bar{x}_i) \notag \\
&+\frac{1}{2}(\bar{\nu}_i\adja{\sigma}_i^\dagger\phi_i-\bar{x}_i)\cdot\adja{\gamma}_i(\bar{\nu}_i\adja{\sigma}_i^\dagger\phi_i-\bar{x}_i)+\frac{1}{2}\tr[\bar{\nu}_i\adja{\gamma}_i] \notag \\
\adjb{\epsilon}_i={}&\bar{\nu}_i^{-1/2}(\bar{\nu}_i^{-1}+\alpha\lambda\adja{\gamma}_i)^{-1/2}(\adja{\epsilon}_i-\adja{\gamma}_i(\bar{\nu}_i\adja{\sigma}_i^\dagger\phi_i-\bar{x}_i)) \notag
\end{align}
The hedge portfolio is derived from the quadratic Gaussian expression for the derivative price:
\begin{align}
\delta_i^\sigma&=(\adjb{\sigma}_i\bar{\nu}_i\adjb{\sigma}_i^\dagger)^{-1}\adjb{\sigma}_i\bar{\nu}_i\adjb{\epsilon}_i \\
\delta_i^s&=(\adjb{\sigma}_i\bar{\nu}_i\adjb{\sigma}_i^\dagger)^{-1}q_{i-1} \notag \\
\delta_i^\ast&=\delta_i^\sigma+s_i\delta_i^s \notag
\end{align}
where the funding spread $s_i$ enforces the self-funding condition $\delta_i^\ast\cdot q_{i-1}=p_{i-1}$. The convexity, dimensionality and funding penalties are:
\begin{align}
\chi_i^\gamma&=\tr\!\left[\frac{\alpha\lambda\bar{\nu}_i\adja{\gamma}_i-\log[1+\alpha\lambda\bar{\nu}_i\adja{\gamma}_i]}{(\alpha\lambda)^2}\right] \\
\chi_i^\sigma&=(\adjb{\epsilon}_i-\adjb{\sigma}_i^\dagger\delta_i^\sigma)\cdot\bar{\nu}_i(\adjb{\epsilon}_i-\adjb{\sigma}_i^\dagger\delta_i^\sigma) \notag \\
\chi_i^s&=s_i^2\delta_i^s\cdot q_{i-1} \notag
\end{align}
and the indifference price for the derivative is:
\begin{align}
&p_{i-1}^\ast= \\
&\qquad\frac{1}{1+r_i}\left(\adjb{\eta}_i-\frac{1}{2}\alpha\lambda(\chi_i^\gamma+\chi_i^\sigma+{s_i^\ast}^2\delta_i^s\cdot q_{i-1})\right) \notag
\end{align}
where the indifference funding spread $s_i^\ast$ is tuned to eliminate the incremental entropy-adjusted mean arising from the introduction of the derivative.

\begin{figure*}[!t]
\begin{alignat*}{2}
&\text{Value} &\qquad\qquad &\text{Power} \\
r_i={}&\frac{q_{i-1}\cdot(\adja{\sigma}_i\bar{\nu}_i\adja{\sigma}_i^\dagger)^{-1}(\adja{\mu}_i+\adja{\sigma}_i\bar{x}_i-q_{i-1})}{q_{i-1}\cdot(\adja{\sigma}_i\bar{\nu}_i\adja{\sigma}_i^\dagger)^{-1}q_{i-1}} & &\min[1,n_x] \\
\phi_i={}&(\adja{\sigma}_i\bar{\nu}_i\adja{\sigma}_i^\dagger)^{-1}(\adja{\mu}_i+\adja{\sigma}_i\bar{x}_i-(1+r_i)q_{i-1}) & &\min[1,n_x]-n_\nu \\
\delta_i^\sigma={}&(\adjb{\sigma}_i\bar{\nu}_i\adjb{\sigma}_i^\dagger)^{-1}\adjb{\sigma}_i\bar{\nu}_i\adjb{\epsilon}_i & &0 \\
\delta_i^s={}&(\adjb{\sigma}_i\bar{\nu}_i\adjb{\sigma}_i^\dagger)^{-1}q_{i-1} & &{-}n_\nu \\
s_i={}&\frac{p_{i-1}-\delta_i^\sigma\cdot q_{i-1}}{\delta_i^s\cdot q_{i-1}} & &n_\nu \\
\chi_i^\gamma={}&\tr\!\left[\frac{\alpha\lambda\bar{\nu}_i\adja{\gamma}_i-\log[1+\alpha\lambda\bar{\nu}_i\adja{\gamma}_i]}{(\alpha\lambda)^2}\right] & &2n_\nu \\
\chi_i^\sigma={}&(\adjb{\epsilon}_i-\adjb{\sigma}_i^\dagger\delta_i^\sigma)\cdot\bar{\nu}_i(\adjb{\epsilon}_i-\adjb{\sigma}_i^\dagger\delta_i^\sigma) & &n_\nu \\
\chi_i^s={}&s_i^2\delta_i^s\cdot q_{i-1} & &n_\nu
\end{alignat*}
\caption{This table shows the funding rate and optimal portfolio, the hedging leg, funding leg and funding spread, and the convexity, dimensionality and funding penalties for the quadratic Gaussian model at the $i$th interval. In the continuous hedging limit $dt_i\to0$, these expressions are of order $dt_i$ raised to the corresponding power. The powers are expressed in terms of the powers $n_x$ for the conditional mean and $n_\nu$ for the conditional covariance of the factor increment $dx_i$.\label{fig:limits}}
\end{figure*}

Now consider the behaviour in the continuous hedging limit $dt_i:=t_i-t_{i-1}\to0$. Suppose that the coefficients of the underlying and derivative prices satisfy:
\begin{align}
\adja{\mu}_i-q_{i-1}&=O[dt_i] \\
\adja{\sigma}_i&=O[1] \notag \\
\adja{\eta}_i&=O[1] \notag \\
\adja{\epsilon}_i&=O[1] \notag \\
\adja{\gamma}_i&=O[1] \notag
\end{align}
The limits are then determined by the Gaussian process. Assume the conditional mean and covariance have limit behaviour:
\begin{align}
\bar{x}_i&=O[dt_i^{n_x}] \\
\bar{\nu}_i&=O[dt_i^{n_\nu}] \notag
\end{align}
with non-negative powers $n_x$ for the conditional mean and $n_\nu$ for the conditional covariance. To be precise, the assumption here is that the norm of the vector and all the eigenvalues of the matrix have the specified limit behaviour.

The non-zero eigenvalues of $\bar{\nu}_i\adja{\gamma}_i$ have power $n_\nu$, and this impacts the domain of validity of the quadratic Gaussian theorem for $\alpha\lambda$. When $n_\nu=0$, the domain includes a fixed open interval containing $0$. When $n_\nu>0$, the limit behaves like the small-notional limit: the finite bounds of the domain expand to $\infty$ and the domain tends to the full line. The expected change:
\begin{equation}
(\adja{\mu}_i-q_{i-1})+\adja{\sigma}_i\bar{x}_i
\end{equation}
in the underlying prices over the interval combines power $1$ and $n_x$ contributions, and has power $\min[1,n_x]$. The results from the quadratic Gaussian model then have the values and powers given in \autoref{fig:limits}. Two interesting cases are included in this framing.
\begin{description}
\item[Gaussian event]If there is a Gaussian event immediately after the start of the interval, the management of the event risk rests entirely on the initial hedge and is not improved by dynamic hedging. The limit behaviour, which in this case corresponds to the powers $n_x=0$ and $n_\nu=0$, effectively reduces to entropic risk optimisation for the Gaussian event distribution.
\item[Fractional Ornstein-Uhlenbeck]Take the Gaussian process to be a discrete fractional Ornstein-Uhlenbeck process satisfying:
\begin{equation}
dx_i=\kappa(\beta-x_{i-1})\,dt_i+\varsigma\,dw_i
\end{equation}
on a uniform schedule starting at $t_0=0$ with fixed interval $dt_i=dt$. The model is parametrised by the reversion level vector $\beta$, the reversion rate matrix $\kappa$ and the normal volatility matrix $\varsigma$. The noise $w_i$ is a vector of independent fractional Brownian motions with Hurst exponent $H$ in the range $0<H<1$, initialised at $w_0=0$, which has mean and covariance \cite{Mandelbrot1968}:
\begin{align}
\mathbb{E}[w_i]={}&0 \\
\mathbb{V}[w_i,w_j]={}&\frac{1}{2}(t_i^{2H}+t_j^{2H}-|t_i-t_j|^{2H})=f_H[i,j]\,dt^{2H} \notag
\end{align}
where:
\begin{equation}
f_H[i,j]:=\frac{1}{2}(i^{2H}+j^{2H}-|i-j|^{2H})
\end{equation}
The noise is represented in terms of the independent standard normal vectors $z_1,\ldots,z_n$ using a Cholesky decomposition that respects the filtration:
\begin{equation}
w_i=\sum\nolimits_{k=1}^i\xi_{ik}z_k\,dt^H
\end{equation}
where the scalar coefficients are generated by the scheme:
\begin{align}
\xi_{ij}&=(f_H[i,j]-\sum\nolimits_{k=1}^{j-1}\xi_{ik}\xi_{jk})/\xi_{jj} \\
\xi_{ii}^2&=f_H[i,i]-\sum\nolimits_{k=1}^{i-1}\xi_{ik}^2 \notag
\end{align}
iterating for $i=1,\ldots,n$ and $j=1,\ldots,i-1$. The mean and Cholesky coefficients of the factors are then:
\begin{align}
\beta_i&=\beta+(1-\kappa\,dt)^i(x_0-\beta) \\
\varsigma_{ik}&=(\xi_{ik}-\kappa\,dt\sum\nolimits_{l=k}^{i-1}(1-\kappa\,dt)^{i-l-1}\xi_{lk})\varsigma\,dt^H \notag
\end{align}
and the factor increment has conditional mean and covariance:
\begin{align}
\bar{x}_i&=\kappa(\beta-x_{i-1})\,dt+\varsigma\sum\nolimits_{k=1}^{i-1}(\xi_{ik}-\xi_{i-1\,k})z_k\,dt^H \\
\bar{\nu}_i&=\xi_{ii}^2\,\varsigma\varsigma^\dagger\,dt^{2H} \notag
\end{align}
Note that the diagonal coefficient is bounded for $i>1$:
\begin{equation}
\frac{\Gamma[2H+1]\sin[\pi H]}{2H\,\Gamma[H+1/2]^2}\le\xi_{ii}^2\le1-(2^{2H-1}-1)^2
\end{equation}
The lower bound follows from the moving-average representation of fractional Brownian motion and its normalisation \cite{Mandelbrot1968}, and the upper bound follows from the correlation $2^{2H-1}-1$ of consecutive increments. The coefficients in the conditional mean are also bounded, as:
\begin{equation}
\sum\nolimits_{k=1}^{i-1}(\xi_{ik}-\xi_{i-1\,k})^2=1-\xi_{ii}^2
\end{equation}
and this sum is bounded away from zero when $H\ne1/2$. The limit behaviour of the conditional mean and covariance can thus be read from the powers of $dt$ in these expressions.

With $H\ne1/2$, the process is non-Markovian and the path of the factors provides information beyond the factor at the start of the interval. From the expressions above, the powers for $i>1$ are:
\begin{align}
n_x&=H \\
n_\nu&=2H \notag
\end{align}

Entropic risk optimisation in the underlying market generates the funding rate $r_i$ and optimal portfolio $\phi_i$ with limit behaviour:
\begin{align}
r_i&=O[dt_i^H] \\
\phi_i&=O[dt_i^{-H}] \notag
\end{align}
The continuous hedging limit thus generates two undesirable features. Firstly, the funding rate per unit time has power $-(1-H)$, which means it diverges in the limit. Secondly, the maximum entropy-adjusted mean per unit time extracted by the optimal strategy has power $-1$, which means it also diverges in the limit, leading to infinite performance. This divergence in the performance of the underlying market does not occur in practice, as continuous hedging is not feasible and transaction costs negatively impact the performance.

For the derivative, the mid indifference price $p_{i-1}^\ast$ generated at $\alpha\lambda=0$ expands as:
\begin{equation}
p_{i-1}^\ast-\adja{\eta}_i=-r_i\adja{\eta}_i+\adja{\epsilon}_i\cdot(\bar{x}_i-\bar{\nu}_i\adja{\sigma}_i^\dagger\phi_i)+O[dt_i^{2H}]
\end{equation}
where the expression is $O[dt_i^H]$, and the penalties satisfy:
\begin{align}
\chi_i^\gamma&=O[dt_i^{4H}] \\
\chi_i^\sigma&=O[dt_i^{2H}] \notag \\
\chi_i^s&=O[dt_i^{2H}] \notag
\end{align}
The penalties per unit time have powers $4H-1$ for the convexity penalty and $2H-1$ for the dimensionality and funding penalties. In the limit, the convexity penalty per unit time diverges when $H<1/4$ and tends to zero when $H>1/4$, and the dimensionality and funding penalties per unit time diverge when $H<1/2$ and tend to zero when $H>1/2$. Because of these penalties, in a market with dimensionality or funding incompleteness, the entropic bid-offer spread per unit time diverges in the rough case $H<1/2$ and tends to zero in the persistent case $H>1/2$.

The case $H=1/2$ is special, as the fractional Ornstein-Uhlenbeck process is Markovian in this case. The noise is a random walk, with $f_{1/2}[i,j]=\min[i,j]$, and the coefficients are $\xi_{ik}=1$. The expressions for the conditional mean and covariance then simplify to:
\begin{align}
\bar{x}_i&=\kappa(\beta-x_{i-1})\,dt \\
\bar{\nu}_i&=\varsigma\varsigma^\dagger\,dt \notag
\end{align}
The earlier factors contribute no information beyond the factor at the start of the interval, and the powers are:
\begin{align}
n_x&=1 \\
n_\nu&=1 \notag
\end{align}

Entropic risk optimisation in the underlying market generates the funding rate $r_i$ and optimal portfolio $\phi_i$ with limit behaviour:
\begin{align}
r_i&=O[dt_i] \\
\phi_i&=O[1] \notag
\end{align}
The funding rate and maximum entropy-adjusted mean per unit time thus both have power $0$ and converge in the limit.

For the derivative, the mid indifference price $p_{i-1}^\ast$ generated at $\alpha\lambda=0$ expands as:
\begin{align}
p_{i-1}^\ast-\adja{\eta}_i={}&-r_i\adja{\eta}_i+\adja{\epsilon}_i\cdot(\bar{x}_i-\bar{\nu}_i\adja{\sigma}_i^\dagger\phi_i) \\
&+\frac{1}{2}\tr[\bar{\nu}_i\adja{\gamma}_i]+O[dt_i^2] \notag
\end{align}
where the expression is $O[dt_i]$, and the penalties satisfy:
\begin{align}
\chi_i^\gamma&=O[dt_i^2] \\
\chi_i^\sigma&=O[dt_i] \notag \\
\chi_i^s&=O[dt_i] \notag
\end{align}
The penalties per unit time have powers $1$ for the convexity penalty and $0$ for the dimensionality and funding penalties. In the limit, the convexity penalty per unit time tends to zero, and the dimensionality and funding penalties per unit time both converge. Continuous hedging eliminates the convexity penalty by using dynamic linear replication to match the convexity \cite{Boyle1980,Bertsimas2000}, while the dimensionality and funding penalties persist.
\end{description}

\section*{Conclusion}

The quadratic form assumed for the derivative price is a limitation of the quadratic Gaussian model, and its most useful application likely falls in the domain of approximations to more general models. Using the convexity profile for the derivative, the approach estimates the penalties that mitigate model risk and the attribution of these risks in realised P\&L. In this recurring exercise, the exact solutions for the single interval are applied over each hedging interval to decompose the risk and attribute the P\&L into convexity, dimensionality, funding and strategy contributions.

Extending this idea, the solutions for multiple intervals provide a way to analyse the risk mitigants accumulated over the hedging schedule. The analysis of the previous section is consistent with results in the literature for fractional processes in finance. The fundamental theorem of asset pricing of Delbaen and Schachermayer \cite{Delbaen1994} shows that a price process excludes arbitrage in the continuous hedging limit only if it is a semimartingale. With Hurst exponent $H\ne1/2$, fractional Brownian motion is not a semimartingale, and Rogers \cite{Rogers1997} and Cheridito \cite{Cheridito2003b} construct explicit arbitrages for this scenario. These arbitrages are excluded by imposing minimum hedging intervals \cite{Cheridito2003b} or incorporating transaction costs \cite{Guasoni2006}. The model is better behaved when $H=1/2$, and this recovers the semimartingale results: no arbitrage and finite incompleteness penalties per unit time even in the continuous hedging limit. Entropic pricing measures these effects in terms of the entropy-adjusted mean of the strategy and its limit behaviour.

The asymptotic analysis leaves three questions for future work. The powers say how each quantity scales but not how fine the schedule must be before the leading powers dominate, and for the fractional process the competing contributions can remain comparable over many rebalancing intervals. Convergence bounds could be derived using the explicit expressions in the quadratic Gaussian model. Transaction costs, which prevent the performance divergence in practice and exclude the arbitrages of the fractional process, have not been incorporated into the optimisation. This could extend the analysis of exponential utility hedging with transaction costs in \cite{Hodges1989,Whalley1997} to include the impact on risk penalties and the P\&L decomposition. Finally, the penalties and attributions have been determined at each interval, but they have not been accumulated over the full schedule to determine the model risk buffer and P\&L explanation over the life of a trade.

The quadratic Gaussian solution for the hedge strategy could be used in data-driven regressions, optimising the hedge in the parameter space of the model against real data, to provide effective dynamic strategies that track market evolution. In this approach, the description of the Gaussian process is rationalised by using a stochastic model with a smaller set of static parameters. For example, the discrete fractional Ornstein-Uhlenbeck process is parametrised by the Hurst exponent $H$, the reversion level vector $\beta$, the reversion rate matrix $\kappa$ and the normal volatility matrix $\varsigma$, and the hedge ratios are computed exactly from the factor path using the scheme of the previous section. This model encapsulates a rich range of dynamics. Its effectiveness as the regression form for deep hedging, including whether the penalties predict the realised hedging error, will be tested empirically in future work.

\section*{Literature review}

In a complete market the derivative is replicated exactly, as first recognised by Black and Scholes \cite{Black1973} and Merton \cite{Merton1973}, and price is a martingale in a measure equivalent to the economic measure \cite{Harrison1981}. When the market is incomplete this measure is no longer unique. To resolve the ambiguity, minimum relative entropy methods are introduced in \cite{Stutzer1996,Buchen1996,Avellaneda1997,Avellaneda1998}. Frittelli \cite{Frittelli2000} proposes the minimal entropy martingale measure for pricing in incomplete markets and links it to the maximisation of exponential utility. The measure is studied further in \cite{Grandits2002}, its duality with exponential hedging is established in \cite{Delbaen2002}, and Rouge and El Karoui \cite{Rouge2000} relate utility-based prices to entropy. Relative entropy also measures model risk \cite{Arrieta2022}: Hansen and Sargent \cite{Hansen2008} develop robust decisions under an entropy penalty on model misspecification, and Glasserman and Xu \cite{Glasserman2014} bound risk over the alternative models within a relative-entropy distance of a reference model. Cont \cite{Cont2006} quantifies the impact of model uncertainty on derivative prices.

Indifference pricing originates with Hodges and Neuberger \cite{Hodges1989}. The approach is developed under exponential utility in \cite{Becherer2003,Musiela2004,Musiela2004b,Mania2005}, including the basis risk of claims on non-traded assets \cite{Henderson2002,Davis2006} that corresponds to the dimensionality contribution of this article, with contributions collected in \cite{Carmona2009}. In parallel, coherent \cite{Artzner1999} and convex \cite{Follmer2002,Frittelli2002} risk metrics formalise risk-adjusted valuation, with conditional and dynamic versions in \cite{Detlefsen2005}, the entropic risk metric is studied in \cite{Barrieu2005,Follmer2011} and characterised as the law-invariant time-consistent risk metric in \cite{Kupper2009}, and pricing and hedging by the optimisation of convex risk metrics is developed in \cite{Xu2006,Kloppel2007,Ilhan2009}. Quadratic hedging, which minimises the variance of the hedging error \cite{Follmer1991,Duffie1991,Schweizer1995}, is the natural comparison at zero risk aversion \cite{Kallsen2011}, where the penalties of this article become variances. Funding costs enter derivative prices as valuation adjustments in \cite{Piterbarg2010,Burgard2011,McCloud2013a,McCloud2013b}; here the funding rate and spread emerge instead from the underlying market and indifference.

The entropic pricing principle applied in this article is developed in \cite{McCloud2025a,McCloud2025b}: price is derived from a log-martingale condition in the measure with minimum entropy relative to economic expectations, and the risk aversion of the investor is the parameter of the entropic risk metric. Its exact solution here rests on the algebra of quadratic Gaussian variables, with its expectation and measure groups, developed in \cite{McCloud2010,McCloud2012} for interest rate and cross-asset derivatives. The same exponential-quadratic Gaussian integrals underlie the quadratic term structure models of \cite{Ahn2002,Leippold2002} with extensions to jump-diffusion in \cite{Cheng2007}. The Girsanov relationship of the quadratic Gaussian theorem takes its name from the change of measure of Girsanov \cite{Girsanov1960}; for an exponent that is quadratic in the factors, the change of measure adjusts their covariance as well as their mean. The theorem's decomposition into normal and noncentral chi-squared variables is that of a quadratic form in normal variables, whose distribution is computed by Imhof \cite{Imhof1961} and more recently by Das \cite{Das2025}. The Ornstein-Uhlenbeck process, an example of a Gaussian process, originates in the theory of Brownian motion of Uhlenbeck and Ornstein \cite{Uhlenbeck1930} and underlies the term structure model of Vasicek \cite{Vasicek1977}, and its fractional extension \cite{Cheridito2003} underlies models of rough volatility \cite{Gatheral2018}.

Risk optimisation also underpins data-driven methods. Deep hedging \cite{Buehler2019,Buehler2021,Murray2022} utilises machine learning methods to optimise the hedging strategy, and its exponential utility indifference price coincides with the price from entropic risk optimisation.

\section*{Acknowledgements}

AI (Anthropic Claude Opus 5.5) was used to build numerical unit tests and Lean verification for the mathematical claims in this article, and the results can be viewed in the {\em Verification} application available at \href{https://mccloudresearch.com}{McCloudResearch.com}. AI also assisted with corrections, editorial review and literature search. The author takes full responsibility for the content of this article.


\begin{thebibliography}{68}%
\makeatletter
\providecommand \@ifxundefined [1]{%
 \@ifx{#1\undefined}
}%
\providecommand \@ifnum [1]{%
 \ifnum #1\expandafter \@firstoftwo
 \else \expandafter \@secondoftwo
 \fi
}%
\providecommand \@ifx [1]{%
 \ifx #1\expandafter \@firstoftwo
 \else \expandafter \@secondoftwo
 \fi
}%
\providecommand \natexlab [1]{#1}%
\providecommand \enquote  [1]{``#1''}%
\providecommand \bibnamefont  [1]{#1}%
\providecommand \bibfnamefont [1]{#1}%
\providecommand \citenamefont [1]{#1}%
\providecommand \href@noop [0]{\@secondoftwo}%
\providecommand \href [0]{\begingroup \@sanitize@url \@href}%
\providecommand \@href[1]{\@@startlink{#1}\@@href}%
\providecommand \@@href[1]{\endgroup#1\@@endlink}%
\providecommand \@sanitize@url [0]{\catcode `\\12\catcode `\$12\catcode
  `\&12\catcode `\#12\catcode `\^12\catcode `\_12\catcode `\%12\relax}%
\providecommand \@@startlink[1]{}%
\providecommand \@@endlink[0]{}%
\providecommand \url  [0]{\begingroup\@sanitize@url \@url }%
\providecommand \@url [1]{\endgroup\@href {#1}{\urlprefix }}%
\providecommand \urlprefix  [0]{URL }%
\providecommand \Eprint [0]{\href }%
\providecommand \doibase [0]{https://doi.org/}%
\providecommand \selectlanguage [0]{\@gobble}%
\providecommand \bibinfo  [0]{\@secondoftwo}%
\providecommand \bibfield  [0]{\@secondoftwo}%
\providecommand \translation [1]{[#1]}%
\providecommand \BibitemOpen [0]{}%
\providecommand \bibitemStop [0]{}%
\providecommand \bibitemNoStop [0]{.\EOS\space}%
\providecommand \EOS [0]{\spacefactor3000\relax}%
\providecommand \BibitemShut  [1]{\csname bibitem#1\endcsname}%
\let\auto@bib@innerbib\@empty
\bibitem [{\citenamefont {Boyle}\ and\ \citenamefont
  {Emanuel}(1980)}]{Boyle1980}%
  \BibitemOpen
  \bibfield  {author} {\bibinfo {author} {\bibfnamefont {P.~P.}\ \bibnamefont
  {Boyle}}\ and\ \bibinfo {author} {\bibfnamefont {D.}~\bibnamefont
  {Emanuel}},\ }\bibfield  {title} {\bibinfo {title} {Discretely adjusted
  option hedges},\ }\href {https://doi.org/10.1016/0304-405X(80)90003-3}
  {\bibfield  {journal} {\bibinfo  {journal} {Journal of Financial Economics}\
  }\textbf {\bibinfo {volume} {8}},\ \bibinfo {pages} {259} (\bibinfo {year}
  {1980})}\BibitemShut {NoStop}%
\bibitem [{\citenamefont {Bertsimas}\ \emph {et~al.}(2000)\citenamefont
  {Bertsimas}, \citenamefont {Kogan},\ and\ \citenamefont
  {Lo}}]{Bertsimas2000}%
  \BibitemOpen
  \bibfield  {author} {\bibinfo {author} {\bibfnamefont {D.}~\bibnamefont
  {Bertsimas}}, \bibinfo {author} {\bibfnamefont {L.}~\bibnamefont {Kogan}},\
  and\ \bibinfo {author} {\bibfnamefont {A.~W.}\ \bibnamefont {Lo}},\
  }\bibfield  {title} {\bibinfo {title} {When is time continuous?},\ }\href
  {https://doi.org/10.1016/S0304-405X(99)00049-5} {\bibfield  {journal}
  {\bibinfo  {journal} {Journal of Financial Economics}\ }\textbf {\bibinfo
  {volume} {55}},\ \bibinfo {pages} {173} (\bibinfo {year} {2000})}\BibitemShut
  {NoStop}%
\bibitem [{\citenamefont {Leland}(1985)}]{Leland1985}%
  \BibitemOpen
  \bibfield  {author} {\bibinfo {author} {\bibfnamefont {H.~E.}\ \bibnamefont
  {Leland}},\ }\bibfield  {title} {\bibinfo {title} {Option pricing and
  replication with transactions costs},\ }\href
  {https://doi.org/10.1111/j.1540-6261.1985.tb02383.x} {\bibfield  {journal}
  {\bibinfo  {journal} {Journal of Finance}\ }\textbf {\bibinfo {volume}
  {40}},\ \bibinfo {pages} {1283} (\bibinfo {year} {1985})}\BibitemShut
  {NoStop}%
\bibitem [{\citenamefont {Hodges}\ and\ \citenamefont
  {Neuberger}(1989)}]{Hodges1989}%
  \BibitemOpen
  \bibfield  {author} {\bibinfo {author} {\bibfnamefont {S.~D.}\ \bibnamefont
  {Hodges}}\ and\ \bibinfo {author} {\bibfnamefont {A.}~\bibnamefont
  {Neuberger}},\ }\bibfield  {title} {\bibinfo {title} {Optimal replication of
  contingent claims under transaction costs},\ }\href@noop {} {\bibfield
  {journal} {\bibinfo  {journal} {Review of Futures Markets}\ }\textbf
  {\bibinfo {volume} {8}},\ \bibinfo {pages} {222} (\bibinfo {year}
  {1989})}\BibitemShut {NoStop}%
\bibitem [{\citenamefont {Piterbarg}(2010)}]{Piterbarg2010}%
  \BibitemOpen
  \bibfield  {author} {\bibinfo {author} {\bibfnamefont {V.}~\bibnamefont
  {Piterbarg}},\ }\bibfield  {title} {\bibinfo {title} {Funding beyond
  discounting: Collateral agreements and derivatives pricing},\ }\href
  {https://www.risk.net/derivatives/1589992/funding-beyond-discounting-collateral-agreements-and-derivatives-pricing}
  {\bibfield  {journal} {\bibinfo  {journal} {Risk}\ }\textbf {\bibinfo
  {volume} {23}},\ \bibinfo {pages} {97} (\bibinfo {year} {2010})}\BibitemShut
  {NoStop}%
\bibitem [{\citenamefont {Burgard}\ and\ \citenamefont
  {Kjaer}(2011)}]{Burgard2011}%
  \BibitemOpen
  \bibfield  {author} {\bibinfo {author} {\bibfnamefont {C.}~\bibnamefont
  {Burgard}}\ and\ \bibinfo {author} {\bibfnamefont {M.}~\bibnamefont
  {Kjaer}},\ }\bibfield  {title} {\bibinfo {title} {Partial differential
  equation representations of derivatives with bilateral counterparty risk and
  funding costs},\ }\href {https://doi.org/10.21314/JCR.2011.131} {\bibfield
  {journal} {\bibinfo  {journal} {Journal of Credit Risk}\ }\textbf {\bibinfo
  {volume} {7}},\ \bibinfo {pages} {75} (\bibinfo {year} {2011})}\BibitemShut
  {NoStop}%
\bibitem [{\citenamefont {McCloud}(2013{\natexlab{a}})}]{McCloud2013a}%
  \BibitemOpen
  \bibfield  {author} {\bibinfo {author} {\bibfnamefont {P.}~\bibnamefont
  {McCloud}},\ }\bibfield  {title} {\bibinfo {title} {Collateral convexity
  complexity},\ }\href
  {https://www.risk.net/derivatives/2257630/collateral-convexity-complexity}
  {\bibfield  {journal} {\bibinfo  {journal} {Risk}\ }\textbf {\bibinfo
  {volume} {26}},\ \bibinfo {pages} {60} (\bibinfo {year}
  {2013}{\natexlab{a}})}\BibitemShut {NoStop}%
\bibitem [{\citenamefont {McCloud}(2013{\natexlab{b}})}]{McCloud2013b}%
  \BibitemOpen
  \bibfield  {author} {\bibinfo {author} {\bibfnamefont {P.}~\bibnamefont
  {McCloud}},\ }\href {https://doi.org/10.2139/ssrn.2260480} {\bibinfo {title}
  {Collateral volatility}},\ \bibinfo {howpublished} {SSRN} (\bibinfo {year}
  {2013}{\natexlab{b}})\BibitemShut {NoStop}%
\bibitem [{\citenamefont {McCloud}(2025{\natexlab{a}})}]{McCloud2025a}%
  \BibitemOpen
  \bibfield  {author} {\bibinfo {author} {\bibfnamefont {P.}~\bibnamefont
  {McCloud}},\ }\href {https://doi.org/10.48550/arXiv.2502.08613} {\bibinfo
  {title} {The relative entropy of expectation and price}} (\bibinfo {year}
  {2025}{\natexlab{a}}),\ \Eprint {https://arxiv.org/abs/2502.08613}
  {arXiv:2502.08613 [q-fin.MF]} \BibitemShut {NoStop}%
\bibitem [{\citenamefont {McCloud}(2025{\natexlab{b}})}]{McCloud2025b}%
  \BibitemOpen
  \bibfield  {author} {\bibinfo {author} {\bibfnamefont {P.}~\bibnamefont
  {McCloud}},\ }\bibfield  {title} {\bibinfo {title} {The relative entropy of
  expectation and price},\ }\href
  {https://www.risk.net/cutting-edge/7962821/the-relative-entropy-of-expectation-and-price}
  {\bibfield  {journal} {\bibinfo  {journal} {Risk}\ } (\bibinfo {year}
  {2025}{\natexlab{b}})}\BibitemShut {NoStop}%
\bibitem [{\citenamefont {Ahn}\ \emph {et~al.}(2002)\citenamefont {Ahn},
  \citenamefont {Dittmar},\ and\ \citenamefont {Gallant}}]{Ahn2002}%
  \BibitemOpen
  \bibfield  {author} {\bibinfo {author} {\bibfnamefont {D.-H.}\ \bibnamefont
  {Ahn}}, \bibinfo {author} {\bibfnamefont {R.~F.}\ \bibnamefont {Dittmar}},\
  and\ \bibinfo {author} {\bibfnamefont {A.~R.}\ \bibnamefont {Gallant}},\
  }\bibfield  {title} {\bibinfo {title} {Quadratic term structure models:
  Theory and evidence},\ }\href {https://doi.org/10.1093/rfs/15.1.243}
  {\bibfield  {journal} {\bibinfo  {journal} {Review of Financial Studies}\
  }\textbf {\bibinfo {volume} {15}},\ \bibinfo {pages} {243} (\bibinfo {year}
  {2002})}\BibitemShut {NoStop}%
\bibitem [{\citenamefont {Leippold}\ and\ \citenamefont
  {Wu}(2002)}]{Leippold2002}%
  \BibitemOpen
  \bibfield  {author} {\bibinfo {author} {\bibfnamefont {M.}~\bibnamefont
  {Leippold}}\ and\ \bibinfo {author} {\bibfnamefont {L.}~\bibnamefont {Wu}},\
  }\bibfield  {title} {\bibinfo {title} {Asset pricing under the quadratic
  class},\ }\href {https://doi.org/10.2307/3595006} {\bibfield  {journal}
  {\bibinfo  {journal} {Journal of Financial and Quantitative Analysis}\
  }\textbf {\bibinfo {volume} {37}},\ \bibinfo {pages} {271} (\bibinfo {year}
  {2002})}\BibitemShut {NoStop}%
\bibitem [{\citenamefont {McCloud}(2010)}]{McCloud2010}%
  \BibitemOpen
  \bibfield  {author} {\bibinfo {author} {\bibfnamefont {P.}~\bibnamefont
  {McCloud}},\ }\bibfield  {title} {\bibinfo {title} {Putting the smile back on
  the face of derivatives},\ }\href
  {https://www.risk.net/derivatives/1567656/putting-smile-back-face-derivatives}
  {\bibfield  {journal} {\bibinfo  {journal} {Risk}\ }\textbf {\bibinfo
  {volume} {23}} (\bibinfo {year} {2010})}\BibitemShut {NoStop}%
\bibitem [{\citenamefont {McCloud}(2012)}]{McCloud2012}%
  \BibitemOpen
  \bibfield  {author} {\bibinfo {author} {\bibfnamefont {P.}~\bibnamefont
  {McCloud}},\ }\href {https://doi.org/10.2139/ssrn.2007823} {\bibinfo {title}
  {Symmetry methods for the quadratic {Gaussian} {Libor} model}},\ \bibinfo
  {howpublished} {SSRN} (\bibinfo {year} {2012})\BibitemShut {NoStop}%
\bibitem [{\citenamefont {Cheridito}\ \emph {et~al.}(2003)\citenamefont
  {Cheridito}, \citenamefont {Kawaguchi},\ and\ \citenamefont
  {Maejima}}]{Cheridito2003}%
  \BibitemOpen
  \bibfield  {author} {\bibinfo {author} {\bibfnamefont {P.}~\bibnamefont
  {Cheridito}}, \bibinfo {author} {\bibfnamefont {H.}~\bibnamefont
  {Kawaguchi}},\ and\ \bibinfo {author} {\bibfnamefont {M.}~\bibnamefont
  {Maejima}},\ }\bibfield  {title} {\bibinfo {title} {Fractional
  {Ornstein-Uhlenbeck} processes},\ }\href {https://doi.org/10.1214/EJP.v8-125}
  {\bibfield  {journal} {\bibinfo  {journal} {Electronic Journal of
  Probability}\ }\textbf {\bibinfo {volume} {8}},\ \bibinfo {pages} {1}
  (\bibinfo {year} {2003})}\BibitemShut {NoStop}%
\bibitem [{\citenamefont {Gatheral}\ \emph {et~al.}(2018)\citenamefont
  {Gatheral}, \citenamefont {Jaisson},\ and\ \citenamefont
  {Rosenbaum}}]{Gatheral2018}%
  \BibitemOpen
  \bibfield  {author} {\bibinfo {author} {\bibfnamefont {J.}~\bibnamefont
  {Gatheral}}, \bibinfo {author} {\bibfnamefont {T.}~\bibnamefont {Jaisson}},\
  and\ \bibinfo {author} {\bibfnamefont {M.}~\bibnamefont {Rosenbaum}},\
  }\bibfield  {title} {\bibinfo {title} {Volatility is rough},\ }\href
  {https://doi.org/10.1080/14697688.2017.1393551} {\bibfield  {journal}
  {\bibinfo  {journal} {Quantitative Finance}\ }\textbf {\bibinfo {volume}
  {18}},\ \bibinfo {pages} {933} (\bibinfo {year} {2018})}\BibitemShut
  {NoStop}%
\bibitem [{\citenamefont {Buehler}\ \emph {et~al.}(2019)\citenamefont
  {Buehler}, \citenamefont {Gonon}, \citenamefont {Teichmann},\ and\
  \citenamefont {Wood}}]{Buehler2019}%
  \BibitemOpen
  \bibfield  {author} {\bibinfo {author} {\bibfnamefont {H.}~\bibnamefont
  {Buehler}}, \bibinfo {author} {\bibfnamefont {L.}~\bibnamefont {Gonon}},
  \bibinfo {author} {\bibfnamefont {J.}~\bibnamefont {Teichmann}},\ and\
  \bibinfo {author} {\bibfnamefont {B.}~\bibnamefont {Wood}},\ }\bibfield
  {title} {\bibinfo {title} {Deep hedging},\ }\href
  {https://doi.org/10.1080/14697688.2019.1571683} {\bibfield  {journal}
  {\bibinfo  {journal} {Quantitative Finance}\ }\textbf {\bibinfo {volume}
  {19}},\ \bibinfo {pages} {1271} (\bibinfo {year} {2019})}\BibitemShut
  {NoStop}%
\bibitem [{\citenamefont {Buehler}\ \emph {et~al.}(2021)\citenamefont
  {Buehler}, \citenamefont {Murray}, \citenamefont {Pakkanen},\ and\
  \citenamefont {Wood}}]{Buehler2021}%
  \BibitemOpen
  \bibfield  {author} {\bibinfo {author} {\bibfnamefont {H.}~\bibnamefont
  {Buehler}}, \bibinfo {author} {\bibfnamefont {P.}~\bibnamefont {Murray}},
  \bibinfo {author} {\bibfnamefont {M.~S.}\ \bibnamefont {Pakkanen}},\ and\
  \bibinfo {author} {\bibfnamefont {B.}~\bibnamefont {Wood}},\ }\href
  {https://doi.org/10.48550/arXiv.2111.07844} {\bibinfo {title} {Deep hedging:
  Learning to remove the drift under trading frictions with minimal equivalent
  near-martingale measures}} (\bibinfo {year} {2021}),\ \Eprint
  {https://arxiv.org/abs/2111.07844} {arXiv:2111.07844 [q-fin.CP]} \BibitemShut
  {NoStop}%
\bibitem [{\citenamefont {Murray}\ \emph {et~al.}(2022)\citenamefont {Murray},
  \citenamefont {Wood}, \citenamefont {Buehler}, \citenamefont {Wiese},\ and\
  \citenamefont {Pakkanen}}]{Murray2022}%
  \BibitemOpen
  \bibfield  {author} {\bibinfo {author} {\bibfnamefont {P.}~\bibnamefont
  {Murray}}, \bibinfo {author} {\bibfnamefont {B.}~\bibnamefont {Wood}},
  \bibinfo {author} {\bibfnamefont {H.}~\bibnamefont {Buehler}}, \bibinfo
  {author} {\bibfnamefont {M.}~\bibnamefont {Wiese}},\ and\ \bibinfo {author}
  {\bibfnamefont {M.~S.}\ \bibnamefont {Pakkanen}},\ }\bibfield  {title}
  {\bibinfo {title} {Deep hedging: Continuous reinforcement learning for
  hedging of general portfolios across multiple risk aversions},\ }in\ \href
  {https://doi.org/10.1145/3533271.3561731} {\emph {\bibinfo {booktitle}
  {Proceedings of the Third {ACM} International Conference on {AI} in
  Finance}}}\ (\bibinfo  {publisher} {Association for Computing Machinery},\
  \bibinfo {address} {New York, NY},\ \bibinfo {year} {2022})\ pp.\ \bibinfo
  {pages} {361--368}\BibitemShut {NoStop}%
\bibitem [{\citenamefont {Harrison}\ and\ \citenamefont
  {Pliska}(1981)}]{Harrison1981}%
  \BibitemOpen
  \bibfield  {author} {\bibinfo {author} {\bibfnamefont {J.~M.}\ \bibnamefont
  {Harrison}}\ and\ \bibinfo {author} {\bibfnamefont {S.~R.}\ \bibnamefont
  {Pliska}},\ }\bibfield  {title} {\bibinfo {title} {Martingales and stochastic
  integrals in the theory of continuous trading},\ }\href
  {https://doi.org/10.1016/0304-4149(81)90026-0} {\bibfield  {journal}
  {\bibinfo  {journal} {Stochastic Processes and their Applications}\ }\textbf
  {\bibinfo {volume} {11}},\ \bibinfo {pages} {215} (\bibinfo {year}
  {1981})}\BibitemShut {NoStop}%
\bibitem [{\citenamefont {Stutzer}(1996)}]{Stutzer1996}%
  \BibitemOpen
  \bibfield  {author} {\bibinfo {author} {\bibfnamefont {M.}~\bibnamefont
  {Stutzer}},\ }\bibfield  {title} {\bibinfo {title} {A simple nonparametric
  approach to derivative security valuation},\ }\href
  {https://doi.org/10.1111/j.1540-6261.1996.tb05220.x} {\bibfield  {journal}
  {\bibinfo  {journal} {Journal of Finance}\ }\textbf {\bibinfo {volume}
  {51}},\ \bibinfo {pages} {1633} (\bibinfo {year} {1996})}\BibitemShut
  {NoStop}%
\bibitem [{\citenamefont {Frittelli}(2000)}]{Frittelli2000}%
  \BibitemOpen
  \bibfield  {author} {\bibinfo {author} {\bibfnamefont {M.}~\bibnamefont
  {Frittelli}},\ }\bibfield  {title} {\bibinfo {title} {The minimal entropy
  martingale measure and the valuation problem in incomplete markets},\ }\href
  {https://doi.org/10.1111/1467-9965.00079} {\bibfield  {journal} {\bibinfo
  {journal} {Mathematical Finance}\ }\textbf {\bibinfo {volume} {10}},\
  \bibinfo {pages} {39} (\bibinfo {year} {2000})}\BibitemShut {NoStop}%
\bibitem [{\citenamefont {Rouge}\ and\ \citenamefont
  {El~Karoui}(2000)}]{Rouge2000}%
  \BibitemOpen
  \bibfield  {author} {\bibinfo {author} {\bibfnamefont {R.}~\bibnamefont
  {Rouge}}\ and\ \bibinfo {author} {\bibfnamefont {N.}~\bibnamefont
  {El~Karoui}},\ }\bibfield  {title} {\bibinfo {title} {Pricing via utility
  maximization and entropy},\ }\href {https://doi.org/10.1111/1467-9965.00093}
  {\bibfield  {journal} {\bibinfo  {journal} {Mathematical Finance}\ }\textbf
  {\bibinfo {volume} {10}},\ \bibinfo {pages} {259} (\bibinfo {year}
  {2000})}\BibitemShut {NoStop}%
\bibitem [{\citenamefont {Delbaen}\ \emph {et~al.}(2002)\citenamefont
  {Delbaen}, \citenamefont {Grandits}, \citenamefont {Rheinl{\"a}nder},
  \citenamefont {Samperi}, \citenamefont {Schweizer},\ and\ \citenamefont
  {Stricker}}]{Delbaen2002}%
  \BibitemOpen
  \bibfield  {author} {\bibinfo {author} {\bibfnamefont {F.}~\bibnamefont
  {Delbaen}}, \bibinfo {author} {\bibfnamefont {P.}~\bibnamefont {Grandits}},
  \bibinfo {author} {\bibfnamefont {T.}~\bibnamefont {Rheinl{\"a}nder}},
  \bibinfo {author} {\bibfnamefont {D.}~\bibnamefont {Samperi}}, \bibinfo
  {author} {\bibfnamefont {M.}~\bibnamefont {Schweizer}},\ and\ \bibinfo
  {author} {\bibfnamefont {C.}~\bibnamefont {Stricker}},\ }\bibfield  {title}
  {\bibinfo {title} {Exponential hedging and entropic penalties},\ }\href
  {https://doi.org/10.1111/1467-9965.02001} {\bibfield  {journal} {\bibinfo
  {journal} {Mathematical Finance}\ }\textbf {\bibinfo {volume} {12}},\
  \bibinfo {pages} {99} (\bibinfo {year} {2002})}\BibitemShut {NoStop}%
\bibitem [{\citenamefont {F{\"o}llmer}\ and\ \citenamefont
  {Schied}(2002)}]{Follmer2002}%
  \BibitemOpen
  \bibfield  {author} {\bibinfo {author} {\bibfnamefont {H.}~\bibnamefont
  {F{\"o}llmer}}\ and\ \bibinfo {author} {\bibfnamefont {A.}~\bibnamefont
  {Schied}},\ }\bibfield  {title} {\bibinfo {title} {Convex measures of risk
  and trading constraints},\ }\href {https://doi.org/10.1007/s007800200072}
  {\bibfield  {journal} {\bibinfo  {journal} {Finance and Stochastics}\
  }\textbf {\bibinfo {volume} {6}},\ \bibinfo {pages} {429} (\bibinfo {year}
  {2002})}\BibitemShut {NoStop}%
\bibitem [{\citenamefont {Frittelli}\ and\ \citenamefont
  {Rosazza~Gianin}(2002)}]{Frittelli2002}%
  \BibitemOpen
  \bibfield  {author} {\bibinfo {author} {\bibfnamefont {M.}~\bibnamefont
  {Frittelli}}\ and\ \bibinfo {author} {\bibfnamefont {E.}~\bibnamefont
  {Rosazza~Gianin}},\ }\bibfield  {title} {\bibinfo {title} {Putting order in
  risk measures},\ }\href {https://doi.org/10.1016/S0378-4266(02)00270-4}
  {\bibfield  {journal} {\bibinfo  {journal} {Journal of Banking \& Finance}\
  }\textbf {\bibinfo {volume} {26}},\ \bibinfo {pages} {1473} (\bibinfo {year}
  {2002})}\BibitemShut {NoStop}%
\bibitem [{\citenamefont {Barrieu}\ and\ \citenamefont
  {El~Karoui}(2005)}]{Barrieu2005}%
  \BibitemOpen
  \bibfield  {author} {\bibinfo {author} {\bibfnamefont {P.}~\bibnamefont
  {Barrieu}}\ and\ \bibinfo {author} {\bibfnamefont {N.}~\bibnamefont
  {El~Karoui}},\ }\bibfield  {title} {\bibinfo {title} {Inf-convolution of risk
  measures and optimal risk transfer},\ }\href
  {https://doi.org/10.1007/s00780-005-0152-0} {\bibfield  {journal} {\bibinfo
  {journal} {Finance and Stochastics}\ }\textbf {\bibinfo {volume} {9}},\
  \bibinfo {pages} {269} (\bibinfo {year} {2005})}\BibitemShut {NoStop}%
\bibitem [{\citenamefont {Musiela}\ and\ \citenamefont
  {Zariphopoulou}(2004{\natexlab{a}})}]{Musiela2004}%
  \BibitemOpen
  \bibfield  {author} {\bibinfo {author} {\bibfnamefont {M.}~\bibnamefont
  {Musiela}}\ and\ \bibinfo {author} {\bibfnamefont {T.}~\bibnamefont
  {Zariphopoulou}},\ }\bibfield  {title} {\bibinfo {title} {A valuation
  algorithm for indifference prices in incomplete markets},\ }\href
  {https://doi.org/10.1007/s00780-003-0117-0} {\bibfield  {journal} {\bibinfo
  {journal} {Finance and Stochastics}\ }\textbf {\bibinfo {volume} {8}},\
  \bibinfo {pages} {399} (\bibinfo {year} {2004}{\natexlab{a}})}\BibitemShut
  {NoStop}%
\bibitem [{\citenamefont {F{\"o}llmer}\ and\ \citenamefont
  {Schweizer}(1991)}]{Follmer1991}%
  \BibitemOpen
  \bibfield  {author} {\bibinfo {author} {\bibfnamefont {H.}~\bibnamefont
  {F{\"o}llmer}}\ and\ \bibinfo {author} {\bibfnamefont {M.}~\bibnamefont
  {Schweizer}},\ }\bibfield  {title} {\bibinfo {title} {Hedging of contingent
  claims under incomplete information},\ }in\ \href
  {https://people.math.ethz.ch/~mschweiz/Files/incinfo.pdf} {\emph {\bibinfo
  {booktitle} {Applied Stochastic Analysis}}},\ \bibinfo {series} {Stochastics
  Monographs}, Vol.~\bibinfo {volume} {5},\ \bibinfo {editor} {edited by\
  \bibinfo {editor} {\bibfnamefont {M.~H.~A.}\ \bibnamefont {Davis}}\ and\
  \bibinfo {editor} {\bibfnamefont {R.~J.}\ \bibnamefont {Elliott}}}\ (\bibinfo
   {publisher} {Gordon and Breach},\ \bibinfo {address} {London},\ \bibinfo
  {year} {1991})\ pp.\ \bibinfo {pages} {389--414}\BibitemShut {NoStop}%
\bibitem [{\citenamefont {Duffie}\ and\ \citenamefont
  {Richardson}(1991)}]{Duffie1991}%
  \BibitemOpen
  \bibfield  {author} {\bibinfo {author} {\bibfnamefont {D.}~\bibnamefont
  {Duffie}}\ and\ \bibinfo {author} {\bibfnamefont {H.~R.}\ \bibnamefont
  {Richardson}},\ }\bibfield  {title} {\bibinfo {title} {Mean-variance hedging
  in continuous time},\ }\href {https://doi.org/10.1214/aoap/1177005978}
  {\bibfield  {journal} {\bibinfo  {journal} {Annals of Applied Probability}\
  }\textbf {\bibinfo {volume} {1}},\ \bibinfo {pages} {1} (\bibinfo {year}
  {1991})}\BibitemShut {NoStop}%
\bibitem [{\citenamefont {Schweizer}(1995)}]{Schweizer1995}%
  \BibitemOpen
  \bibfield  {author} {\bibinfo {author} {\bibfnamefont {M.}~\bibnamefont
  {Schweizer}},\ }\bibfield  {title} {\bibinfo {title} {Variance-optimal
  hedging in discrete time},\ }\href {https://doi.org/10.1287/moor.20.1.1}
  {\bibfield  {journal} {\bibinfo  {journal} {Mathematics of Operations
  Research}\ }\textbf {\bibinfo {volume} {20}},\ \bibinfo {pages} {1} (\bibinfo
  {year} {1995})}\BibitemShut {NoStop}%
\bibitem [{\citenamefont {Imhof}(1961)}]{Imhof1961}%
  \BibitemOpen
  \bibfield  {author} {\bibinfo {author} {\bibfnamefont {J.~P.}\ \bibnamefont
  {Imhof}},\ }\bibfield  {title} {\bibinfo {title} {Computing the distribution
  of quadratic forms in normal variables},\ }\href
  {https://doi.org/10.1093/biomet/48.3-4.419} {\bibfield  {journal} {\bibinfo
  {journal} {Biometrika}\ }\textbf {\bibinfo {volume} {48}},\ \bibinfo {pages}
  {419} (\bibinfo {year} {1961})}\BibitemShut {NoStop}%
\bibitem [{\citenamefont {Das}(2025)}]{Das2025}%
  \BibitemOpen
  \bibfield  {author} {\bibinfo {author} {\bibfnamefont {A.}~\bibnamefont
  {Das}},\ }\bibfield  {title} {\bibinfo {title} {New methods to compute the
  generalized chi-square distribution},\ }\href
  {https://doi.org/10.1080/00949655.2025.2501401} {\bibfield  {journal}
  {\bibinfo  {journal} {Journal of Statistical Computation and Simulation}\
  }\textbf {\bibinfo {volume} {95}},\ \bibinfo {pages} {2608} (\bibinfo {year}
  {2025})}\BibitemShut {NoStop}%
\bibitem [{\citenamefont {Anderson}(2003)}]{Anderson2003}%
  \BibitemOpen
  \bibfield  {author} {\bibinfo {author} {\bibfnamefont {T.~W.}\ \bibnamefont
  {Anderson}},\ }\href
  {https://www.wiley.com/en-us/an-introduction-to-multivariate-statistical-analysis-3rd-edition-p-9780471360919}
  {\emph {\bibinfo {title} {An Introduction to Multivariate Statistical
  Analysis}}},\ \bibinfo {edition} {3rd}\ ed.,\ Wiley Series in Probability and
  Statistics\ (\bibinfo  {publisher} {Wiley},\ \bibinfo {address} {Hoboken,
  NJ},\ \bibinfo {year} {2003})\BibitemShut {NoStop}%
\bibitem [{\citenamefont {Mandelbrot}\ and\ \citenamefont
  {Van~Ness}(1968)}]{Mandelbrot1968}%
  \BibitemOpen
  \bibfield  {author} {\bibinfo {author} {\bibfnamefont {B.~B.}\ \bibnamefont
  {Mandelbrot}}\ and\ \bibinfo {author} {\bibfnamefont {J.~W.}\ \bibnamefont
  {Van~Ness}},\ }\bibfield  {title} {\bibinfo {title} {Fractional {Brownian}
  motions, fractional noises and applications},\ }\href
  {https://doi.org/10.1137/1010093} {\bibfield  {journal} {\bibinfo  {journal}
  {SIAM Review}\ }\textbf {\bibinfo {volume} {10}},\ \bibinfo {pages} {422}
  (\bibinfo {year} {1968})}\BibitemShut {NoStop}%
\bibitem [{\citenamefont {Delbaen}\ and\ \citenamefont
  {Schachermayer}(1994)}]{Delbaen1994}%
  \BibitemOpen
  \bibfield  {author} {\bibinfo {author} {\bibfnamefont {F.}~\bibnamefont
  {Delbaen}}\ and\ \bibinfo {author} {\bibfnamefont {W.}~\bibnamefont
  {Schachermayer}},\ }\bibfield  {title} {\bibinfo {title} {A general version
  of the fundamental theorem of asset pricing},\ }\href
  {https://doi.org/10.1007/BF01450498} {\bibfield  {journal} {\bibinfo
  {journal} {Mathematische Annalen}\ }\textbf {\bibinfo {volume} {300}},\
  \bibinfo {pages} {463} (\bibinfo {year} {1994})}\BibitemShut {NoStop}%
\bibitem [{\citenamefont {Rogers}(1997)}]{Rogers1997}%
  \BibitemOpen
  \bibfield  {author} {\bibinfo {author} {\bibfnamefont {L.~C.~G.}\
  \bibnamefont {Rogers}},\ }\bibfield  {title} {\bibinfo {title} {Arbitrage
  with fractional {Brownian} motion},\ }\href
  {https://doi.org/10.1111/1467-9965.00025} {\bibfield  {journal} {\bibinfo
  {journal} {Mathematical Finance}\ }\textbf {\bibinfo {volume} {7}},\ \bibinfo
  {pages} {95} (\bibinfo {year} {1997})}\BibitemShut {NoStop}%
\bibitem [{\citenamefont {Cheridito}(2003)}]{Cheridito2003b}%
  \BibitemOpen
  \bibfield  {author} {\bibinfo {author} {\bibfnamefont {P.}~\bibnamefont
  {Cheridito}},\ }\bibfield  {title} {\bibinfo {title} {Arbitrage in fractional
  {Brownian} motion models},\ }\href {https://doi.org/10.1007/s007800300101}
  {\bibfield  {journal} {\bibinfo  {journal} {Finance and Stochastics}\
  }\textbf {\bibinfo {volume} {7}},\ \bibinfo {pages} {533} (\bibinfo {year}
  {2003})}\BibitemShut {NoStop}%
\bibitem [{\citenamefont {Guasoni}(2006)}]{Guasoni2006}%
  \BibitemOpen
  \bibfield  {author} {\bibinfo {author} {\bibfnamefont {P.}~\bibnamefont
  {Guasoni}},\ }\bibfield  {title} {\bibinfo {title} {No arbitrage under
  transaction costs, with fractional {Brownian} motion and beyond},\ }\href
  {https://doi.org/10.1111/j.1467-9965.2006.00283.x} {\bibfield  {journal}
  {\bibinfo  {journal} {Mathematical Finance}\ }\textbf {\bibinfo {volume}
  {16}},\ \bibinfo {pages} {569} (\bibinfo {year} {2006})}\BibitemShut
  {NoStop}%
\bibitem [{\citenamefont {Whalley}\ and\ \citenamefont
  {Wilmott}(1997)}]{Whalley1997}%
  \BibitemOpen
  \bibfield  {author} {\bibinfo {author} {\bibfnamefont {A.~E.}\ \bibnamefont
  {Whalley}}\ and\ \bibinfo {author} {\bibfnamefont {P.}~\bibnamefont
  {Wilmott}},\ }\bibfield  {title} {\bibinfo {title} {An asymptotic analysis of
  an optimal hedging model for option pricing with transaction costs},\ }\href
  {https://doi.org/10.1111/1467-9965.00034} {\bibfield  {journal} {\bibinfo
  {journal} {Mathematical Finance}\ }\textbf {\bibinfo {volume} {7}},\ \bibinfo
  {pages} {307} (\bibinfo {year} {1997})}\BibitemShut {NoStop}%
\bibitem [{\citenamefont {Black}\ and\ \citenamefont
  {Scholes}(1973)}]{Black1973}%
  \BibitemOpen
  \bibfield  {author} {\bibinfo {author} {\bibfnamefont {F.}~\bibnamefont
  {Black}}\ and\ \bibinfo {author} {\bibfnamefont {M.}~\bibnamefont
  {Scholes}},\ }\bibfield  {title} {\bibinfo {title} {The pricing of options
  and corporate liabilities},\ }\href {https://doi.org/10.1086/260062}
  {\bibfield  {journal} {\bibinfo  {journal} {Journal of Political Economy}\
  }\textbf {\bibinfo {volume} {81}},\ \bibinfo {pages} {637} (\bibinfo {year}
  {1973})}\BibitemShut {NoStop}%
\bibitem [{\citenamefont {Merton}(1973)}]{Merton1973}%
  \BibitemOpen
  \bibfield  {author} {\bibinfo {author} {\bibfnamefont {R.~C.}\ \bibnamefont
  {Merton}},\ }\bibfield  {title} {\bibinfo {title} {Theory of rational option
  pricing},\ }\href {https://doi.org/10.2307/3003143} {\bibfield  {journal}
  {\bibinfo  {journal} {Bell Journal of Economics and Management Science}\
  }\textbf {\bibinfo {volume} {4}},\ \bibinfo {pages} {141} (\bibinfo {year}
  {1973})}\BibitemShut {NoStop}%
\bibitem [{\citenamefont {Buchen}\ and\ \citenamefont
  {Kelly}(1996)}]{Buchen1996}%
  \BibitemOpen
  \bibfield  {author} {\bibinfo {author} {\bibfnamefont {P.~W.}\ \bibnamefont
  {Buchen}}\ and\ \bibinfo {author} {\bibfnamefont {M.}~\bibnamefont {Kelly}},\
  }\bibfield  {title} {\bibinfo {title} {The maximum entropy distribution of an
  asset inferred from option prices},\ }\href {https://doi.org/10.2307/2331391}
  {\bibfield  {journal} {\bibinfo  {journal} {Journal of Financial and
  Quantitative Analysis}\ }\textbf {\bibinfo {volume} {31}},\ \bibinfo {pages}
  {143} (\bibinfo {year} {1996})}\BibitemShut {NoStop}%
\bibitem [{\citenamefont {Avellaneda}\ \emph {et~al.}(1997)\citenamefont
  {Avellaneda}, \citenamefont {Friedman}, \citenamefont {Holmes},\ and\
  \citenamefont {Samperi}}]{Avellaneda1997}%
  \BibitemOpen
  \bibfield  {author} {\bibinfo {author} {\bibfnamefont {M.}~\bibnamefont
  {Avellaneda}}, \bibinfo {author} {\bibfnamefont {C.}~\bibnamefont
  {Friedman}}, \bibinfo {author} {\bibfnamefont {R.}~\bibnamefont {Holmes}},\
  and\ \bibinfo {author} {\bibfnamefont {D.}~\bibnamefont {Samperi}},\
  }\bibfield  {title} {\bibinfo {title} {Calibrating volatility surfaces via
  relative-entropy minimization},\ }\href
  {https://doi.org/10.1080/135048697334827} {\bibfield  {journal} {\bibinfo
  {journal} {Applied Mathematical Finance}\ }\textbf {\bibinfo {volume} {4}},\
  \bibinfo {pages} {37} (\bibinfo {year} {1997})}\BibitemShut {NoStop}%
\bibitem [{\citenamefont {Avellaneda}(1998)}]{Avellaneda1998}%
  \BibitemOpen
  \bibfield  {author} {\bibinfo {author} {\bibfnamefont {M.}~\bibnamefont
  {Avellaneda}},\ }\bibfield  {title} {\bibinfo {title}
  {Minimum-relative-entropy calibration of asset-pricing models},\ }\href
  {https://doi.org/10.1142/S0219024998000242} {\bibfield  {journal} {\bibinfo
  {journal} {International Journal of Theoretical and Applied Finance}\
  }\textbf {\bibinfo {volume} {1}},\ \bibinfo {pages} {447} (\bibinfo {year}
  {1998})}\BibitemShut {NoStop}%
\bibitem [{\citenamefont {Grandits}\ and\ \citenamefont
  {Rheinl{\"a}nder}(2002)}]{Grandits2002}%
  \BibitemOpen
  \bibfield  {author} {\bibinfo {author} {\bibfnamefont {P.}~\bibnamefont
  {Grandits}}\ and\ \bibinfo {author} {\bibfnamefont {T.}~\bibnamefont
  {Rheinl{\"a}nder}},\ }\bibfield  {title} {\bibinfo {title} {On the minimal
  entropy martingale measure},\ }\href {https://doi.org/10.1214/aop/1029867119}
  {\bibfield  {journal} {\bibinfo  {journal} {Annals of Probability}\ }\textbf
  {\bibinfo {volume} {30}},\ \bibinfo {pages} {1003} (\bibinfo {year}
  {2002})}\BibitemShut {NoStop}%
\bibitem [{\citenamefont {Arrieta}(2022)}]{Arrieta2022}%
  \BibitemOpen
  \bibfield  {author} {\bibinfo {author} {\bibfnamefont {D.}~\bibnamefont
  {Arrieta}},\ }\bibfield  {title} {\bibinfo {title} {Model risk quantification
  based on relative entropy},\ }\href {https://doi.org/10.21314/JRMV.2022.020}
  {\bibfield  {journal} {\bibinfo  {journal} {Journal of Risk Model
  Validation}\ }\textbf {\bibinfo {volume} {16}},\ \bibinfo {pages} {113}
  (\bibinfo {year} {2022})}\BibitemShut {NoStop}%
\bibitem [{\citenamefont {Hansen}\ and\ \citenamefont
  {Sargent}(2008)}]{Hansen2008}%
  \BibitemOpen
  \bibfield  {author} {\bibinfo {author} {\bibfnamefont {L.~P.}\ \bibnamefont
  {Hansen}}\ and\ \bibinfo {author} {\bibfnamefont {T.~J.}\ \bibnamefont
  {Sargent}},\ }\href {https://doi.org/10.1515/9781400829385} {\emph {\bibinfo
  {title} {Robustness}}}\ (\bibinfo  {publisher} {Princeton University Press},\
  \bibinfo {address} {Princeton, NJ},\ \bibinfo {year} {2008})\BibitemShut
  {NoStop}%
\bibitem [{\citenamefont {Glasserman}\ and\ \citenamefont
  {Xu}(2014)}]{Glasserman2014}%
  \BibitemOpen
  \bibfield  {author} {\bibinfo {author} {\bibfnamefont {P.}~\bibnamefont
  {Glasserman}}\ and\ \bibinfo {author} {\bibfnamefont {X.}~\bibnamefont
  {Xu}},\ }\bibfield  {title} {\bibinfo {title} {Robust risk measurement and
  model risk},\ }\href {https://doi.org/10.1080/14697688.2013.822989}
  {\bibfield  {journal} {\bibinfo  {journal} {Quantitative Finance}\ }\textbf
  {\bibinfo {volume} {14}},\ \bibinfo {pages} {29} (\bibinfo {year}
  {2014})}\BibitemShut {NoStop}%
\bibitem [{\citenamefont {Cont}(2006)}]{Cont2006}%
  \BibitemOpen
  \bibfield  {author} {\bibinfo {author} {\bibfnamefont {R.}~\bibnamefont
  {Cont}},\ }\bibfield  {title} {\bibinfo {title} {Model uncertainty and its
  impact on the pricing of derivative instruments},\ }\href
  {https://doi.org/10.1111/j.1467-9965.2006.00281.x} {\bibfield  {journal}
  {\bibinfo  {journal} {Mathematical Finance}\ }\textbf {\bibinfo {volume}
  {16}},\ \bibinfo {pages} {519} (\bibinfo {year} {2006})}\BibitemShut
  {NoStop}%
\bibitem [{\citenamefont {Becherer}(2003)}]{Becherer2003}%
  \BibitemOpen
  \bibfield  {author} {\bibinfo {author} {\bibfnamefont {D.}~\bibnamefont
  {Becherer}},\ }\bibfield  {title} {\bibinfo {title} {Rational hedging and
  valuation of integrated risks under constant absolute risk aversion},\ }\href
  {https://doi.org/10.1016/S0167-6687(03)00140-9} {\bibfield  {journal}
  {\bibinfo  {journal} {Insurance: Mathematics and Economics}\ }\textbf
  {\bibinfo {volume} {33}},\ \bibinfo {pages} {1} (\bibinfo {year}
  {2003})}\BibitemShut {NoStop}%
\bibitem [{\citenamefont {Musiela}\ and\ \citenamefont
  {Zariphopoulou}(2004{\natexlab{b}})}]{Musiela2004b}%
  \BibitemOpen
  \bibfield  {author} {\bibinfo {author} {\bibfnamefont {M.}~\bibnamefont
  {Musiela}}\ and\ \bibinfo {author} {\bibfnamefont {T.}~\bibnamefont
  {Zariphopoulou}},\ }\bibfield  {title} {\bibinfo {title} {An example of
  indifference prices under exponential preferences},\ }\href
  {https://doi.org/10.1007/s00780-003-0112-5} {\bibfield  {journal} {\bibinfo
  {journal} {Finance and Stochastics}\ }\textbf {\bibinfo {volume} {8}},\
  \bibinfo {pages} {229} (\bibinfo {year} {2004}{\natexlab{b}})}\BibitemShut
  {NoStop}%
\bibitem [{\citenamefont {Mania}\ and\ \citenamefont
  {Schweizer}(2005)}]{Mania2005}%
  \BibitemOpen
  \bibfield  {author} {\bibinfo {author} {\bibfnamefont {M.}~\bibnamefont
  {Mania}}\ and\ \bibinfo {author} {\bibfnamefont {M.}~\bibnamefont
  {Schweizer}},\ }\bibfield  {title} {\bibinfo {title} {Dynamic exponential
  utility indifference valuation},\ }\href
  {https://doi.org/10.1214/105051605000000395} {\bibfield  {journal} {\bibinfo
  {journal} {Annals of Applied Probability}\ }\textbf {\bibinfo {volume}
  {15}},\ \bibinfo {pages} {2113} (\bibinfo {year} {2005})}\BibitemShut
  {NoStop}%
\bibitem [{\citenamefont {Henderson}(2002)}]{Henderson2002}%
  \BibitemOpen
  \bibfield  {author} {\bibinfo {author} {\bibfnamefont {V.}~\bibnamefont
  {Henderson}},\ }\bibfield  {title} {\bibinfo {title} {Valuation of claims on
  nontraded assets using utility maximization},\ }\href
  {https://doi.org/10.1111/j.1467-9965.2002.tb00129.x} {\bibfield  {journal}
  {\bibinfo  {journal} {Mathematical Finance}\ }\textbf {\bibinfo {volume}
  {12}},\ \bibinfo {pages} {351} (\bibinfo {year} {2002})}\BibitemShut
  {NoStop}%
\bibitem [{\citenamefont {Davis}(2006)}]{Davis2006}%
  \BibitemOpen
  \bibfield  {author} {\bibinfo {author} {\bibfnamefont {M.~H.~A.}\
  \bibnamefont {Davis}},\ }\bibfield  {title} {\bibinfo {title} {Optimal
  hedging with basis risk},\ }in\ \href
  {https://doi.org/10.1007/978-3-540-30788-4_8} {\emph {\bibinfo {booktitle}
  {From Stochastic Calculus to Mathematical Finance: The {Shiryaev}
  Festschrift}}},\ \bibinfo {editor} {edited by\ \bibinfo {editor}
  {\bibfnamefont {Y.}~\bibnamefont {Kabanov}}, \bibinfo {editor} {\bibfnamefont
  {R.}~\bibnamefont {Liptser}},\ and\ \bibinfo {editor} {\bibfnamefont
  {J.}~\bibnamefont {Stoyanov}}}\ (\bibinfo  {publisher} {Springer},\ \bibinfo
  {address} {Berlin, Heidelberg},\ \bibinfo {year} {2006})\ pp.\ \bibinfo
  {pages} {169--187}\BibitemShut {NoStop}%
\bibitem [{\citenamefont {Carmona}(2009)}]{Carmona2009}%
  \BibitemOpen
  \bibinfo {editor} {\bibfnamefont {R.}~\bibnamefont {Carmona}},\ ed.,\ \href
  {https://doi.org/10.1515/9781400833115} {\emph {\bibinfo {title}
  {Indifference Pricing: Theory and Applications}}},\ Princeton Series in
  Financial Engineering\ (\bibinfo  {publisher} {Princeton University Press},\
  \bibinfo {address} {Princeton, NJ},\ \bibinfo {year} {2009})\BibitemShut
  {NoStop}%
\bibitem [{\citenamefont {Artzner}\ \emph {et~al.}(1999)\citenamefont
  {Artzner}, \citenamefont {Delbaen}, \citenamefont {Eber},\ and\ \citenamefont
  {Heath}}]{Artzner1999}%
  \BibitemOpen
  \bibfield  {author} {\bibinfo {author} {\bibfnamefont {P.}~\bibnamefont
  {Artzner}}, \bibinfo {author} {\bibfnamefont {F.}~\bibnamefont {Delbaen}},
  \bibinfo {author} {\bibfnamefont {J.-M.}\ \bibnamefont {Eber}},\ and\
  \bibinfo {author} {\bibfnamefont {D.}~\bibnamefont {Heath}},\ }\bibfield
  {title} {\bibinfo {title} {Coherent measures of risk},\ }\href
  {https://doi.org/10.1111/1467-9965.00068} {\bibfield  {journal} {\bibinfo
  {journal} {Mathematical Finance}\ }\textbf {\bibinfo {volume} {9}},\ \bibinfo
  {pages} {203} (\bibinfo {year} {1999})}\BibitemShut {NoStop}%
\bibitem [{\citenamefont {Detlefsen}\ and\ \citenamefont
  {Scandolo}(2005)}]{Detlefsen2005}%
  \BibitemOpen
  \bibfield  {author} {\bibinfo {author} {\bibfnamefont {K.}~\bibnamefont
  {Detlefsen}}\ and\ \bibinfo {author} {\bibfnamefont {G.}~\bibnamefont
  {Scandolo}},\ }\bibfield  {title} {\bibinfo {title} {Conditional and dynamic
  convex risk measures},\ }\href {https://doi.org/10.1007/s00780-005-0159-6}
  {\bibfield  {journal} {\bibinfo  {journal} {Finance and Stochastics}\
  }\textbf {\bibinfo {volume} {9}},\ \bibinfo {pages} {539} (\bibinfo {year}
  {2005})}\BibitemShut {NoStop}%
\bibitem [{\citenamefont {F{\"o}llmer}\ and\ \citenamefont
  {Knispel}(2011)}]{Follmer2011}%
  \BibitemOpen
  \bibfield  {author} {\bibinfo {author} {\bibfnamefont {H.}~\bibnamefont
  {F{\"o}llmer}}\ and\ \bibinfo {author} {\bibfnamefont {T.}~\bibnamefont
  {Knispel}},\ }\bibfield  {title} {\bibinfo {title} {Entropic risk measures:
  Coherence vs. convexity, model ambiguity and robust large deviations},\
  }\href {https://doi.org/10.1142/S0219493711003334} {\bibfield  {journal}
  {\bibinfo  {journal} {Stochastics and Dynamics}\ }\textbf {\bibinfo {volume}
  {11}},\ \bibinfo {pages} {333} (\bibinfo {year} {2011})}\BibitemShut
  {NoStop}%
\bibitem [{\citenamefont {Kupper}\ and\ \citenamefont
  {Schachermayer}(2009)}]{Kupper2009}%
  \BibitemOpen
  \bibfield  {author} {\bibinfo {author} {\bibfnamefont {M.}~\bibnamefont
  {Kupper}}\ and\ \bibinfo {author} {\bibfnamefont {W.}~\bibnamefont
  {Schachermayer}},\ }\bibfield  {title} {\bibinfo {title} {Representation
  results for law invariant time consistent functions},\ }\href
  {https://doi.org/10.1007/s11579-009-0019-9} {\bibfield  {journal} {\bibinfo
  {journal} {Mathematics and Financial Economics}\ }\textbf {\bibinfo {volume}
  {2}},\ \bibinfo {pages} {189} (\bibinfo {year} {2009})}\BibitemShut {NoStop}%
\bibitem [{\citenamefont {Xu}(2006)}]{Xu2006}%
  \BibitemOpen
  \bibfield  {author} {\bibinfo {author} {\bibfnamefont {M.}~\bibnamefont
  {Xu}},\ }\bibfield  {title} {\bibinfo {title} {Risk measure pricing and
  hedging in incomplete markets},\ }\href
  {https://doi.org/10.1007/s10436-005-0023-x} {\bibfield  {journal} {\bibinfo
  {journal} {Annals of Finance}\ }\textbf {\bibinfo {volume} {2}},\ \bibinfo
  {pages} {51} (\bibinfo {year} {2006})}\BibitemShut {NoStop}%
\bibitem [{\citenamefont {Kl{\"o}ppel}\ and\ \citenamefont
  {Schweizer}(2007)}]{Kloppel2007}%
  \BibitemOpen
  \bibfield  {author} {\bibinfo {author} {\bibfnamefont {S.}~\bibnamefont
  {Kl{\"o}ppel}}\ and\ \bibinfo {author} {\bibfnamefont {M.}~\bibnamefont
  {Schweizer}},\ }\bibfield  {title} {\bibinfo {title} {Dynamic indifference
  valuation via convex risk measures},\ }\href
  {https://doi.org/10.1111/j.1467-9965.2007.00317.x} {\bibfield  {journal}
  {\bibinfo  {journal} {Mathematical Finance}\ }\textbf {\bibinfo {volume}
  {17}},\ \bibinfo {pages} {599} (\bibinfo {year} {2007})}\BibitemShut
  {NoStop}%
\bibitem [{\citenamefont {{\.I}lhan}\ \emph {et~al.}(2009)\citenamefont
  {{\.I}lhan}, \citenamefont {Jonsson},\ and\ \citenamefont
  {Sircar}}]{Ilhan2009}%
  \BibitemOpen
  \bibfield  {author} {\bibinfo {author} {\bibfnamefont {A.}~\bibnamefont
  {{\.I}lhan}}, \bibinfo {author} {\bibfnamefont {M.}~\bibnamefont {Jonsson}},\
  and\ \bibinfo {author} {\bibfnamefont {R.}~\bibnamefont {Sircar}},\
  }\bibfield  {title} {\bibinfo {title} {Optimal static-dynamic hedges for
  exotic options under convex risk measures},\ }\href
  {https://doi.org/10.1016/j.spa.2009.06.009} {\bibfield  {journal} {\bibinfo
  {journal} {Stochastic Processes and their Applications}\ }\textbf {\bibinfo
  {volume} {119}},\ \bibinfo {pages} {3608} (\bibinfo {year}
  {2009})}\BibitemShut {NoStop}%
\bibitem [{\citenamefont {Kallsen}\ and\ \citenamefont
  {Rheinl{\"a}nder}(2011)}]{Kallsen2011}%
  \BibitemOpen
  \bibfield  {author} {\bibinfo {author} {\bibfnamefont {J.}~\bibnamefont
  {Kallsen}}\ and\ \bibinfo {author} {\bibfnamefont {T.}~\bibnamefont
  {Rheinl{\"a}nder}},\ }\bibfield  {title} {\bibinfo {title} {Asymptotic
  utility-based pricing and hedging for exponential utility},\ }\href
  {https://doi.org/10.1524/stnd.2011.1027} {\bibfield  {journal} {\bibinfo
  {journal} {Statistics \& Decisions}\ }\textbf {\bibinfo {volume} {28}},\
  \bibinfo {pages} {17} (\bibinfo {year} {2011})}\BibitemShut {NoStop}%
\bibitem [{\citenamefont {Cheng}\ and\ \citenamefont
  {Scaillet}(2007)}]{Cheng2007}%
  \BibitemOpen
  \bibfield  {author} {\bibinfo {author} {\bibfnamefont {P.}~\bibnamefont
  {Cheng}}\ and\ \bibinfo {author} {\bibfnamefont {O.}~\bibnamefont
  {Scaillet}},\ }\bibfield  {title} {\bibinfo {title} {Linear-quadratic
  jump-diffusion modeling},\ }\href
  {https://doi.org/10.1111/j.1467-9965.2007.00316.x} {\bibfield  {journal}
  {\bibinfo  {journal} {Mathematical Finance}\ }\textbf {\bibinfo {volume}
  {17}},\ \bibinfo {pages} {575} (\bibinfo {year} {2007})}\BibitemShut
  {NoStop}%
\bibitem [{\citenamefont {Girsanov}(1960)}]{Girsanov1960}%
  \BibitemOpen
  \bibfield  {author} {\bibinfo {author} {\bibfnamefont {I.~V.}\ \bibnamefont
  {Girsanov}},\ }\bibfield  {title} {\bibinfo {title} {On transforming a
  certain class of stochastic processes by absolutely continuous substitution
  of measures},\ }\href {https://doi.org/10.1137/1105027} {\bibfield  {journal}
  {\bibinfo  {journal} {Theory of Probability and Its Applications}\ }\textbf
  {\bibinfo {volume} {5}},\ \bibinfo {pages} {285} (\bibinfo {year}
  {1960})}\BibitemShut {NoStop}%
\bibitem [{\citenamefont {Uhlenbeck}\ and\ \citenamefont
  {Ornstein}(1930)}]{Uhlenbeck1930}%
  \BibitemOpen
  \bibfield  {author} {\bibinfo {author} {\bibfnamefont {G.~E.}\ \bibnamefont
  {Uhlenbeck}}\ and\ \bibinfo {author} {\bibfnamefont {L.~S.}\ \bibnamefont
  {Ornstein}},\ }\bibfield  {title} {\bibinfo {title} {On the theory of the
  {Brownian} motion},\ }\href {https://doi.org/10.1103/PhysRev.36.823}
  {\bibfield  {journal} {\bibinfo  {journal} {Physical Review}\ }\textbf
  {\bibinfo {volume} {36}},\ \bibinfo {pages} {823} (\bibinfo {year}
  {1930})}\BibitemShut {NoStop}%
\bibitem [{\citenamefont {Vasicek}(1977)}]{Vasicek1977}%
  \BibitemOpen
  \bibfield  {author} {\bibinfo {author} {\bibfnamefont {O.}~\bibnamefont
  {Vasicek}},\ }\bibfield  {title} {\bibinfo {title} {An equilibrium
  characterization of the term structure},\ }\href
  {https://doi.org/10.1016/0304-405X(77)90016-2} {\bibfield  {journal}
  {\bibinfo  {journal} {Journal of Financial Economics}\ }\textbf {\bibinfo
  {volume} {5}},\ \bibinfo {pages} {177} (\bibinfo {year} {1977})}\BibitemShut
  {NoStop}%
\end{thebibliography}
\end{document}